\long\def\@makecaption#1#2{%
	\vskip\abovecaptionskip\footnotesize
	\sbox\@tempboxa{#1. #2}%
	\ifdim \wd\@tempboxa >\hsize
	#1. #2\par
	\else
	\global \@minipagefalse
	\hb@xt@\hsize{\hfil\box\@tempboxa\hfil}%
	\fi
	\vskip\belowcaptionskip}
\newcommand{\p}{\partial}
\newcommand{\sgn}{\mathop{\rm sgn}\nolimits}
\newcommand{\ord}{\mathop{\rm ord}\nolimits}
\newcommand{\const}{{\rm const}}
\newlength{\mylength}
\newcommand{\solution}{\hspace*{-\mylength}\bullet\quad}
\newtheorem{theorem}{Theorem}%[section]
\newtheorem{lemma}[theorem]{Lemma}
\newtheorem{corollary}[theorem]{Corollary}
\newtheorem{conjecture}[theorem]{Conjecture}
\newtheorem{proposition}[theorem]{Proposition}
{\theoremstyle{definition}
	\newtheorem{definition}[theorem]{Definition}
	\newtheorem{remark}[theorem]{Remark}
	
}
\newcommand{\todo}[1][\null]{\ensuremath{\clubsuit}}
\newcommand{\noprint}[1]{}
\newcommand{\lsemioplus}{\mathbin{\mbox{$\lefteqn{\hspace{.77ex}\rule{.4pt}{1.2ex}}{\in}$}}}
\begin{document}

\par\noindent {\LARGE\bf
Extended symmetry analysis of (1+2)-dimensional \\
fine Kolmogorov backward equation
\par}

\vspace{5.5mm}\par\noindent{\large
\large Serhii D. Koval$^{\dag\ddag}$ and Roman O. Popovych$^{\S\ddag}$
}

\vspace{5.5mm}\par\noindent{\it\small
$^\dag$Department of Mathematics and Statistics, Memorial University of Newfoundland,\\
$\phantom{^\ddag}$\,St.\ John's (NL) A1C 5S7, Canada
\par}

\vspace{2mm}\par\noindent{\it\small
$^\S$\,Mathematical Institute, Silesian University in Opava, Na Rybn\'\i{}\v{c}ku 1, 746 01 Opava, Czech Republic
\par}
\vspace{2mm}\par\noindent{\it\small	
$^\ddag$\,Institute of Mathematics of NAS of Ukraine, 3 Tereshchenkivska Str., 01024 Kyiv, Ukraine
\par}

\vspace{4mm}\par\noindent
E-mails:
skoval@mun.ca,
rop@imath.kiev.ua

\vspace{8mm}\par\noindent\hspace*{10mm}\parbox{140mm}{\small\looseness=-1
Within the class of (1+2)-dimensional ultraparabolic linear equations,
we distinguish a fine Kolmogorov backward equation with a quadratic diffusivity.
Modulo the point equivalence, it is a unique equation within the class
whose essential Lie invariance algebra is five-dimensional and nonsolvable.
Using the direct method, we compute the point symmetry pseudogroup of this equation and analyze its structure.
In particular, we single out its essential subgroup and classify its discrete elements.
We exhaustively classify all subalgebras of the corresponding essential Lie invariance algebra
up to inner automorphisms and up to the action of the essential point-symmetry group.
This allowed us to classify Lie reductions and Lie invariant solutions of the equation under consideration.
We also discuss the generation of its solutions using point and linear generalized symmetries
and carry out its peculiar generalized reductions.
As a result, we construct wide families of its solutions
parameterized by an arbitrary finite number of arbitrary solutions of the (1+1)-dimensional linear heat equation
or one or two arbitrary solutions of (1+1)-dimensional linear heat equations with inverse square potentials.
}\par\vspace{4mm}

\noprint{
Keywords:
(1+2)-dimensional ultraparabolic linear Kolmogorov backward equations;
point-symmetry group;
Lie reductions;
generalized symmetry;
exact solutions;
classification of subalgebras

MSC: 35B06, 35K57, 35C05, 35A30
35-XX   Partial differential equations
  35A30   Geometric theory, characteristics, transformations [See also 58J70, 58J72]
  35B06   Symmetries, invariants, etc.
 35Cxx  Representations of solutions
  35C05   Solutions in closed form
  35C06   Self-similar solutions
  35C07   Traveling wave solutions
  35C08   Soliton solutions
  35K57   Reaction-diffusion equations
}

\section{Introduction}\label{sec:Introduction}

Fokker--Planck and Kolmogorov equations are widely used to model various real-world phenomena
in the broad range of sciences,
including various branches of physics, biology and economics~\cite{gard1985A,risk1989A}.
For instance, Kolmogorov~\cite{kolm1934a} suggested to describe the Brownian motion
with ultraparabolic Fokker--Planck equations of the general form
\begin{gather*}
u_t+xu_y=(f(t,x,y)u)_{xx}+(g(t,x,y)u)_x,
\end{gather*}
where~$f$ and~$g$ are arbitrary smooth functions depending on $(t,x,y)$.
He also constructed (up to the correcting multiplier~$\frac14$~\cite{kova2023a})
the fundamental solutions of such equations with constant parameter functions~$f$ and~$g$.
The equations of the above form with $f=\const$ and $g=0$
appeared later for modeling processes of evolution of cell populations~\cite[Eq.~30]{rote83a}
with constant diffusivity.
The Fokker--Planck equations are also appreciated in modeling lasers and
other nonlinear systems far from thermal equilibrium~\cite[Chapter~12]{risk1989A}.
Allowing the diffusivity to be a power function, one derives a slight generalization,
\begin{gather}\label{eq:classFPbeta}
u_t+xu_y=|x|^\beta u_{xx},
\end{gather}
where $\beta$ is a real parameter.
At the same time, these equations constitute a family of interesting objects
for the study within the framework of group analysis of differential equations.
We turn the collection of equations~\eqref{eq:classFPbeta}
into a class $\mathcal F$ by declaring~$\beta$ to be its only arbitrary element,
i.e., the tuple of arbitrary elements of $\mathcal F$ is $\theta=(\beta)$,
where~$\beta$ runs through the solution set of the system of equations $\beta_t=\beta_x=\beta_y=0$.
See~\cite{bihl2012b,bihlo2016a,popo2010a,vane2020a} and references therein for the definition
and transformational properties of classes of differential equations.
Symmetry properties of the class~$\mathcal F$ were partially considered in~\cite{zhan2020a}.
%however, the results obtained there require correction and refinement.
Each equation of the form~\eqref{eq:classFPbeta} belongs to the general class~$\bar{\mathcal F}$
of $(1+2)$-dimensional ultraparabolic linear partial differential equations
\begin{gather*}
\begin{split}
&u_t+B(t,x,y)u_y=A^2(t,x,y)u_{xx}+A^1(t,x,y)u_x+A^0(t,x,y)u+C(t,x,y) \\
&\text{with}\quad A^2\neq 0,\quad B_x\neq 0.
\end{split}
\end{gather*}
The tuple of arbitrary elements of the class~$\bar{\mathcal F}$ is $\bar\theta:=(B,A^2,A^1,A^0,C)$,
where the components of $\bar\theta$ range through the solution set of the system of inequalities $A^2\ne0$ and $B_x\ne0$
with no restrictions on the other parameter-functions.
A partial preliminary group classification of the class $\bar{\mathcal F}$ was carried out in~\cite{davy2015a}.
Some subclasses of the class~$\bar{\mathcal F}$ and separate equations were considered within the Lie-symmetry framework
in \cite{gung2018b,kova2013a,kova2023a,spic1997a,spic1999a,spic2011a}.

\looseness=-1
The equations~$\mathcal F_0$ and~$\mathcal F_2$ of the form~\eqref{eq:classFPbeta} with $\beta=0$ and $\beta=2$
are distinguished by their symmetry properties
not only in the narrow class~$\mathcal F$ but also in the very wide class~$\bar{\mathcal F}$.
More specifically, the essential Lie invariance algebra of the equation~$\mathcal F_0$ is eight-dimensional,
which is the maximum of such dimensions among equations from the class~$\bar{\mathcal F}$.
Moreover, possessing the essential Lie invariance algebra of dimension eight
is an attribute of the equation~$\mathcal F_0$ that singles it out, up to the point equivalence,
within the class~$\bar{\mathcal F}$.
This is why the second equation in the class~$\mathcal F$
with eight-dimensional essential Lie invariance algebra, $\mathcal F_5$,
is necessarily equivalent to~$\mathcal F_0$ with respect to point transformations.
The essential Lie invariance algebra of the equation~$\mathcal F_2$ is five-dimensional and nonsolvable,
and modulo the point equivalence, it is the only equation from the class~$\bar{\mathcal F}$ with this property.
Within the class $\mathcal F$, there are only two equations with five-dimensional essential Lie invariance algebras,
$\mathcal F_2$ and~$\mathcal F_3$, and they are similar with respect to a point transformation.
Because of this, we refer to the equations~\eqref{eq:classFPbeta} with $\beta=0$ and $\beta=2$
as the {\it remarkable Fokker--Planck equation} and the {\it fine Kolmogorov backward equation}, respectively.
The equation~$\mathcal F_0$ is the canonical form of the ultraparabolic Fokker--Planck equations
with constant diffusion and drift coefficients~$f$ and~$g$,
whose study was initiated by Kolmogorov in~\cite{kolm1934a} as mentioned above.
Moreover, it is shown in~\cite{baru2001a} that
the Kolmogorov equations describing the price of the geometric and arithmetic average strike call option
are reduced by point transformations to Eqs.~(3.4) and~(4.5)~\cite{baru2001a},
which coincide with~$\mathcal F_0$ and~$\mathcal F_2$ up to alternating the sign of~$y$, respectively.

Due to its special form,
the remarkable Fokker--Planck equation can also be considered as a Kolmogorov backward equation.
Its extended symmetry analysis was carried out in~\cite{kova2023a},
where we computed its point-symmetry pseudogroup using the direct method and analyzed its structure,
which allowed us to list one- and two-dimensional subalgebras of its essential and maximal Lie invariance algebra
with a view toward the classification of the Lie reductions and finding Lie invariant solutions.
Moreover, we have carried out peculiar generalized reductions that allowed us to construct
three wide families of the solutions of the remarkable Fokker--Planck equation parameterized
by an arbitrary finite number of solutions of the linear (1+1)-dimensional heat equation.

The subject of the present paper is a similar analysis for the fine Kolmogorov backward equation,
\begin{gather}\label{eq:FineFP}
u_t+xu_y=x^2u_{xx}.
\end{gather}
The paper is organized as follows.
In Section~\ref{sec:FineFPMIA} we describe the structure and important properties
of the maximal and essential Lie invariance algebras~$\mathfrak g$ and~$\mathfrak g^{\rm ess}$
of the equation~\eqref{eq:FineFP}.
Using an advanced version of the direct method,
which involves the known explicit form of the (usual) equivalence pseudogroup of the class $\bar{\mathcal F}$
and its normalization in the usual sense,
in Section~\ref{sec:FineFPPointSymGroup} we compute the point-symmetry pseudogroup~$G$
of the equation~\eqref{eq:FineFP}.
We thoroughly analyze the structure of the pseudogroup~$G$ and modify its multiplication operation.
This allows us to single out its essential subgroup~$G^{\rm ess}$
and properly classify the discrete symmetries of~\eqref{eq:FineFP}.
Section~\ref{sec:Subalgebras} is devoted to the classifications of subalgebras of the algebra~$\mathfrak g^{\rm ess}$
up to the equivalences generated by the inner automorphisms of this algebra
and by the action %the adjoint representation
of the Lie group~$G^{\rm ess}$ on~$\mathfrak g^{\rm ess}$ as its Lie algebra.
The latter classification forms a basis for exhaustively classifying
the Lie reductions of~\eqref{eq:FineFP} in Section~\ref{sec:LieReductions}.
As a result, we construct wide families of Lie invariant solutions therein,
two families in terms of the general solution of the (1+1)-dimensional linear~heat equation
and one family in terms of the general solution of the (1+1)-dimensional linear heat equations
with inverse square potentials, where the potential coefficient is an additional parameter.
These solutions are used in Section~\ref{sec:SolutionGeneration}
as seed solutions when generating new solutions of the equation~\eqref{eq:FineFP} from known ones
by acting with the point symmetries of this equation and with its Lie-symmetry operators.
To the best of our knowledge, this is the first systematic consideration of the latter action in the literature.
The subject of Section~\ref{sec:GenReductions} is given by
peculiar generalized reductions and their relation to generating solutions by Lie-symmetry operators.
One of the obtained families of solutions is parameterized
by an arbitrary finite number of arbitrary solutions of the (1+1)-dimensional linear heat equation.
The results of the paper are summarized in Section~\ref{sec:Conclusion}.

For readers' convenience,
the constructed exact solutions of the equation~\eqref{eq:FineFP} are marked by the bullet symbol~$\bullet$\,.

\section{Lie invariance algebra}\label{sec:FineFPMIA}

The maximal Lie invariance algebra~$\mathfrak g$ of~\eqref{eq:FineFP} is spanned by the vector fields
\begin{gather*}
\mathcal P^y=\p_y,\quad
\mathcal D  =x\p_x+y\p_y,\quad
\mathcal K  = 2xy\p_x+y^2\p_y-xu\p_u,
\\
\mathcal P^t=\p_t,\quad
\mathcal I  =u\p_u,\quad
\mathcal Z(f)=f(t,x,y)\p_x,
\end{gather*}
where the parameter-function $f$ of $(t,x,y)$ ranges through the solution set of~\eqref{eq:FineFP}.
We compute~$\mathfrak g$ as well as other invariance algebras in the present paper
using the packages {\sf DESOLV} \cite{carm2000a} and {\sf Jets} \cite{BaranMarvan} for {\sf Maple};
the latter package is based on algorithms developed in~\cite{marv2009a}.

The set $\mathfrak g^{\rm lin}:=\{\mathcal Z(f)\}$ is an infinite-dimensional abelian ideal
of the algebra~$\mathfrak g$ associated with the linear superposition of solutions.
The subalgebra $\mathfrak g^{\rm ess}$ is supplemental to the ideal $\mathfrak g^{\rm lin}$ in $\mathfrak g$,
\begin{gather}\label{eq:FineFPEssA}
\mathfrak g^{\rm ess}=\langle\mathcal P^y,\mathcal D,\mathcal K,\mathcal P^t,\mathcal I\rangle,
\end{gather}
therefore $\mathfrak g$ is a semidirect product $\mathfrak g^{\rm ess}\lsemioplus\mathfrak g^{\rm lin}$.

Up to the skew-symmetry of the Lie bracket,
the nontrivial commutation relations are
\begin{gather*}
[\mathcal P^y,\mathcal D]=\mathcal P^y,\quad
[\mathcal P^y,\mathcal K]=2\mathcal D,\quad
[\mathcal D,\mathcal K]  =\mathcal K.
\end{gather*}

The algebra $\mathfrak g^{\rm ess}$ is nonsolvable
since it contains a subalgebra $\mathfrak f=\langle\mathcal P^y,\mathcal D,\mathcal K\rangle$ that is isomorphic to the algebra ${\rm sl}(2,\mathbb R)$.
A complementary subalgebra to $\mathfrak f$ in $\mathfrak g^{\rm ess}$ is $\langle\mathcal P^t,\mathcal I\rangle$,
which is the center~$\mathfrak z$ of the algebra~$\mathfrak g^{\rm ess}$.
Hence $\mathfrak g^{\rm ess}=\mathfrak f\oplus\mathfrak z$.
Therefore, the entire algebra $\mathfrak g^{\rm ess}$ is isomorphic to the algebra ${\rm sl}(2,\mathbb R)\oplus2A_1$,
where $2A_1$ is an abelian two-dimensional Lie algebra, see~\cite{popo2003a} for notation.
Moreover, since ${\rm sl}_2(\mathbb R)\oplus A_1\simeq{\rm gl}(2,\mathbb R)$,
we have the decomposition $\mathfrak g^{\rm ess}=\mathfrak h\oplus\mathfrak p$,
where $\mathfrak h:=\langle\mathcal P^y,\mathcal D,\mathcal K,\mathcal I\rangle\simeq{\rm gl}(2,\mathbb R)$
and $\mathfrak p:=\langle\mathcal P^t\rangle\simeq A_1$.

The Iwasawa decomposition for the algebra ${\rm sl}(2,\mathbb R)$ gives another basis for $\mathfrak g^{\rm ess}$,
which is more convenient from many prospectives for analysis than the chosen one.
This basis is given by the tuple $(\mathcal Q^+,\mathcal D,\mathcal P^y,\mathcal P^t,\mathcal I)$,
where $\mathcal Q^\pm:=\mathcal P^y\pm\mathcal K$.

\section{Point symmetry pseudogroup}\label{sec:FineFPPointSymGroup}

The equation~\eqref{eq:FineFP} corresponds to the value $(x,x^2,0,0,0)=:\bar\theta_2$
of the arbitrary-element tuple~$\bar\theta=(B,A^2,A^1,A^0,C)$ of the class~$\bar{\mathcal F}$.
To find the point symmetry pseudogroup~$G$ of~\eqref{eq:FineFP},
we use the advanced version of the direct method based on
the known transformational properties of the class~$\bar{\mathcal F}$ that were presented in~\cite{kova2023a}.
(A similar approach was first applied in~\cite{bihlo2011b}
involving some results from~\cite{popo2012a}.
For the standard version of the direct method, see~\cite{king1998a}.)
In particular, the class~$\bar{\mathcal F}$ is normalized in the usual sense, i.e.,
its equivalence groupoid coincides with the action groupoid of
its usual equivalence pseudogroup~$G^\sim_{\bar{\mathcal F}}$,
which consists of the point transformations with the components~\cite[Theorem~1]{kova2023a}
\begin{subequations}\label{eq:EquivalenceGroupFPsuperClass}
\begin{gather}
\tilde t=T(t,y),
\quad
\tilde x=X(t,x,y),
\quad
\tilde y=Y(t,y),
\quad
\tilde u=U^1(t,x,y)u + U^0(t,x,y),
\label{eq:ClassFbarTransformationPart}\\[.5ex]
\tilde A^0=\frac{-U^1}{T_t+BT_y}E\frac1{U^1},
%\tilde A^0=\frac{A^0}{T_t+BT_y}-\frac{A^1}{T_t+BT_y}\frac{U^1_x}{U^1}+\frac{A^2}{T_t+BT_y}\bigg(\left(\frac{U^1_x}{U^1}\right)^2-
%\left(\frac{U^1_x}{U^1}\right)_x\bigg)+\frac1{U^1}\frac{U^1_t+BU^1_y}{T_t+BT_y},
\quad
\tilde A^1=A^1\frac{X_x}{T_t+BT_y}-\frac{X_t+BX_y}{T_t+BT_y}+A^2\frac{X_{xx}-2X_xU^1_x/U^1}{T_t+BT_y},
\label{eq:A^0A^1Transformation}\\[.5ex]
\tilde A^2=A^2\frac{X_x^2}{T_t+BT_y},
\quad
\tilde B=\frac{Y_t+BY_y}{T_t+BT_y},
\quad
\tilde C=\frac{U^1}{T_t+BT_y}\left(C-E\frac{U^0}{U^1}\right).
\label{eq:A^2BCTransformation}
\end{gather}
\end{subequations}
Here $T$, $X$, $Y$, $U^1$ and $U^0$ are arbitrary smooth functions of their arguments
with the nondegeneracy condition $(T_tY_y-T_yY_t)X_xU^1\neq0$, and $E:=\p_t+B\p_y-A^2\p_{xx}-A^1\p_x-A^0$.
The normalization of~$\bar{\mathcal F}$ implies that the latter groupoid necessarily contains
the vertex group~$\mathcal G^\sim_{\bar{\mathcal F}}(\bar\theta_2,\bar\theta_2)$ of~\eqref{eq:FineFP},
which is the set of all admissible transformations with $\bar\theta_2$ as both their target and source,
$\mathcal G^\sim_{\bar{\mathcal F}}(\bar\theta_2,\bar\theta_2)=\{(\bar\theta_2,\Phi,\theta_2)\mid \Phi\in G\}$.
These arguments allow us to use the description of~$G^\sim_{\bar{\mathcal F}}$ for computing~$G$.

\begin{theorem}\label{thm:FineFPSymGroup}
The point symmetry pseudogroup~$G$ of the fine Kolmogorov backward equation~\eqref{eq:FineFP}
consists of the point transformations of the form
\begin{gather}\label{eq:FineFPSymGroup}
\begin{split}
&\tilde t=t+\lambda,
\quad
\tilde x=\frac{\alpha\delta-\beta\gamma}{(\gamma y+\delta)^2}x,
\quad
\tilde y=\frac{\alpha y+\beta}{\gamma y+\delta},
\\[1ex]
&\tilde u=\sigma\exp\left(
-\frac{\gamma}{\gamma y+\delta}x
\right)
\big(u+f(t,x,y)\big),
\end{split}
\end{gather}
where
$\alpha$, $\beta$, $\gamma$ and $\delta$ are arbitrary constants with $\alpha\delta-\beta\gamma=\pm1$
that are defined up to simultaneously alternating their signs;
$\lambda$ and $\sigma$ are arbitrary constants with $\sigma\ne0$,
and $f$ is an arbitrary solution of the equation~\eqref{eq:FineFP}.	
\end{theorem}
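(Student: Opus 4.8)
The plan is to reduce the computation of~$G$ to an elimination inside the explicitly known equivalence pseudogroup $G^\sim_{\bar{\mathcal F}}$, exploiting that the class $\bar{\mathcal F}$ is normalized in the usual sense. Since the equivalence groupoid of $\bar{\mathcal F}$ coincides with the action groupoid of $G^\sim_{\bar{\mathcal F}}$, the vertex group $\mathcal G^\sim_{\bar{\mathcal F}}(\bar\theta_2,\bar\theta_2)$, and hence the pseudogroup~$G$, consists precisely of the $(t,x,y,u)$-components of those transformations~\eqref{eq:EquivalenceGroupFPsuperClass} whose source and target arbitrary-element tuples both coincide with $\bar\theta_2=(x,x^2,0,0,0)$. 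Accordingly I would substitute $B=x$, $A^2=x^2$, $A^1=A^0=C=0$ together with the requirements $\tilde B=\tilde x$, $\tilde A^2=\tilde x^2$ and $\tilde A^1=\tilde A^0=\tilde C=0$ (recall $\tilde x=X$) into~\eqref{eq:A^0A^1Transformation}--\eqref{eq:A^2BCTransformation} and solve the resulting overdetermined system for the functions $T(t,y)$, $X(t,x,y)$, $Y(t,y)$, $U^1(t,x,y)$ and $U^0(t,x,y)$.

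The system splits into blocks that I would integrate in turn. The conditions from $\tilde B=\tilde x$ and $\tilde A^2=\tilde x^2$ read $Y_t+xY_y=X(T_t+xT_y)$ and $x^2X_x^2=X^2(T_t+xT_y)$; eliminating~$X$ and using $X_x=(Y_yT_t-Y_tT_y)(T_t+xT_y)^{-2}$ collapses them to the polynomial identity $x^2(Y_yT_t-Y_tT_y)^2=(Y_t+xY_y)^2(T_t+xT_y)^3$ in~$x$. Its left-hand side has $x$-degree exactly two (by the nondegeneracy $T_tY_y-T_yY_t\ne0$), whereas its right-hand side has $x$-degree at least three as soon as $T_y\ne0$, so comparing coefficients forces successively $T_y=0$, $T_t=1$ and $Y_t=0$; hence $T=t+\lambda$ with a constant~$\lambda$, $Y=Y(y)$ and $X=xY_y$.

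Feeding this into $\tilde A^1=0$ gives $U^1_x/U^1=-Y_{yy}/(2Y_y)$, hence $U^1=V(t,y)\exp\bigl(-\tfrac{Y_{yy}}{2Y_y}x\bigr)$ with a nonvanishing factor~$V$. The condition $\tilde A^0=0$ amounts to $E(1/U^1)=0$ with $E=\p_t+x\p_y-x^2\p_{xx}$; collecting powers of~$x$ here yields $V_t=V_y=0$, so $V$ is a nonzero constant~$\sigma$, together with the ordinary differential equation $h'=h^2$ for $h:=Y_{yy}/(2Y_y)$. Integrating the latter (the constant solution $h\equiv0$ included) and passing back through $h=Y_{yy}/(2Y_y)$ identifies $Y$ as an arbitrary nonconstant M\"obius function $Y=(\alpha y+\beta)/(\gamma y+\delta)$; normalizing its coefficient matrix to determinant~$\pm1$ (which leaves the overall sign free) then gives $X=(\alpha\delta-\beta\gamma)x/(\gamma y+\delta)^2$ and the exponential factor of~\eqref{eq:FineFPSymGroup}. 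Finally $\tilde C=0$ reduces to $E(U^0/U^1)=0$, i.e.\ $f:=U^0/U^1$ is an arbitrary solution of~\eqref{eq:FineFP}, and $\tilde u=U^1u+U^0=U^1(u+f)$ completes the form~\eqref{eq:FineFPSymGroup}. The reverse inclusion is immediate, since any map of the form~\eqref{eq:FineFPSymGroup} is by construction an element of $G^\sim_{\bar{\mathcal F}}$ fixing~$\bar\theta_2$, hence a point symmetry of~\eqref{eq:FineFP}.

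The real work lies in two spots. First, the coefficient comparison in the $\tilde B/\tilde A^2$ block must be exhaustive, with the cases where some of $T_y$, $Y_t$, $Y_y$ vanish all accounted for: normalization of $\bar{\mathcal F}$ ensures every point symmetry of~\eqref{eq:FineFP} comes from~\eqref{eq:EquivalenceGroupFPsuperClass}, but one still has to run through every sub-case within that family. Second, one must confirm that the complete solution set of $h'=h^2$---including its singular and constant solutions---corresponds to exactly the M\"obius functions and nothing larger, and keep track of the residual sign ambiguity when normalizing $\alpha\delta-\beta\gamma$ to~$\pm1$. The remaining steps---substituting~$\bar\theta_2$, reading $U^1$ off $\tilde A^1=0$ and $U^0$ off $\tilde C=0$, and the reverse inclusion---are routine.
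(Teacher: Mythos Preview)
Your proposal is correct and follows essentially the same route as the paper: both exploit the normalization of~$\bar{\mathcal F}$ to restrict to the system~\eqref{eq:EquivalenceGroupFPsuperClass} with source and target~$\bar\theta_2$, reduce the $\tilde B/\tilde A^2$ block to the same polynomial identity in~$x$, and then read off~$U^1$ and~$U^0$ from the $\tilde A^1$, $\tilde A^0$ and~$\tilde C$ conditions. The only cosmetic differences are that the paper handles the polynomial identity by collecting the coefficients of~$x^5$, $x^0$, $x^3$ and~$x^2$ rather than by your degree-comparison argument, and it phrases the $x^2$-condition from $\tilde A^0=0$ directly as the vanishing of the Schwarzian of~$Y$ rather than as your Riccati equation $h'=h^2$ for $h=Y_{yy}/(2Y_y)$; the content is identical.
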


\begin{proof}
We integrate the system~\eqref{eq:EquivalenceGroupFPsuperClass},
where both source and target arbitrary-element tuples are equal to~$\bar\theta_2$,
with respect to the parameter-functions $T$, $X$, $Y$ and $U$.
The first two equations from~\eqref{eq:A^2BCTransformation} take the form
\[
X=\dfrac{Y_t +xY_y}{T_t +xT_y},\quad
X^2(T_t+xT_y)=(xX_x)^2.
\]
In view of the first of these equations, the second one reduces to the equation
\begin{gather*}%\label{eq:ConstrainedSimplifiedB}
(Y_t+xY_y)^2(T_t+xT_y)^3-x^2(Y_yT_t-Y_tT_y)^2=0,
\end{gather*}
whose left-hand side is a polynomial of~$x$ with coefficients depending on~$t$ and~$y$.
Collecting the coefficients of~$x^5$ and~$x^0$, we obtain $T_y^3Y_y^2=T_t^3Y_t^2=0$,
i.e., in view of the nondegeneracy condition, either $T_t=Y_y=0$ or $T_y=Y_t=0$.
In the first case, collecting of the coefficients of~$x^3$
results in the equation $T_yY_t=0$,
which contradicts the nondegeneracy condition.
Hence the only possible case is $T_y=Y_t=0$,
where collecting the coefficients of~$x^2$ leads to the equation $T_t^2Y_y^2(T_t-1)=0$.
In view of the inequality $T_tY_y\ne0$, we derive the equation $T_t=1$,
which integrates to $T=t+\lambda$, where $\lambda$ is an arbitrary constant.

Moreover, $T_y=Y_t=0$ together with the first equation from~\eqref{eq:A^2BCTransformation}
give us that $X=xY_y$.
Therefore, $X_t=X_{xx}=0$.
Then the second equation in~\eqref{eq:A^0A^1Transformation} simplifies to $U^1_x/U^1=-Y_{yy}/(2Y_y)$,
and thus it integrates to
\[
U^1:=V(t,y)\exp\left(-\frac{Y_{yy}}{2Y_y}x\right),
\]
where $V$ is a nonvanishing smooth function of~$(t,y)$, the explicit expression for which will be derived below.
Substituting the obtained expression for~$U^1$ into the first equation in~\eqref{eq:A^0A^1Transformation}
results in the equation
\[%\label{eq:PSGU1Condition}
x^2\left(\frac{U^1_x}{U^1}\right)^2+\frac{U^1_t+xU^1_y}{U^1}=0,
\]
whose left-hand side is a quadratic polynomial in~$x$.
Collecting the coefficients of~$x^2$, we get the equation $Y_{yyy}Y_y-\frac32Y_{yy}^2=0$,
meaning that the Schwarzian derivative of $Y$ is zero.
Therefore, $Y$ is a linear fractional function of $y$, $Y=(\alpha y+\beta)/(\gamma y+\delta)$.
Since the constants $\alpha$, $\beta$, $\gamma$ and $\delta$ are defined up to a nonzero constant multiplier,
we can assume that $\alpha\delta-\beta\gamma=\pm1$.
Collecting the coefficients of $x^1$ and $x^0$, we obtain the equations $V_y=V_t=0$.
Hence $V=:\sigma\in\mathbb R\setminus\{0\}$.

Finally, the third equation in~\eqref{eq:A^2BCTransformation} takes the form
\[
\left(\dfrac{U^0}{U^1}\right)_t+x\left(\dfrac{U^0}{U^1}\right)_y=x^2\left(\dfrac{U^0}{U^1}\right)_{xx},
\]
and thus $U^0=U^1f$, where $f=f(t,x,y)$ is an arbitrary solution of~\eqref{eq:FineFP}.

Clearly, the set of transformations of the form~\eqref{eq:FineFPSymGroup} is closed under the composition,
therefore~$G$ is a pseudogroup.
Moreover, it is a Lie pseudogroup.
\end{proof}

The natural maximal domain of a point transformation of the form~\eqref{eq:FineFPSymGroup}
coincides with the relative complement of~$M_{\gamma\delta}:=\{(t,x,y,u)\in\mathbb R^4\mid\gamma y+\delta=0\}$
with respect to the set ${\mathop{\rm dom}f\times\mathbb R_u}$.
Since $(\gamma,\delta)\ne(0,0)$, the set~$M_{\gamma\delta}$ is
either hyperplane in $\mathbb R^4_{t,x,y,u}$ defined by $t=-\delta/\gamma$ if $\gamma\ne0$
or the empty set otherwise.
To properly interpret the algebraic structure on the set of such transformations,
we consider them on natural maximal domains together with all their restrictions
as partial functions defined on the entire space $\mathbb R^4_{t,x,y,u}$
rather than functions defined on subsets of this space.
Considering the composition of these partial functions as the operation on~$G$,
we endow~$G$ with the \textit{pseudogroup} structure in the sense of Definition~\ref{def:Pseudogroup},
which is a particular case of the more general algebraic structure called \textit{inverse monoid},
see Section~\ref{sec:Pseudogroups}.

Since the $y$-component of any element of~$G$ is a linear fractional function in $y$,
we follow the papers \cite{kova2023a,kova2023b} and take
the \emph{modified transformation composition} $\Phi_1\circ^{\rm m}\Phi_2$ of $\Phi_1,\Phi_2\in G$ as the operation in $G$.
We first consider transformations~$\Phi_1$ and~$\Phi_2$ of the form~\eqref{eq:FineFPSymGroup}
having maximal domains for this form.
The domain of their composition $\Phi_1\circ\Phi_2=:\tilde\Phi$
is standardly defined as the preimage of the domain of $\Phi_1$ under $\Phi_2$,
\[
\mathop{\rm dom}\tilde\Phi=\Phi_2^{-1}(\mathop{\rm dom}\Phi_1)
=(\mathop{\rm dom}\tilde f\times\mathbb R_u)\setminus (M_{\gamma_2\delta_2}\cup M_{\tilde\gamma\tilde\delta}),
\]
where
$\tilde\gamma=\gamma_1\alpha_2+\delta_1\gamma_2$,
$\tilde\delta=\gamma_1\beta _2+\delta_1\delta_2$,
$\mathop{\rm dom}\tilde f=\big((\pi_*\Phi_2)^{-1}\mathop{\rm dom}f^1\big)\cap\mathop{\rm dom}f^2$
with the natural projection \smash{$\pi\colon\mathbb R^4_{t,x,y,u}\twoheadrightarrow\mathbb R^3_{t,x,y}$},
and the parameters with indices~1 and~2 and tildes correspond~$\Phi_1$, $\Phi_2$ and~$\tilde\Phi$, respectively.
Modifying the composition $\Phi_1\circ\Phi_2$ to $\Phi_1\circ^{\rm m}\Phi_2$,
we continuously extend $\Phi_1\circ\Phi_2$ to the set
\[\smash{\mathop{\rm dom}\nolimits^{\rm m}\tilde\Phi:=(\mathop{\rm dom}\tilde f\times\mathbb R_u)\setminus M_{\tilde\gamma\tilde\delta}},\]
i.e.,
the modified composition $\Phi_1\circ^{\rm m}\Phi_2$ is the transformation of the form~\eqref{eq:FineFPSymGroup}
with the same parameters as in $\Phi_1\circ\Phi_2$ and with the natural domain,
\smash{$\mathop{\rm dom}(\Phi_1\circ^{\rm m}\Phi_2)=\mathop{\rm dom}^{\rm m}\tilde\Phi$}.
In other words, we redefine $\Phi_1\circ\Phi_2$ on the set
\smash{$(\mathop{\rm dom}\tilde f\times\mathbb R_u)\cap M_{\gamma_2\delta_2}$} if $\gamma_1\gamma_2\ne0$;
otherwise $\mathop{\rm dom}\nolimits^{\rm m}\tilde\Phi=\mathop{\rm dom}\tilde\Phi$
and the extension is trivial.

If transformations $\Phi_1$ and $\Phi_2$ from $G$ are defined on open subsets of maximal domains
of the corresponding forms~\eqref{eq:FineFPSymGroup},
then we also redefine the domain of their composition in a natural way.
More specifically, the set $\mathop{\rm dom}(\Phi_1\circ\Phi_2)=\Phi_2^{-1}(\mathop{\rm dom}\Phi_1)$ is contained in
$(\mathop{\rm dom}\tilde f\times\mathbb R_u)\setminus (M_{\gamma_2\delta_2}\cup M_{\tilde\gamma\tilde\delta})$
and if $\gamma_1\gamma_2\ne0$
as the modified composition $\Phi_1\circ^{\rm m}\Phi_2$
we take the continuous extension of $\Phi_1\circ\Phi_2$ to the~set
\[
{\rm int}\Big(\Phi_2^{-1}(\mathop{\rm dom}\Phi_1)\cup
\big(\mathop{\rm cl}\big(\Phi_2^{-1}(\mathop{\rm dom}\Phi_1)\big)\cap M_{\gamma_2\delta_2}\cap (\mathop{\rm dom}\tilde f\times\mathbb R_u\big)\big)
\Big).
\]
In other words, we complete the set $\mathop{\rm dom}(\Phi_1\circ\Phi_2)$ by its limit points from~$M_{\gamma_2\delta_2}$
that belong to the interior of the completed set.
If $\gamma_1\gamma_2=0$ we simply set  $\Phi_1\circ^{\rm m}\Phi_2=\Phi_1\circ\Phi_2$,
thus $\mathop{\rm dom}\Phi_1\circ^{\rm m}\Phi_2=\Phi_2^{-1}(\mathop{\rm dom}\Phi_1)$.
By this, it is clear that $\Phi_1\circ^m\Phi_2\in G$ for arbitrary $\Phi_1,\Phi_2\in G$,
thus the set~$G$ with the modified composition $\circ^{\rm m}$ is indeed a pseudogroup,
and, moreover, a Lie pseudogroup.

This allows us to appropriately describe the structure of the pseudogroup~$G$.
The set $G^{\rm lin}$ constituted by the point transformations of the form
\[
\mathscr Z(f)\colon\quad
\tilde t=t,\quad
\tilde x=x,\quad
\tilde y=y,\quad
\tilde u=u+f(t,x,y),
\]
where the function~$f$ of $(t,x,y)$ ranges through the solution set of~\eqref{eq:FineFP},
is a normal pseudosubgroup of~$G$;
moreover, the pseudogroup~$G$ splits over $G^{\rm lin}$, $G=G^{\rm ess}\ltimes G^{\rm lin}$,
see Section~\ref{sec:Pseudogroups} for all the necessary definitions.

Here $G^{\rm ess}$ is the subgroup of $G$ constituted by the transformations of the form~\eqref{eq:FineFPSymGroup}
with $f=0$ and the natural domain of definition,
and thus it is a five-dimensional Lie group.
We call $G^{\rm ess}$ the {\it essential point symmetry group} of the equation~\eqref{eq:FineFP}.

The pushforward $\pi_*G$ of the pseudogroup~$G$
under the natural projection ${\pi\colon\mathbb R^4_{t,x,y,u}\twoheadrightarrow\mathbb R_y}$
coincides with the pseudogroup of M\"obius transformations of the real line.
At the same time, M\"obius transformations with their natural domains
constitute a group with respect to the modified composition as the group operation~\cite{laws1998a}.
In this sense, the above consideration extends the approach described in~\cite{laws1998a}
to more general transformation groups like~$G^{\rm ess}$ and analogous groups in~\cite{kova2023a,kova2023b}.

The group $G^{\rm ess}$ itself is the direct product, $G^{\rm ess}=F\times Z$,
of its normal subgroups~$F$ and~$Z$
that are singled out by the constraints $\lambda=0$, $\sigma=1$ and $\alpha=\delta=1$, $\beta=\gamma=0$
and are isomorphic, as Lie groups, to the groups ${\rm PSL}^\pm(2,\mathbb R)$ and $(\mathbb R^2,+)\times\mathbb Z_2$, respectively.
Here $(\mathbb R^2,+)$ is the real two-dimensional connected torsion-free abelian Lie group.
Recall that ${\rm PSL}^\pm(2,\mathbb R)$ is by definition the quotient group
${\rm SL}^{\pm}(2,\mathbb R)/{\rm Z(SL^\pm(2,\mathbb R))}$,
where ${\rm Z}({\rm SL}^\pm(2,\mathbb R))=\{{\pm\rm diag}(1,1)\}$
is the center of the group ${\rm SL}^\pm(2,\mathbb R)$,
which coincides with the center of the group ${\rm SL}(2,\mathbb R)$.
The above isomorphisms are established by the correspondences
\begin{gather*}
\varrho_1=\pm(\alpha,\beta,\gamma,\delta)_{\alpha\delta-\beta\gamma=\pm1}\mapsto
\begin{pmatrix}
	\alpha & \beta \\
	\gamma  & \delta
\end{pmatrix}{\rm Z(SL(2,\mathbb R))},
\\
(\lambda,\sigma)_{\sigma>0}\mapsto(\lambda, \ln|\sigma|,\bar 0)
\quad\mbox{and}\quad
(\lambda,\sigma)_{\sigma<0}\mapsto(\lambda, \ln|\sigma|,\bar 1),
\end{gather*}
where $\bar0:=2\mathbb Z$ and $\bar1:=1+2\mathbb Z$ constitute the group $\mathbb Z_2:=\mathbb Z/2\mathbb Z$.
Summing up, we have that
\begin{gather*}
G^{\rm ess}\simeq{\rm PSL}^\pm(2,\mathbb R)\times(\mathbb R^2,+)\times\mathbb Z_2.
\end{gather*}
This decomposition corresponds to the Lie algebra decomposition
$\mathfrak g^{\rm ess}=\mathfrak f\oplus\mathfrak r\simeq{\rm sl}(2,\mathbb R)\oplus2A_1$.

There is another factorization of the group $G^{\rm ess}$.
It contains a normal subgroup $H$,
which is singled out by the constraints $\lambda=0$ and $\sigma>0$
and is isomorphic to the Lie group ${\rm GL}(2,\mathbb R)$.
The group-complement to~$H$ in the group~$G$ is also a normal subgroup~$P$ of~$G$.
It is singled out by the constraints
$\sigma=\pm1$, $\alpha=\delta=1$ and $\beta=\gamma=0$
and is isomorphic to $(\mathbb R,+)\times\mathbb Z_2$.
The isomorphisms between $H$ and ${\rm GL}(2,\mathbb R)$ and
between $P$ and $(\mathbb R,+)\times\mathbb Z_2$ are established by the correspondences
\begin{gather*}
\varrho_2=(\alpha,\beta,\gamma,\delta,\sigma)_{\alpha\delta-\beta\gamma=\pm1,\sigma>0}\mapsto
\sigma
\begin{pmatrix}
\alpha & \beta \\
\gamma  & \delta
\end{pmatrix},
\quad
(\lambda,\sigma)_{\sigma=\pm1}\mapsto
\begin{cases}
(\lambda,\bar 1)\text{ if }\sigma=-1,\\
(\lambda,\bar 0)\text{ if }\sigma= 1,
\end{cases}
\end{gather*}
In other words, the group $G^{\rm ess}$ admits the following decomposition:
\[G^{\rm ess}=H\times P\simeq{\rm GL}(2,\mathbb R)\times(\mathbb R,+)\times\mathbb Z_2.\]
The counterpart of this isomorphism in terms of Lie algebras
is the Lie algebra decomposition $\mathfrak g^{\rm ess}=\mathfrak h\oplus\mathfrak p$,
given in Section~\ref{sec:FineFPMIA}.

Note that the former and the latter decompositions of $G^{\rm ess}$ are consistent.
Recall that the group ${\rm GL}(2,\mathbb R)$ is isomorphic to ${\rm PGL}(2,\mathbb R)\times(\mathbb R,+)$,
where ${\rm PGL}(2,\mathbb R)$ is the rank-two real projective linear group
${\rm PGL}(2,\mathbb R):={\rm GL}(2,\mathbb R)/{\rm Z}({\rm GL}(2,\mathbb R))$.
Here the center ${\rm Z}({\rm GL}(2,\mathbb R))$ of the group ${\rm GL}(2,\mathbb R)$ coincides with the set of scalar matrices,
${\rm Z}({\rm GL}(2,\mathbb R))=\{\lambda\mathop{\rm diag}(1,1)\mid\lambda\in\mathbb R^\times\}$,
and as a Lie group it is isomorphic to the group $(\mathbb R,+)\times\mathbb Z_2$.
It is clear that ${\rm PGL}(2,\mathbb R)\simeq{\rm PSL}^\pm(2,\mathbb R)$,
therefore ${\rm PSL}^\pm(2,\mathbb R)\times(\mathbb R,+)\simeq{\rm GL}(2,\mathbb R)$.
\noprint{
The isomorphism ${\rm PGL}(2,\mathbb R)\times(\mathbb R,+)\simeq{\rm GL}(2,\mathbb R)$ is established by the correspondence
\begin{gather*}
\left(\begin{pmatrix}
a & b\\
c & d
\end{pmatrix}
{\rm Z(GL}(2,\mathbb R)),\sigma\right)
\mapsto
\frac{\rm e^\sigma}{\sqrt{|ad-bc|}}
\begin{pmatrix}
a & b\\
c & d
\end{pmatrix}
\end{gather*}
}

The one-parameter subgroups of the group~$G^{\rm ess}$
that are generated by the chosen basis elements of the algebra~$\mathfrak g^{\rm ess}$, see~\eqref{eq:FineFPEssA},
are of the form
\[\arraycolsep=0ex
\begin{array}{lllll}
\mathscr P^y(\epsilon)\colon  & \tilde t=t,\quad    & \tilde x=x,\quad                        & \tilde y=y+\epsilon,\quad             & \tilde u=u,\\[.8ex]
\mathscr D  (\epsilon)\colon  & \tilde t=t,\quad    & \tilde x={\rm e}^\epsilon x,\quad       & \tilde y={\rm e}^\epsilon y,\quad     & \tilde u=u,\\
\mathscr K(\epsilon)  \colon  & \tilde t=t,\quad
                              & \tilde x=\dfrac x{(1-\epsilon y)^2},\quad
                              & \tilde y=\dfrac y{1-\epsilon y},\quad
                              & \tilde u=u{\rm e}^{\frac{\epsilon x}{\epsilon y-1}},
\\[1ex]
\mathscr P^t(\epsilon)\colon  & \tilde t=t+\epsilon,\quad & \tilde x=x,\quad  & \tilde y=y,\quad  & \tilde u=u,\\[.8ex]
\mathscr I  (\epsilon)\colon  & \tilde t=t,\quad          & \tilde x=x,\quad  & \tilde y=y,\quad  & \tilde u={\rm e}^\epsilon u
\end{array}
\]
with constant~$\epsilon$ playing the role of group parameter.
The one-parameter subgroup generated by the vector field $\mathcal Q^+=\mathcal P^t+\mathcal K$,
which stems from the Iwasawa decomposition for the algebra~${\rm sl}(2,\mathbb R)$,
consists of the transformations
\begin{gather*}
\mathscr Q^+(\epsilon)\colon\
\tilde t=t,\ \
\tilde x=\frac x{(\cos\epsilon-y\sin\epsilon)^2},\ \
\tilde y=\frac{\sin\epsilon+y\cos\epsilon}{\cos\epsilon-y\sin\epsilon},\ \
\tilde u=u\exp\left(\frac{x\sin\epsilon}{y\sin\epsilon-\cos\epsilon}\right)
\end{gather*}
parameterized by $\epsilon\in\mathbb R$ as well.
This subgroup is distinguished among the others by its peculiar action on the Lie algebra $\mathfrak g^{\rm ess}$.

Corollary~\ref{cor:PSL+-} from Section~\ref{sec:StructureOfSLpm} implies that the subgroup $F$
is the semidirect product of its subgroups $F_{\rm id}$ and $F_{\rm d}$, $F=F_{\rm id}\rtimes F_{\rm d}$,
where the former is singled out in $F$ by the constraints $\alpha\delta-\beta\gamma=1$,
and the latter is generated by the transformation $(\tilde t,\tilde x,\tilde y,\tilde u)=(t,-x,-y,u)$.
It is also clear that the group $Z$ is the direct product of its subgroups $Z_{\rm id}$ and $Z_{\rm d}$,
where $Z_{\rm id}$ is singled out in $Z$ by $\sigma>0$,
and $Z_{\rm d}$ is generated by the transformation $(\tilde t,\tilde x,\tilde y,\tilde u)=(t,x,y,-u)$.
Moreover, due to the isomorphisms $F_{\rm id}\simeq{\rm PSL}(2,\mathbb R)$ and $Z_{\rm id}\simeq(\mathbb R^2,+)$
we have that the groups $F_{\rm id}$ and $Z_{\rm id}$ are connected.

\begin{corollary}
The center of the group $G^{\rm ess}$ coincides with the normal subgroup $Z$, ${\rm Z}(G^{\rm ess})=Z$.
\end{corollary}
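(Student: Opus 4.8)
The plan is to exploit the decomposition $G^{\rm ess}=F\times Z$ obtained above. Since this is a direct product and $Z\simeq(\mathbb R^2,+)\times\mathbb Z_2$ is abelian, each element of $Z$ commutes with every element of $F$ (factors of a direct product commute elementwise) and with every element of $Z$, so $Z\subseteq{\rm Z}(G^{\rm ess})$. This settles one inclusion. For the reverse inclusion, the standard identity ${\rm Z}(F\times Z)={\rm Z}(F)\times{\rm Z}(Z)={\rm Z}(F)\times Z$ reduces the claim to showing that the center of $F$ is trivial, ${\rm Z}(F)=\{{\rm id}\}$.

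To see this, I would use that the pushforward of $F$ under the projection $\pi$ onto the $y$-axis is the group of M\"obius transformations of the real line, and that $\pi|_F\colon F\to\pi_*F$ is a group isomorphism: by the explicit form of an element of $F$ (Theorem~\ref{thm:FineFPSymGroup} with $f=0$, $\lambda=0$, $\sigma=1$), its $x$- and $u$-components are completely determined by the linear-fractional $y$-component. Hence ${\rm Z}(F)\simeq{\rm Z}(\pi_*F)$, and it remains to check that no nonidentity M\"obius transformation of the real line commutes with all of them. A transformation commuting with $\pi_*\mathscr P^y(\epsilon)\colon y\mapsto y+\epsilon$ for every $\epsilon$ has $\epsilon$-independent, hence constant, derivative, so it is affine with unit slope, $y\mapsto y+b$; requiring it to commute also with the inversion $\pi_*\mathscr Q^+(\pi/2)\colon y\mapsto-1/y$ yields $b(y^2+by-1)\equiv0$, whence $b=0$. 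Therefore ${\rm Z}(\pi_*F)=\{{\rm id}\}$, so ${\rm Z}(F)=\{{\rm id}\}$ and ${\rm Z}(G^{\rm ess})={\rm Z}(F)\times Z=Z$.

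Equivalently, one can argue entirely inside $G^{\rm ess}$: a central element must centralize the normal subgroup $F_{\rm id}\simeq{\rm PSL}(2,\mathbb R)$ of Corollary~\ref{cor:PSL+-}, and, since $Z$ is already known to be central, this forces its $F$-component to lie in $C_F(F_{\rm id})$, which is trivial by the very same computation applied to the one-parameter subgroups $\mathscr P^y$ and $\mathscr K$ (or $\mathscr Q^+$). I do not expect a genuine obstacle here: the only point needing care is the remark that the $y$-component of an element of $F$ determines its $x$- and $u$-components, which is immediate from~\eqref{eq:FineFPSymGroup}, and the rest is a routine short verification appropriate to a corollary of the structural results just established.
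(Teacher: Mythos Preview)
Your proof is correct and shares the paper's high-level strategy: use the decomposition $G^{\rm ess}=F\times Z$ to get the inclusion $Z\subseteq{\rm Z}(G^{\rm ess})$ trivially, and then reduce the reverse inclusion to the triviality of ${\rm Z}(F)$. The difference lies in how that triviality is established. The paper argues structurally: it passes to the quotient $G^{\rm ess}/Z\simeq{\rm PSL}^\pm(2,\mathbb R)$, invokes that ${\rm Z}({\rm PSL}^\pm(2,\mathbb R))$ is trivial, and concludes via the general inclusion ${\rm Z}(G^{\rm ess})/Z\subseteq{\rm Z}(G^{\rm ess}/Z)$. You instead observe that the $y$-projection restricts to an isomorphism of~$F$ onto the M\"obius group of the real line and verify by hand that no nonidentity M\"obius map commutes with both translations and the inversion. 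Your route is more elementary and fully self-contained, requiring no prior knowledge of the center of ${\rm PSL}^\pm(2,\mathbb R)$; the paper's route is shorter once that fact is granted and highlights the abstract group-theoretic mechanism. Either way the content is the same, and your observation that the $y$-component of an element of~$F$ determines its $x$- and $u$-components is exactly what is needed to make the projection argument work.
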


\begin{proof}
Obviously $Z\subset{\rm Z}(G^{\rm ess})$.
Since $G^{\rm ess}/Z\simeq{\rm PSL}^\pm(2,\mathbb R)$,
then ${\rm Z}(G^{\rm ess}/Z)\simeq{\rm Z}({\rm PSL}^\pm(2,\mathbb R))=\{\pm{\rm diag}(1,1)\}$,
which is the identity element in the group ${\rm PSL}^\pm(2,\mathbb R)$.
This means that the group ${\rm Z}(G^{\rm ess}/Z)$ is trivial.
Observing that ${\rm Z}(G^{\rm ess})/Z\subset{\rm Z}(G^{\rm ess}/Z)$ we deduce ${\rm Z}(G^{\rm ess})\subset Z$.
\end{proof}

\begin{corollary}\label{prop:GessModGessId}
The quotient group of $G^{\rm ess}$ by its identity component $G^{\rm ess}_{\rm id}$ is isomorphic to the Klein four-group $\mathbb Z_2^2$.
\end{corollary}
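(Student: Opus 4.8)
The plan is to identify the identity component $G^{\rm ess}_{\rm id}$ from the decompositions of $G^{\rm ess}$ obtained above and then recognize the resulting finite quotient. Recall that $G^{\rm ess}=F\times Z$ with $F=F_{\rm id}\rtimes F_{\rm d}$ and $Z=Z_{\rm id}\times Z_{\rm d}$, where $F_{\rm id}\simeq{\rm PSL}(2,\mathbb R)$ and $Z_{\rm id}\simeq(\mathbb R^2,+)$ are connected, $F_{\rm d}\simeq Z_{\rm d}\simeq\mathbb Z_2$, and $F_{\rm d}$ and $Z_{\rm d}$ are generated by the involutions $g_F\colon(\tilde t,\tilde x,\tilde y,\tilde u)=(t,-x,-y,u)$ and $g_Z\colon(\tilde t,\tilde x,\tilde y,\tilde u)=(t,x,y,-u)$, respectively. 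Since the identity component of a finite direct product of Lie groups is the product of the identity components of the factors, and since $F_{\rm id}$ and $Z_{\rm id}$ are the identity components of $F$ and of $Z$ (each being a connected subgroup of finite index), the first step is the conclusion $G^{\rm ess}_{\rm id}=F_{\rm id}\times Z_{\rm id}$.

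With this in hand, the quotient is computed factorwise,
\[
G^{\rm ess}/G^{\rm ess}_{\rm id}=(F\times Z)/(F_{\rm id}\times Z_{\rm id})\simeq(F/F_{\rm id})\times(Z/Z_{\rm id})\simeq\mathbb Z_2\times\mathbb Z_2,
\]
so it is a group of order four. To certify that it is the Klein four-group $\mathbb Z_2^2$ and not the cyclic group $\mathbb Z_4$, I would take the cosets of $g_F$ and $g_Z$ as generators. Indeed $g_F\notin G^{\rm ess}_{\rm id}$ because it corresponds to parameters with $\alpha\delta-\beta\gamma=-1$, whereas $G^{\rm ess}_{\rm id}$ forces $\alpha\delta-\beta\gamma=1$; and $g_Z\notin G^{\rm ess}_{\rm id}$ because $g_Z$ has $\sigma=-1$, whereas $G^{\rm ess}_{\rm id}$ forces $\sigma>0$. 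The elements $g_F$ and $g_Z$ commute and are involutions lying in distinct nontrivial cosets, so every nonidentity element of $G^{\rm ess}/G^{\rm ess}_{\rm id}$ has order two; this excludes $\mathbb Z_4$ and leaves $\mathbb Z_2^2$. Equivalently, one may observe that $\Phi\mapsto\big(\sgn\sigma,\sgn(\alpha\delta-\beta\gamma)\big)$ is a surjective homomorphism $G^{\rm ess}\to\mathbb Z_2\times\mathbb Z_2$ whose kernel is exactly $G^{\rm ess}_{\rm id}$.

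I do not expect a genuine obstacle here, as everything reduces to the already-established structure of $G^{\rm ess}$; the one point worth a line of care is the claim that $F_{\rm id}\times Z_{\rm id}$ really exhausts the identity component. This holds because $F_{\rm id}\times Z_{\rm id}$ is connected (a product of connected groups) and has full dimension $5=\dim G^{\rm ess}$, hence is open in $G^{\rm ess}$ and therefore coincides with $G^{\rm ess}_{\rm id}$; alternatively, being connected it is contained in $G^{\rm ess}_{\rm id}$, and having index four it cannot be a proper subgroup of the connected group $G^{\rm ess}_{\rm id}$.
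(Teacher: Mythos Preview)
Your proof is correct and follows exactly the approach implicit in the paper: the corollary is stated there without proof, as an immediate consequence of the already-established decomposition $G^{\rm ess}=F\times Z$ with $F=F_{\rm id}\rtimes F_{\rm d}$, $Z=Z_{\rm id}\times Z_{\rm d}$ and the connectedness of $F_{\rm id}$ and $Z_{\rm id}$. You have simply made explicit the one-line computation $G^{\rm ess}/G^{\rm ess}_{\rm id}\simeq(F/F_{\rm id})\times(Z/Z_{\rm id})\simeq\mathbb Z_2\times\mathbb Z_2$ that the paper leaves to the reader.
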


\begin{proposition}\label{pro:FineFPDiscrSyms}
A complete list of discrete point symmetry transformations of the fine Kolmogorov backward equation~\eqref{eq:FineFP}
that are independent up to combining with each other and with continuous point symmetry transformations of this equation
is exhausted by the involution~$\mathscr I'$ alternating the sign of~$u$,
the involution~$\mathscr J'$ simultaneously alternating the signs of~$x$ and $y$,
\[\mathscr I':=(t,x,y,u)\mapsto(t,x,y,-u),
\quad
\mathscr J':=(t,x,y,u)\mapsto(t,-x,-y,u),\]
and their composition.
Thus, the group of discrete symmetries of~\eqref{eq:FineFP} is isomorphic to the Klein four-group~$\mathbb Z_2^2$.
\end{proposition}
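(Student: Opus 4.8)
The plan is to reduce the classification of discrete point symmetries of~\eqref{eq:FineFP} to a description of the component group of the essential point symmetry group~$G^{\rm ess}$ and then to name explicit representatives of its cosets. First I would discard the linear part: since $G=G^{\rm ess}\ltimes G^{\rm lin}$ and the normal pseudosubgroup $G^{\rm lin}=\{\mathscr Z(f)\}$ is parameterized by the solution set of the linear equation~\eqref{eq:FineFP}, which is a vector space, $G^{\rm lin}$ is connected --- for any solution~$f$ the path $s\mapsto\mathscr Z(sf)$, $s\in[0,1]$, joins $\mathscr Z(f)$ to the identity inside~$G^{\rm lin}$. Hence the identity component of~$G$ is $G_{\rm id}=G^{\rm ess}_{\rm id}\ltimes G^{\rm lin}$, and the projection of~$G$ onto its essential subgroup induces an isomorphism $G/G_{\rm id}\simeq G^{\rm ess}/G^{\rm ess}_{\rm id}$. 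Thus, modulo composition with a continuous point symmetry, every element of~$G$ may be replaced by a transformation of the form~\eqref{eq:FineFPSymGroup} with $f=0$, i.e., by an element of~$G^{\rm ess}$, and two such elements represent the same discrete symmetry exactly when they lie in the same connected component of~$G^{\rm ess}$. In particular, the group of discrete symmetries of~\eqref{eq:FineFP} is $G^{\rm ess}/G^{\rm ess}_{\rm id}$, which by Corollary~\ref{prop:GessModGessId} is isomorphic to~$\mathbb Z_2^2$; it remains only to choose generators.

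Next I would invoke the structure of~$G^{\rm ess}$. Using the decompositions $G^{\rm ess}=F\times Z$, $F=F_{\rm id}\rtimes F_{\rm d}$ with $F_{\rm id}\simeq{\rm PSL}(2,\mathbb R)$ connected (see Corollary~\ref{cor:PSL+-} and Section~\ref{sec:StructureOfSLpm}), and $Z=Z_{\rm id}\times Z_{\rm d}$ with $Z_{\rm id}\simeq(\mathbb R^2,+)$ connected, one gets $G^{\rm ess}_{\rm id}=F_{\rm id}\times Z_{\rm id}$ and
\[
G^{\rm ess}/G^{\rm ess}_{\rm id}\simeq(F/F_{\rm id})\times(Z/Z_{\rm id})\simeq\mathbb Z_2\times\mathbb Z_2 .
\]
By their very definitions, $F_{\rm d}$ is generated by $\mathscr J'\colon(t,x,y,u)\mapsto(t,-x,-y,u)$ --- the element of~\eqref{eq:FineFPSymGroup} with $\alpha=-1$, $\delta=1$, $\beta=\gamma=0$, $\sigma=1$, $\lambda=0$, so that $\alpha\delta-\beta\gamma=-1$ and hence $\mathscr J'\notin F_{\rm id}$ --- while $Z_{\rm d}$ is generated by $\mathscr I'\colon(t,x,y,u)\mapsto(t,x,y,-u)$, the element with $\sigma=-1$ and all other parameters trivial. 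The transformations $\mathscr I'$ and $\mathscr J'$ are commuting involutions (they act on disjoint groups of variables, and $\mathscr I'^2=\mathscr J'^2={\rm id}$ is immediate from~\eqref{eq:FineFPSymGroup}), their images in $G^{\rm ess}/G^{\rm ess}_{\rm id}$ generate the two independent $\mathbb Z_2$-factors, and $\{{\rm id},\mathscr J',\mathscr I',\mathscr J'\circ\mathscr I'\}$ is a transversal of $G^{\rm ess}_{\rm id}$ in~$G^{\rm ess}$. This yields precisely the asserted list of independent discrete symmetries and the isomorphism of the discrete symmetry group with the Klein four-group.

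I expect the only delicate points to be the connectedness statements used above --- that $G^{\rm lin}$ is connected and that $F_{\rm id}\simeq{\rm PSL}(2,\mathbb R)$ is connected, so that $F_{\rm d}$ (respectively $Z_{\rm d}$) is a genuine nonidentity component rather than merely a coset inside a larger identity component --- both of which, however, are already established among the structural results recalled above. Everything else is a routine check that the four listed transformations are closed under composition and are pairwise independent modulo~$G^{\rm ess}_{\rm id}$.
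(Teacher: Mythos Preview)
Your argument is correct and follows essentially the same route as the paper: discard $G^{\rm lin}$ as lying in the identity component, use the decomposition $G^{\rm ess}=F\times Z$ together with the connectedness of $F_{\rm id}$ and $Z_{\rm id}$ to identify $G^{\rm ess}/G^{\rm ess}_{\rm id}$ with $F_{\rm d}\times Z_{\rm d}$, and read off $\mathscr I'$ and $\mathscr J'$ as generators. The paper's proof is simply a terser version of yours, omitting the explicit path $s\mapsto\mathscr Z(sf)$ and the explicit parameter check for~$\mathscr J'$.
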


\begin{proof}
Clearly, the entire pseudosubgroup~$G^{\rm lin}$ is contained in the connected component of the identity transformation in~$G$.
Since both $F_{\rm id}$ and $Z_{\rm id}$ are connected, then without loss of generality we can assume that the discrete symmetry transformations
constitute the group $F_{\rm d}\times Z_{\rm d}$,
and as the generators of it the transformations $\mathcal I'$ and $\mathcal J'$ can be chosen.
\end{proof}

\section{Classification of subalgebras}\label{sec:Subalgebras}

Subalgebras of a Lie algebra~$\mathfrak g$ can be classified up to various equivalence relations.
However, from the algebraic perspective, the most interesting and useful is
the classification modulo the action of the inner automorphism group~${\rm Inn}(\mathfrak g)$ of~$\mathfrak g$.
Moreover, when $\mathfrak g$ is a real (resp.\ complex) finite-dimensional Lie algebra,
this action coincides with the adjoint action of the corresponding real (resp.\ complex) simply connected Lie group
on~$\mathfrak g$.
Another natural equivalence for the subalgebra classification
is generated by the action of the entire automorphism group~${\rm Aut}(\mathfrak g)$ of~$\mathfrak g$.
Note that it often happens that there is no big difference between the above two equivalences
or, in more precise algebraic terms,
${\rm Inn}(\mathfrak g)={\rm Aut}(\mathfrak g)$
or at least the index of~${\rm Inn}(\mathfrak g)$ as a subgroup of~${\rm Aut}(\mathfrak g)$ is finite and small.

In group analysis of differential equations,
the classification of subalgebras of maximal (resp.\ essential) Lie invariance algebra $\mathfrak g$
of a system of differential equations~$\mathcal L$ is required in the course of classifying its Lie reductions
for the purpose of constructing group-invariant solutions.
Since these solutions are naturally considered modulo their similarity,
i.e., the equivalence generated by the action of the entire (resp.\ essential) point symmetry group~$G$
on solution set of the system~$\mathcal L$,
the most relevant subalgebra classification here is that up to the $G$-equivalence of subalgebras of~$\mathfrak g$.
If the group $G$ is not connected, this classification may differ from the classification
modulo the ${\rm Inn}(\mathfrak g)$-equivalence.

For these reasons, we carry out the classification of subalgebras of the algebra $\mathfrak g^{\rm ess}$
up to the $G^{\rm ess}_{\rm id}$-equivalence,
which coincides with the ${\rm Inn}(\mathfrak g^{\rm ess})$-equivalence,
and then construct a complete list of $G^{\rm ess}$-inequivalent subalgebras of~$\mathfrak g^{\rm ess}$.
Since the group $G^{\rm ess}$ is generated by continuous symmetries~$\mathscr P^y(\epsilon)$,
$\mathscr D(\epsilon)$, $\mathscr K(\epsilon)$, $\mathscr P^t(\epsilon)$ and $\mathscr I(\epsilon)$,
where $\epsilon\in\mathbb R$,
together with the discrete symmetries~$\mathscr I'$ and $\mathscr J'$,
the action of $G^{\rm ess}$ on $\mathfrak g^{\rm ess}$ is completely defined
by the actions of these generators on the chosen basis elements of $\mathfrak g^{\rm ess}$.
The only nonidentity actions among them are the following:
\begin{gather*}\arraycolsep=0ex
\begin{array}{lll}
\mathscr P^y(\epsilon)_*\mathcal D  =\mathcal D-2\epsilon\mathcal P^y,&
\mathscr K(\epsilon)_*\mathcal D  =\mathcal D+2\epsilon\mathcal K,&
\mathscr D(\epsilon)_*\mathcal P^y=e^{2\epsilon}\mathcal P^y,
\\[.75ex]
\mathscr P^y(\epsilon)_*\mathcal K  =\mathcal K-\epsilon\mathcal D+\epsilon^2\mathcal P^y,\qquad&
\mathscr K(\epsilon)_*\mathcal P^y=\mathcal P^y+\epsilon\mathcal D+\epsilon^2\mathcal K,\qquad&
\mathscr D(\epsilon)_*\mathcal K  =e^{-2\epsilon}\mathcal K,
\end{array}
\\
\arraycolsep=0ex
\begin{array}{l}
\mathscr J'_*\mathcal P^y=-\mathcal P^y,\qquad
\mathscr J'_*\mathcal K  =-\mathcal K.
\end{array}
\end{gather*}
It is also useful to consider the action of the one-parameter subgroup $\{\mathscr Q^+(\epsilon)\mid\epsilon\in\mathbb R\}$
on $\mathfrak g^{\rm ess}$,
\begin{gather*}
\mathscr Q^+(\epsilon)_*\mathcal P^y=\mathcal K\sin^2\epsilon+\mathcal D\sin2\epsilon+\mathcal P^y\cos^2\epsilon,
\\
\mathscr Q^+(\epsilon)_*\mathcal D  =\tfrac12\mathcal K\sin2\epsilon+\mathcal D\cos2\epsilon-\tfrac12\mathcal P^y\sin2\epsilon,
\\
\mathscr Q^+(\epsilon)_*\mathcal K  =\mathcal K\cos^2\epsilon-\mathcal D\sin2\epsilon+\mathcal P^y\sin^2\epsilon.
\end{gather*}

\begin{lemma}\label{lem:IneqSubalgebrasUpToInn}
A complete list of $G^{\rm ess}_{\rm id}$-inequivalent proper subalgebras of the algebra $\mathfrak g^{\rm ess}$
is exhausted by the following subalgebras, where $\delta\in\{-1,0,1\}$, $\varepsilon=\pm1$, $\mu,\mu'\in\mathbb R$,
$\nu\geqslant0$ and $\nu'>0$:
\begin{gather*}
{\rm 1D}\colon\quad
\mathfrak s_{1.1}^\mu               =\langle\mathcal P^t+\mu\mathcal I\rangle,\quad
\mathfrak s_{1.2}                   =\langle\mathcal I\rangle,\quad
\mathfrak s_{1.3}^{\varepsilon,\mu} =\langle\mathcal P^y+\varepsilon\mathcal P^t+\mu\mathcal I\rangle,\quad
\mathfrak s_{1.4}^\delta            =\langle\mathcal P^y+\delta\mathcal I\rangle,
\\
\hphantom{{\rm 1D}\colon\quad}
\smash{\mathfrak s_{1.5}^{\nu',\mu}}=\langle\mathcal D+\nu'\mathcal P^t+\mu\mathcal I\rangle,\quad
\mathfrak s_{1.6}^{\nu}             =\langle\mathcal D+\nu\mathcal I\rangle,\quad
\smash{\mathfrak s_{1.7}^{\mu,\mu'}}=\langle\mathcal P^y+\mathcal K+\mu\mathcal P^t+\mu'\mathcal I\rangle,
\\[1ex]
{\rm 2D}\colon\quad
\mathfrak s_{2.1}               =\langle\mathcal P^t,\mathcal I\rangle,\quad
\mathfrak s_{2.2}^{\delta,\mu}  =\langle\mathcal P^y+\delta\mathcal I,\mathcal P^t+\mu\mathcal I\rangle,\quad
\mathfrak s_{2.3}^\delta        =\langle\mathcal P^y+\delta\mathcal P^t,\mathcal I\rangle,
\\
\hphantom{{\rm 2D}\colon\quad}
\mathfrak s_{2.4}^{\nu,\mu} =\langle\mathcal D+\nu\mathcal I,\mathcal P^t+\mu\mathcal I\rangle,\quad
\mathfrak s_{2.5}^{\nu}     =\langle\mathcal D+\nu\mathcal P^t,\mathcal I\rangle,
\\
\hphantom{\rm 2D\colon}\quad
\smash{\mathfrak s_{2.6}^{\mu,\mu'}}=\langle\mathcal P^y+\mathcal K+\mu\mathcal I,\mathcal P^t+\mu'\mathcal I\rangle,\quad
\mathfrak s_{2.7}^\mu       =\langle\mathcal P^y+\mathcal K+\mu\mathcal P^t,\mathcal I\rangle,
\\
\hphantom{\rm 2D\colon}\quad
\smash{\mathfrak s_{2.8}^{\nu',\mu}}=\langle\mathcal P^y,\mathcal D+\nu'\mathcal P^t+\mu\mathcal I\rangle,\quad
\mathfrak s_{2.9}^\nu =\langle\mathcal P^y,\mathcal D+\nu\mathcal I\rangle,\quad
\\[1ex]
{\rm 3D}\colon\quad
\mathfrak s_{3.1}  =\langle\mathcal P^y,\mathcal P^t,\mathcal I\rangle,\quad
\mathfrak s_{3.2}  =\langle\mathcal D,\mathcal P^t,\mathcal I\rangle,\quad
\mathfrak s_{3.3}  =\langle\mathcal P^y+\mathcal K,\mathcal P^t,\mathcal I\rangle,
\\
\hphantom{\rm 3D\colon}\quad
\mathfrak s_{3.4}^{\nu,\mu}=\langle\mathcal P^y,\mathcal D+\nu\mathcal I,\mathcal P^t+\mu\mathcal I\rangle,\quad
\mathfrak s_{3.5}^\nu=\langle\mathcal P^y,\mathcal D+\nu\mathcal P^t,\mathcal I\rangle,\quad
\mathfrak s_{3.6}    =\langle\mathcal P^y,\mathcal D,\mathcal K\rangle,
\\[1ex]
{\rm 4D}\colon\quad
\mathfrak s_{4.1}    =\langle\mathcal P^y,\mathcal D,\mathcal P^t,\mathcal I\rangle,\quad
\mathfrak s_{4.2}^\mu=\langle\mathcal P^y,\mathcal D,\mathcal K,\mathcal P^t+\mu\mathcal I\rangle,\quad
\mathfrak s_{4.3}    =\langle\mathcal P^y,\mathcal D,\mathcal K,\mathcal I\rangle.
\end{gather*}
\end{lemma}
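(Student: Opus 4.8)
The structure of $\mathfrak g^{\rm ess}=\mathrm{sl}(2,\mathbb R)\oplus\mathfrak z$ with $\mathfrak z=\langle\mathcal P^t,\mathcal I\rangle$ the center suggests organizing the classification by the behaviour of subalgebras with respect to the projection $\rho\colon\mathfrak g^{\rm ess}\twoheadrightarrow\mathfrak f\simeq\mathrm{sl}(2,\mathbb R)$ along $\mathfrak z$. First I would recall (or quote, since only $\mathrm{Inn}$-equivalence is needed here and $\mathrm{Inn}(\mathfrak g^{\rm ess})=\mathrm{Inn}(\mathfrak f)=\mathrm{PSL}(2,\mathbb R)$ acts trivially on $\mathfrak z$) the well-known classification of subalgebras of $\mathrm{sl}(2,\mathbb R)$ up to $\mathrm{PSL}(2,\mathbb R)$: namely $0$, and the one-dimensional subalgebras split into the three conjugacy classes represented by $\langle\mathcal D\rangle$ (split/hyperbolic), $\langle\mathcal P^y\rangle$ (nilpotent/parabolic) and $\langle\mathcal P^y+\mathcal K\rangle$ (compact/elliptic), the two-dimensional Borel $\langle\mathcal P^y,\mathcal D\rangle$, and all of $\mathfrak f$. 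This is exactly the data visible in the representatives $\mathfrak s_{1.4}/\mathfrak s_{1.6}/\mathfrak s_{1.7}$ (resp. $\mathfrak s_{2.9}$, $\mathfrak s_{3.6}$) and their $\mathfrak z$-extended companions, so the list's skeleton is dictated by this.

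Next I would stratify an arbitrary subalgebra $\mathfrak s$ by $d_0:=\dim(\mathfrak s\cap\mathfrak z)$ and $\mathfrak r:=\rho(\mathfrak s)\subseteq\mathfrak f$. For each admissible $\mathfrak r$ from the $\mathrm{sl}(2,\mathbb R)$-list I would normalize a basis of $\mathfrak r$ by $\mathrm{Inn}(\mathfrak f)$ to the chosen representative, then lift: $\mathfrak s$ is the graph of a linear map $\phi\colon\mathfrak r\to\mathfrak z/(\mathfrak s\cap\mathfrak z)$ over $\mathfrak r$, corrected by $\mathfrak s\cap\mathfrak z$, and the closure condition $[\mathfrak s,\mathfrak s]\subseteq\mathfrak s$ forces $\phi$ to kill $[\mathfrak r,\mathfrak r]$ (since $\mathfrak z$ is central), i.e. $\phi$ factors through $\mathfrak r/[\mathfrak r,\mathfrak r]$. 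The residual freedom to simplify $\phi$ comes from (a) the stabilizer in $\mathrm{Inn}(\mathfrak f)$ of the chosen $\mathfrak r$ acting on $\mathfrak r/[\mathfrak r,\mathfrak r]$, and (b) rescaling basis vectors. Working case by case ($\mathfrak r=0$: subalgebras of $\mathfrak z$, giving $\mathfrak s_{1.1},\mathfrak s_{1.2},\mathfrak s_{1.3},\mathfrak s_{2.1}$; $\mathfrak r$ one-dimensional of each of the three types, with $d_0\in\{0,1\}$; $\mathfrak r=\langle\mathcal P^y,\mathcal D\rangle$ with $d_0\in\{0,1\}$; $\mathfrak r=\mathfrak f$ with $d_0\in\{0,1\}$) produces precisely the families listed, and the stated ranges of $\delta,\varepsilon,\mu,\mu',\nu,\nu'$ record exactly the normalizations one can and cannot perform — e.g. $\nu'>0$ when the $\mathcal P^t$-coefficient survives and can be scaled to unit sign by $\mathscr D(\epsilon)$ together with the sign of $\nu'$ fixed by the residual automorphism, $\nu\ge0$ (not $>0$) when the $\mathcal D$-part can only be rescaled not sign-flipped within $\mathrm{Inn}$, and the discrete value $\delta\in\{-1,0,1\}$ in the parabolic case where $\mathscr D(\epsilon)$ rescales $\mathcal P^y$ but not $\mathcal I$.

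The routine but voluminous part is the bookkeeping in step two: for each $\mathfrak r$ one must compute the stabilizer of the normalized representative inside $\mathrm{PSL}(2,\mathbb R)$ (for the hyperbolic and parabolic $\mathfrak r$ this stabilizer is a one-parameter group plus possibly a reflection that also lies in $\mathrm{Inn}$, which is what allows killing the $\mathcal P^y$- or $\mathcal K$-component of a lift), and then check that no further $\mathrm{Inn}$-motion relates two representatives with different parameter values — this non-equivalence verification, together with keeping track of which sign normalizations survive, is where the real care is needed. I would also separately confirm completeness by a dimension count: every subalgebra falls into exactly one $(\mathfrak r,d_0)$ stratum, and within each stratum the parametrized family is shown to be a full set of orbit representatives. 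Finally I would note that the proper subalgebras are exactly those with $\mathfrak s\neq\mathfrak g^{\rm ess}$, which excludes only $\langle\mathcal P^y,\mathcal D,\mathcal K,\mathcal P^t,\mathcal I\rangle$ itself, consistent with the $5$D case being absent from the list. The main obstacle is thus not any single hard computation but the disciplined enumeration of the lift-and-normalize cases together with the accompanying non-equivalence arguments.
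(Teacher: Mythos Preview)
Your plan is essentially the paper's own proof: it too uses the Goursat method for the direct sum $\mathfrak g^{\rm ess}=\mathfrak f\oplus\mathfrak z$, first fixing $\pi_{\mathfrak f}\mathfrak s$ to one of the standard ${\rm sl}(2,\mathbb R)$-subalgebras and then lifting, using $\pi_{\mathfrak z}\mathfrak s$ and $\mathfrak s\cap\mathfrak z$ as further invariants and the stabilizer of the chosen $\pi_{\mathfrak f}\mathfrak s$ to normalize parameters.

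A few slips in your sketch to fix when you execute it: (i) $\mathfrak s_{1.3}$ does \emph{not} belong to the stratum $\mathfrak r=0$ (it has $\rho(\mathfrak s)=\langle\mathcal P^y\rangle$); (ii) you omit the strata with $d_0=2$ for one- and two-dimensional~$\mathfrak r$, which account for $\mathfrak s_{3.1},\mathfrak s_{3.2},\mathfrak s_{3.3},\mathfrak s_{4.1}$; (iii) the sign normalization $\nu'>0$ in the hyperbolic case $\langle\mathcal D+\nu'\mathcal P^t+\mu\mathcal I\rangle$ is not achieved by $\mathscr D(\epsilon)$ (which fixes $\mathcal D$) but by the inner automorphism $\mathscr Q^+(\pi/2)_*$, which sends $\mathcal D\mapsto-\mathcal D$.
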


\begin{proof}
Recall that the algebra $\mathfrak g^{\rm ess}$ can be decomposed
into the direct sum of its Levi factor $\mathfrak f$ and radical $\mathfrak z$,
$\mathfrak g^{\rm ess}=\mathfrak f\oplus\mathfrak z$,
where $\mathfrak f\simeq{\rm sl}(2,\mathbb R)$ and $\mathfrak z\simeq 2A_1$, see Section~\ref{sec:FineFPMIA}.
Moreover, $G^{\rm ess}_{\rm id}=F_{\rm id}\times Z_{\rm id}$ and $Z_{\rm id}$ is the center of~$G^{\rm ess}_{\rm id}$,
and thus the actions of~$G^{\rm ess}_{\rm id}$ on~$\mathfrak z$ and of~$Z_{\rm id}$ on~$\mathfrak g^{\rm ess}$ are trivial.
To classify subalgebras of~$\mathfrak g^{\rm ess}$ up to the $G^{\rm ess}_{\rm id}$-equivalence,
we use the Goursat method, see \cite{pate1975a} for the first presentation of this method and~\cite{wint2004a} for further details.

Since the algebras $\mathfrak f$ and $\mathfrak z$ are the realizations of the algebras ${\rm sl}(2,\mathbb R)$ and $2A_1$, respectively,
optimal lists of their proper subalgebras are as follows:
\begin{gather*}
\mathfrak f\colon\quad
\langle\mathcal P^y\rangle,\quad
\langle\mathcal D\rangle,\quad
\langle\mathcal P^y+\mathcal K\rangle,\quad
\langle\mathcal P^y,\mathcal K\rangle;
\\
\mathfrak z\colon\quad
\langle\mathcal P^t+\mu\mathcal I\rangle_{\mu\in\mathbb R},\quad
\langle\mathcal I\rangle.
\end{gather*}
Let $\pi_{\mathfrak f}$ and $\pi_{\mathfrak z}$ denote the natural projections with respect to the vector space decomposition
$\mathfrak g^{\rm ess}=\mathfrak f\dotplus\mathfrak z$.
For any subalgebra $\mathfrak s\subset\mathfrak g$, we can set
$\pi_{\mathfrak f}\mathfrak s\in\{\langle0\rangle,\langle\mathcal P^y\rangle,
\langle\mathcal D\rangle,
\langle\mathcal P^y+\mathcal K\rangle,
\langle\mathcal P^y,\mathcal K\rangle,
\mathfrak f\}$,
modulo the $F_{\rm id}$-equivalences.
In addition, we have $\pi_{\mathfrak z}\mathfrak s,\mathfrak s\cap\mathfrak z\in\{\{0\},
\langle\mathcal P^t+\mu\mathcal I\rangle_{\mu\in\mathbb R},
\langle\mathcal I\rangle,
\mathfrak z\}$.
It is clear that  for any subalgebra $\mathfrak s\subset\mathfrak g^{\rm ess}$
the following values are invariants under the action of~$G^{\rm ess}_{\rm id}$:
$\dim\mathfrak s$,
$[\pi_{\mathfrak f}\mathfrak s]_{F_{\rm id}}^{}$,
$\pi_{\mathfrak z}\mathfrak s$
and $\mathfrak s\cap\mathfrak z$,
where $[\pi_{\mathfrak f}\mathfrak s]_{F_{\rm id}}^{}$ denotes the equivalence class of~$\pi_{\mathfrak f}\mathfrak s$
under the $F_{\rm id}$-equivalence.
This means that any two subalgebras with different corresponding tuples of the invariants introduced above cannot be equivalent.
Let us denote the quadruple
$\dim\mathfrak s$, $\dim\pi_{\mathfrak f}\mathfrak s$, $\dim\mathfrak s\cap\mathfrak z$ and $\dim\pi_{\mathfrak z}\mathfrak s$
by $n$, $\hat n$, $\check n$ and~$m$, respectively.
It is easy to see that $n=\hat n+\check n\leqslant\hat n+m$,
and
$\max(0,m-\hat n)\leqslant\check n\leqslant m\leqslant2$.

We use the above invariant objects for splitting the consideration into inequivalent cases.

\medskip\par\noindent	
$\boldsymbol{\hat n=0.}$
This means that $\mathfrak s\subseteq\mathfrak z$,
and thus the subalgebra~$\mathfrak s$ is equivalent to a subalgebra from the list of inequivalent subalgebras of $\mathfrak z$.
Therefore, we obtain the subalgebras $\mathfrak s_{1.1}^\mu$, $\mathfrak s_{1.2}$, and $\mathfrak s_{2.1}=\mathfrak z$.

\medskip

Before proceeding with classifying subalgebras $\mathfrak s$ with $\hat n=1$,
we note several simple observations.
Since $\hat n=1$, we have $n\leqslant3$.
\begin{itemize}\itemsep=0ex
\item[$(i)$]
If $n=1$, then $\check n=0$, i.e., $\mathfrak s\cap\mathfrak z=\{0\}$.
\item[$(ii)$]
If $n=2$, then $\check n=1$, and hence $\mathfrak s\cap\mathfrak z$ is a proper subalgebra of $\mathfrak s$.
Moreover, if in addition $m=1$, then $\mathfrak s\cap\mathfrak z=\pi_{\mathfrak z}\mathfrak s$,
which in its turn implies $\mathfrak s=\pi_{\mathfrak f}\mathfrak s\oplus\pi_{\mathfrak z}\mathfrak s$.
\item[$(iii)$]
If $n=3$, then $\check n=m=2$.
Hence $\mathfrak s\cap\mathfrak z=\pi_{\mathfrak z}\mathfrak s=\mathfrak z$,
which implies $\mathfrak s=\pi_{\mathfrak f}\mathfrak s\oplus\mathfrak z$.

\end{itemize}

\medskip\par\noindent	
$\boldsymbol{\pi_{\mathfrak f}\mathfrak s=\langle\mathcal P^y\rangle}$.
If $m=0$, then~$\mathfrak s=\mathfrak s_{1.4}^{0}$.
Further we assume $m>0$ and consider the cases with different values of~$n$ separately.

$n=1$.
For $\pi_{\mathfrak z}\mathfrak s=\langle\mathcal P^t+\mu\mathcal I\rangle$,
due to $\mathfrak s\cap\mathfrak z=\{0\}$ we obtain
$\mathfrak s=\langle\mathcal P^y+\alpha(\mathcal P^t+\mu\mathcal I)\rangle$ for some nonzero constant~$\alpha$.
Acting on $\mathfrak s$ by $\mathscr D(\tfrac12\ln|\alpha|)_*$,
we can always set $\alpha$ to be $\pm1$.
For $\pi_{\mathfrak z}\mathfrak s=\langle\mathcal I\rangle$, the arguments are analogous.

$n=2$.
For $\pi_{\mathfrak z}\mathfrak s=\langle\mathcal P^t+\mu\mathcal I\rangle$
or $\pi_{\mathfrak z}\mathfrak s=\langle\mathcal I\rangle$,
claim $(ii)$ implies
$\mathfrak s=\pi_{\mathfrak f}\mathfrak s\oplus\pi_{\mathfrak z}\mathfrak s$.
For $\pi_{\mathfrak z}\mathfrak s=\langle\mathcal P^t,\mathcal I\rangle$,
the cases with $\mathfrak s\cap\mathfrak z=\langle\mathcal P^t+\mu\mathcal I\rangle$
and $\mathfrak s\cap\mathfrak z=\langle\mathcal I\rangle$
correspond to the subalgebras $\mathfrak s=\langle\mathcal P^y+\nu\mathcal I,\mathcal P^t+\mu\mathcal I\rangle$
and $\mathfrak s=\langle\mathcal P^y+\nu\mathcal P^t,\mathcal I\rangle$,
where $\nu\ne0$.
Acting then on $\mathfrak s$ by $\mathscr D(\tfrac12\ln|\nu|)_*$, we can set $\nu=\pm1$.

$n=3$. We obtain $\mathfrak s=\pi_{\mathfrak f}\mathfrak s\oplus\mathfrak z$ by claim $(iii)$.

In total, this results in the families of subalgebras
$\mathfrak s_{1.3}^{\varepsilon,\mu}$, $\mathfrak s_{1.4}^{\delta}$,
$\mathfrak s_{2.2}^{\delta,\mu}$, $\mathfrak s_{2.3}^{\delta}$, $\mathfrak s_{3.1}$.

\medskip\par\noindent	
$\boldsymbol{\pi_{\mathfrak f}\mathfrak s=\langle\mathcal D\rangle}$.
If $m=0$, then $\mathfrak s=\mathfrak s^0_{1.6}$.
Further we assume $m>0$ and separately consider cases with different values of~$n$.

$n=1$.
The condition $\pi_{\mathfrak z}\mathfrak s=\langle\mathcal P^t+\mu\mathcal I\rangle$
implies that $\check n=0$, and thus $\mathfrak s=\langle\mathcal D+\nu\mathcal P^t+\mu\mathcal I\rangle$,
where $\nu\ne0$.
Acting on $\mathfrak s$ by $\mathscr Q^+(\pi/2)_*$, we can set $\nu>0$.
The consideration of the condition $\pi_{\mathfrak z}\mathfrak s=\langle\mathcal I\rangle$ is the same.
These two cases result in the family of subalgebras~\smash{$\mathfrak s_{1.5}^{\nu'\!,\mu}$} and~\smash{$\mathfrak s_{1.6}^{\nu'}$}
with $\nu'>0$ and $\mu\in\mathbb R$.

$n=2$.
If in addition $m=1$, then $\mathfrak s=\pi_{\mathfrak f}\mathfrak s\oplus\pi_{\mathfrak z}\mathfrak s$ due to claim $(ii)$,
and depending on $\pi_{\mathfrak z}\mathfrak s=\mathfrak s\cap\mathfrak z$,
we get the subalgebras~\smash{$\mathfrak s_{2.4}^{0,\mu}$} and~\smash{$\mathfrak s_{2.5}^0$}.
If $\pi_{\mathfrak z}\mathfrak s=\langle\mathcal P^t,\mathcal I\rangle$,
then the cases with
$\mathfrak s\cap\mathfrak z=\langle\mathcal P^t+\mu\mathcal I\rangle$ and
$\mathfrak s\cap\mathfrak z=\langle\mathcal I\rangle$
correspond to the families of subalgebras $\mathfrak s_{2.4}^{\nu,\mu}$
and $\mathfrak s_{2.5}^{\nu}$ with $\nu>0$ and $\mu\in\mathbb R$, respectively.

$n=3$. We straightforwardly obtain $\mathfrak s_{3.2}$.

\medskip\par\noindent	
$\boldsymbol{\pi_{\mathfrak f}\mathfrak s=\langle\mathcal P^y+\mathcal K\rangle}$.
This case is analogous to the previous cases
$\pi_{\mathfrak f}\mathfrak s=\langle\mathcal P^y\rangle$ and
$\pi_{\mathfrak f}\mathfrak s=\langle\mathcal D\rangle$,
except that scaling parameters or gauging their signs are impossible.
This gives subalgebras~\smash{$\mathfrak s_{1.7}^{\mu,\mu'}$}, \smash{$\mathfrak s_{2.6}^{\mu,\mu'}$},
$\mathfrak s_{2.7}^\mu$, $\mathfrak s_{3.3}$ with $\mu,\mu'\in\mathbb R$.

\medskip\par\noindent	
$\boldsymbol{\pi_{\mathfrak f}\mathfrak s=\langle\mathcal P^y,\mathcal D\rangle}.$
It is then clear that $2\leqslant n\leqslant4$.
If $m=0$, then $\mathfrak s=\mathfrak s^0_{2.9}$.
Further $m>0$.
We again consider cases with different values of~$n$ separately.

$n=2$.
Then $\check n=0$.
If $\pi_{\mathfrak z}\mathfrak s=\langle\mathcal P^t+\mu\mathcal I\rangle$,
then $\mathfrak s=\langle Q^1,Q^2\rangle$,
where
$Q^1=\mathcal P^y+\alpha(\mathcal P^t+\mu\mathcal I)$ and
$Q^2=\mathcal D+\beta(\mathcal P^t+\mu\mathcal I)$.
Since $[Q^1,Q^2]=\mathcal P^y\in\langle Q^1,Q^2\rangle$,
we derive $\alpha=0$,
which leads to the family of subalgebras $\mathfrak s_{2.8}^{\nu,\mu}$,
where we can additionally set $\nu>0$.
If $\pi_{\mathfrak z}\mathfrak s=\langle\mathcal I\rangle$, then
using analogous arguments, we get the family $\mathfrak s_{2.9}^\nu$ with $\nu>0$.
The case $\pi_{\mathfrak z}\mathfrak s=\langle\mathcal P^t,\mathcal I\rangle$
is impossible due to the commutation relations between involved vector fields.

$n=3$. Hence $\check n=1$.
If $m=1$, then $\mathfrak s\cap\mathfrak z=\pi_{\mathfrak z}\mathfrak s$,
i.e., the algebra~$\mathfrak s$ is the direct sum of its projections on $\mathfrak f$ and $\mathfrak z$,
which results in the subalgebras $\mathfrak s_{3.4}^{0,\mu}$ and $\mathfrak s_{3.5}^0$
for $\pi_{\mathfrak z}\mathfrak s=\langle\mathcal P^t+\mu\mathcal I\rangle$
and $\pi_{\mathfrak z}\mathfrak s=\langle\mathcal I\rangle$, respectively.

If $\pi_{\mathfrak z}\mathfrak s=\langle\mathcal P^t,\mathcal I\rangle$,
we still have two options for $\mathfrak s\cap\mathfrak z$,
either $\langle\mathcal P^t+\mu\mathcal I\rangle$ or $\langle\mathcal I\rangle$.
The corresponding subalgebras are
$\mathfrak s=\langle\mathcal P^y,\mathcal D+\nu\mathcal I,\mathcal P^t+\mu\mathcal I\rangle$
or $\mathfrak s=\langle\mathcal P^y,\mathcal D+\nu\mathcal P^t,\mathcal I\rangle$ with $\nu\ne0$,
where $\nu$ can be set greater than zero.
By this we obtain the subalgebras $\mathfrak s_{3.4}^{\nu,\mu}$ and $\mathfrak s_{3.5}^\nu$, where $\nu>0$.

$n=4$. It is obvious that then $\mathfrak s=\pi_{\mathfrak f}\mathfrak s\oplus\mathfrak z$.
Hence we obtain the subalgebra~$\mathfrak s_{4.1}$.

\medskip\par\noindent	
$\boldsymbol{\pi_{\mathfrak f}\mathfrak s=\langle\mathcal P^y,\mathcal D,\mathcal K\rangle}.$
If $n=3$, then the subalgebra~$\mathfrak s$ is a Levi factor of~$\mathfrak g^{\rm ess}$.
Therefore, by the Levi--Malcev theorem, it is equivalent with respect to an inner automorphism
to the subalgebra $\mathfrak s_{3.6}=\mathfrak f$.

In the case $n=4$ and $m=1$, we clearly have the factorization
$\mathfrak s=\pi_{\mathfrak f}\mathfrak s\oplus\pi_{\mathfrak z}\mathfrak s$
and obtain two families of inequivalent subalgebras, $\mathfrak s_{4.2}^\mu$ and $\mathfrak s_{4.3}$.
The case with $n=4$ and $m=2$ is impossible due to the commutation relations between involved vector fields.

\medskip\par
Each two subalgebra families among those listed in lemma's statement
correspond to different quadruples
$(\dim\mathfrak s,[\pi_{\mathfrak f}\mathfrak s]_{F_{\rm id}}^{},\pi_{\mathfrak z}\mathfrak s,\mathfrak s\cap\mathfrak z)$
of $G^{\rm ess}_{\rm id}$-invariants,
which means that there are no conjugations among subalgebras from different families.
The absence of conjugated subalgebras within each of the families follows from the above consideration.
\end{proof}

\begin{remark}
Due to the isomorphisms $\mathfrak g^{\rm ess}\simeq{\rm sl}(2,\mathbb R)\oplus2A_1\simeq{\rm}{\rm gl}(2,\mathbb R)\oplus A_1$,
Lemma~\ref{lem:IneqSubalgebrasUpToInn} also presents
the classifications of subalgebras of the algebras ${\rm sl}(2,\mathbb R)\oplus2A_1$ and ${\rm gl}(2,\mathbb R)\oplus A_1$
up to their inner automorphism groups.
The second decomposition also suggests another way of proving Lemma~\ref{lem:IneqSubalgebrasUpToInn}
that is based on the known list of inequivalent subalgebras of the algebra~${\rm gl}(2,\mathbb R)$
\cite{pate1977a,popo2003a}.
\end{remark}

\begin{corollary}\label{cor:GessIneqSubalgebras}
A complete list of $G^{\rm ess}$-inequivalent proper subalgebras of the algebra~$\mathfrak g^{\rm ess}$
is exhausted by the following subalgebras,
where $\delta'\in\{0,1\}$, $\mu\in\mathbb R$, $\nu\geqslant0$, $\nu'>0$:
\begin{gather*}
{\rm 1D}\colon\quad
\mathfrak s_{1.1}^\mu,\quad
\mathfrak s_{1.2},\quad
\mathfrak s_{1.3}^{1,\mu},\quad
\mathfrak s_{1.4}^{\delta'},\quad
\mathfrak s_{1.5}^{\nu',\mu},\quad
\mathfrak s_{1.6}^{\nu},\quad
\mathfrak s_{1.7}^{\nu',\mu},\quad
\mathfrak s_{1.7}^{0,\nu},
\\
{\rm 2D}\colon\quad
\mathfrak s_{2.1},\quad
\mathfrak s_{2.2}^{\delta',\mu},\quad
\mathfrak s_{2.3}^{\delta'},\quad
\mathfrak s_{2.4}^{\nu,\mu},\quad
\mathfrak s_{2.5}^\nu,\quad
\mathfrak s_{2.6}^{\nu,\mu},\quad
\mathfrak s_{2.7}^\nu,\quad
\mathfrak s_{2.8}^{\nu',\mu},\quad
\mathfrak s_{2.9}^{\nu},\quad
\\
{\rm 3D}\colon\quad
\mathfrak s_{3.1},\quad
\mathfrak s_{3.2},\quad
\mathfrak s_{3.3},\quad
\mathfrak s_{3.4}^{\nu,\mu},\quad
\mathfrak s_{3.5}^\nu,\quad
\mathfrak s_{3.6},
\\
{\rm 4D}\colon\quad
\mathfrak s_{4.1},\quad
\mathfrak s_{4.2}^\mu,\quad
\mathfrak s_{4.3}.
\end{gather*}
\end{corollary}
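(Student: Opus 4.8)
The plan is to upgrade the $G^{\rm ess}_{\rm id}$-classification of Lemma~\ref{lem:IneqSubalgebrasUpToInn} to a $G^{\rm ess}$-classification by quotienting out the discrete point symmetries. By Proposition~\ref{pro:FineFPDiscrSyms} the group $G^{\rm ess}$ is generated by $G^{\rm ess}_{\rm id}$ together with the two involutions $\mathscr I'$ and $\mathscr J'$, and $G^{\rm ess}/G^{\rm ess}_{\rm id}\simeq\mathbb Z_2^2$ by Corollary~\ref{prop:GessModGessId}; hence the $G^{\rm ess}$-equivalence of subalgebras of $\mathfrak g^{\rm ess}$ is the equivalence relation generated by the already-classified $G^{\rm ess}_{\rm id}$-equivalence together with the two pushforward automorphisms $\mathscr I'_*$ and $\mathscr J'_*$ of $\mathfrak g^{\rm ess}$. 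First I would record these two automorphisms on the basis $(\mathcal P^y,\mathcal D,\mathcal K,\mathcal P^t,\mathcal I)$: a short direct computation shows that $\mathscr I'_*$ is the identity on $\mathfrak g^{\rm ess}$, while $\mathscr J'_*$ fixes $\mathcal D$, $\mathcal P^t$, $\mathcal I$ and sends $\mathcal P^y\mapsto-\mathcal P^y$, $\mathcal K\mapsto-\mathcal K$ (the last two relations already appear in the table of generator actions preceding Lemma~\ref{lem:IneqSubalgebrasUpToInn}). Consequently $\mathscr J'_*$ maps each of the subalgebras $\langle\mathcal P^y\rangle$, $\langle\mathcal D\rangle$, $\langle\mathcal P^y+\mathcal K\rangle$, $\langle\mathcal P^y,\mathcal K\rangle$ and $\mathfrak f=\langle\mathcal P^y,\mathcal D,\mathcal K\rangle$ onto itself as a set and acts trivially on the radical $\mathfrak z=\langle\mathcal P^t,\mathcal I\rangle$; therefore it preserves all four $G^{\rm ess}_{\rm id}$-invariants $\dim\mathfrak s$, $[\pi_{\mathfrak f}\mathfrak s]_{F_{\rm id}}$, $\pi_{\mathfrak z}\mathfrak s$, $\mathfrak s\cap\mathfrak z$ used in the proof of Lemma~\ref{lem:IneqSubalgebrasUpToInn}, so it maps each family listed there onto itself and can only permute that family's parameter values.

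Next I would go through the list of Lemma~\ref{lem:IneqSubalgebrasUpToInn} family by family and determine the action of $\mathscr J'_*$. Every subalgebra with $\pi_{\mathfrak f}\mathfrak s\in\{\langle0\rangle,\langle\mathcal D\rangle\}$ and every subalgebra containing the line $\langle\mathcal P^y\rangle$ or $\langle\mathcal P^y+\mathcal K\rangle$ together with its multiples is $\mathscr J'_*$-invariant, so the families $\mathfrak s_{1.1}^\mu$, $\mathfrak s_{1.2}$, $\mathfrak s_{1.5}^{\nu',\mu}$, $\mathfrak s_{1.6}^\nu$, $\mathfrak s_{2.1}$, $\mathfrak s_{2.4}^{\nu,\mu}$, $\mathfrak s_{2.5}^\nu$, $\mathfrak s_{2.8}^{\nu',\mu}$, $\mathfrak s_{2.9}^\nu$, $\mathfrak s_{3.1}$, $\mathfrak s_{3.2}$, $\mathfrak s_{3.3}$, $\mathfrak s_{3.4}^{\nu,\mu}$, $\mathfrak s_{3.5}^\nu$, $\mathfrak s_{3.6}$, $\mathfrak s_{4.1}$, $\mathfrak s_{4.2}^\mu$, $\mathfrak s_{4.3}$ carry over to the new list unchanged. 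On the seven remaining families $\mathscr J'_*$ simultaneously reverses the sign of the coefficient of the $\mathfrak f$-generator ($\mathcal P^y$ or $\mathcal P^y+\mathcal K$) and of the attached $\mathfrak z$-coefficients: it identifies $\mathfrak s_{1.3}^{\varepsilon,\mu}$ with $\mathfrak s_{1.3}^{-\varepsilon,-\mu}$, so $\varepsilon$ is normalized to $1$; $\mathfrak s_{1.4}^\delta$ with $\mathfrak s_{1.4}^{-\delta}$ and $\mathfrak s_{2.2}^{\delta,\mu}$ with $\mathfrak s_{2.2}^{-\delta,\mu}$, so $\delta$ is normalized to $\delta'\in\{0,1\}$; $\mathfrak s_{2.3}^\delta$ with $\mathfrak s_{2.3}^{-\delta}$, again $\delta\to\delta'$; $\mathfrak s_{2.6}^{\mu,\mu'}$ with $\mathfrak s_{2.6}^{-\mu,\mu'}$ and $\mathfrak s_{2.7}^\mu$ with $\mathfrak s_{2.7}^{-\mu}$, so the coefficient of $\mathcal I$ (resp.\ of $\mathcal P^t$) attached to $\mathcal P^y+\mathcal K$ becomes nonnegative, $\nu\geqslant0$; and finally $\mathfrak s_{1.7}^{\mu,\mu'}$ with $\mathfrak s_{1.7}^{-\mu,-\mu'}$, for which a fundamental domain of the $\mathbb Z_2$-action $(\mu,\mu')\mapsto(-\mu,-\mu')$ on $\mathbb R^2$ is $\{\mu>0\}\cup\{\mu=0,\ \mu'\geqslant0\}$, producing the two subfamilies $\mathfrak s_{1.7}^{\nu',\mu}$ with $\nu'>0$, $\mu\in\mathbb R$ and $\mathfrak s_{1.7}^{0,\nu}$ with $\nu\geqslant0$. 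Collecting the survivors yields exactly the list in the statement.

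It then remains to verify that this list has no repetitions. Since the representatives of Lemma~\ref{lem:IneqSubalgebrasUpToInn} are pairwise $G^{\rm ess}_{\rm id}$-inequivalent, $\mathscr I'_*$ is trivial, and $\mathscr J'_*$ is an involution mapping the set of those representatives to itself (the flipped parameter values still lie in the ranges allowed by the lemma), the $G^{\rm ess}$-equivalence restricted to that set is generated solely by $\mathfrak s\mapsto\mathscr J'_*\mathfrak s$; a fundamental domain for the induced $\mathbb Z_2$-action has been selected above inside each family, and subalgebras from different families stay inequivalent because $\mathscr J'_*$ preserves the four $G^{\rm ess}_{\rm id}$-invariants. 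I expect this last point to be the only real, and rather mild, obstacle: the work is essentially bookkeeping — one must make sure $\mathscr J'_*$ cannot transport a representative of one family to a representative of another, which is precisely what the $\mathscr J'_*$-invariance of $\dim\mathfrak s$, $[\pi_{\mathfrak f}\mathfrak s]_{F_{\rm id}}$, $\pi_{\mathfrak z}\mathfrak s$ and $\mathfrak s\cap\mathfrak z$ rules out.
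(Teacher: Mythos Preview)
Your proposal is correct and follows essentially the same approach as the paper's own proof: start from the $G^{\rm ess}_{\rm id}$-list of Lemma~\ref{lem:IneqSubalgebrasUpToInn}, observe that $\mathscr I'_*$ acts trivially on~$\mathfrak g^{\rm ess}$ while $\mathscr J'_*$ only alternates the signs of~$\mathcal P^y$ and~$\mathcal K$, note that $\mathscr J'_*$ preserves the quadruple of invariants and hence cannot mix families, and then read off the within-family identifications. Your write-up is in fact somewhat more explicit than the paper's, in particular in spelling out the fundamental-domain argument for the two-parameter family~$\mathfrak s_{1.7}^{\mu,\mu'}$ and in justifying why the $G^{\rm ess}$-equivalence on the set of Lemma~\ref{lem:IneqSubalgebrasUpToInn} representatives reduces to the single involution~$\mathscr J'_*$.
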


\begin{proof}
The group~$G^{\rm ess}$ is the semidirect product of its normal subgroup~$G^{\rm ess}_{\rm id}$
and its finite subgroup~$G^{\rm ess}_{\rm d}$ generated by the independent discrete symmetries~$\mathscr I'$ and~$\mathscr J'$
of the equation~\eqref{eq:FineFP}, which are given in Proposition~\ref{pro:FineFPDiscrSyms}.
If subalgebras are $G^{\rm ess}_{\rm id}$-equivalent, then they are clearly $G^{\rm ess}$-equivalent.
Hence a complete list of $G^{\rm ess}$-inequivalent proper subalgebras of the algebra $\mathfrak g^{\rm ess}$
is contained within the list presented in Lemma~\ref{lem:IneqSubalgebrasUpToInn}.
Moreover, subalgebras from different families in the latter list are $G^{\rm ess}$-inequivalent
since the quadruple of $G^{\rm ess}_{\rm id}$-invariants
$(\dim\mathfrak s,[\pi_{\mathfrak f}\mathfrak s]_{F_{\rm id}}^{},\pi_{\mathfrak z}\mathfrak s,\mathfrak z\cap\mathfrak s)$
is invariant under the action of $G^{\rm ess}$ as well.
Among the transformations~$\mathscr I'$ and~$\mathscr J'$,
the only pushforward of~$\mathscr J'$ acts nontrivially on $\mathfrak g^{\rm ess}$,
and this action is nonidentity only for the basis elements~$\mathcal P^y$ and~$\mathcal K$,
$\mathscr J'_*\mathcal P^y=-\mathcal P^y$ and $\mathscr J'_*\mathcal K  =-\mathcal K$.
Hence acting by~$G^{\rm ess}_{\rm d}$ on the subalgebras listed in Lemma~\ref{lem:IneqSubalgebrasUpToInn}
establishes the following $G^{\rm ess}$-equivalences:
\begin{gather*}
\mathfrak s_{1.3}^{-1,-\mu}\sim\mathfrak s_{1.3}^{1,\mu},\quad
\mathfrak s_{1.4}^{-1}\sim\mathfrak s_{1.4}^1,\quad
\mathfrak s_{1.7}^{-\mu,-\mu'}\sim\mathfrak s_{1.7}^{\mu,\mu'},
\\
\mathfrak s_{2.2}^{-1,\mu}\sim\mathfrak s_{2.2}^{1,\mu},\quad
\mathfrak s_{2.3}^{-1}\sim\mathfrak s_{2.3}^1,\quad
\mathfrak s_{2.6}^{-\mu,\mu'}\sim\mathfrak s_{2.6}^{\mu,\mu'},\quad
\mathfrak s_{2.7}^{-\mu}\sim\mathfrak s_{2.7}^{\mu}.
\end{gather*}
For each of the above pairs, we exclude one of its elements from the list in Lemma~\ref{lem:IneqSubalgebrasUpToInn},
which gives a complete list of $G^{\rm ess}$-inequivalent proper subalgebras of the algebra $\mathfrak g^{\rm ess}$.
\end{proof}

\section{Lie reductions}\label{sec:LieReductions}

Throughout this section, the subscripts~1 and~2 of the dependent variable $w$
denote the derivatives with respect to~$z_1$ and~$z_2$, respectively.
We use the complex label $d.i^*$ to mark the Lie reduction
corresponding to the subalgebra $\mathfrak s_{d.i}^*$ from Corollary~\ref{cor:GessIneqSubalgebras},
where $d$, $i$ and~$*$ are the dimension of this subalgebra,
its number in the list of $d$-dimensional subalgebras of the algebra~$\mathfrak g^{\rm ess}$,
and the values of subalgebra parameters for a subalgebra family, respectively.
We omit the superscript in the label whenever it is not essential.

\medskip\par\noindent
{\bf Codimension one.}\
All the inequivalent one-dimensional subalgebras of~$\mathfrak g^{\rm ess}$
that are listed in Corollary~\ref{cor:GessIneqSubalgebras}, except the subalgebra~$\mathfrak s_{1.2}$,
satisfy the transversality condition and thus are appropriate for using for Lie reductions of the equation~\eqref{eq:FineFP}.
(The space of $\mathfrak s_{1.2}$-invariant solutions of~\eqref{eq:FineFP} is exhausted by the trivial identically zero solution.)
For each of these subalgebras, we construct the corresponding Lie ansatz, i.e.,
the expression for~$u$ in terms of the new unknown function $w(z_1,z_2)$
of two invariant independent variables $z_1$ and~$z_2$,
and derive the associated reduced equation.
We merge the subalgebra families $\{\mathfrak s_{1.5}^{\nu',\mu}\}$ and $\{\mathfrak s_{1.6}^{\nu}\}$
to $\{\mathfrak s_{1.5}^{\nu,\mu}\}$, assuming that $\mu\geqslant0$ if $\nu=0$.
The results are presented in Table~\ref{tab:LieReductionsOfCodim1}.

\begin{table}[!ht]\footnotesize
\begin{center}
\caption{$G^{\rm ess}$-inequivalent codimension-one Lie reductions of the equation~\eqref{eq:FineFP}}
\label{tab:LieReductionsOfCodim1}${ }$\\[-1ex]
\renewcommand{\arraystretch}{1.6}
\begin{tabular}{|l|c|c|c|l|}
\hline
\hfil no.		& $u$                             & $z_1$            & $z_2$     & \hfil Reduced equation\\
\hline
1.1$^\mu$ 		& ${\rm e}^{\mu t}w$              & $y$              & $x$             & $z_2w_1=z_2^2w_{22}-\mu w$     \\
1.3$^{1,\mu}$   & ${\rm e}^{\mu t}w$              & $t-y$            & $x$             & $(1-z_2)w_1=z_2^2w_{22}-\mu w$ \\
1.4$^{\delta'}$ & ${\rm e}^{\delta' y}w$          & $t$              & $x$             & $w_1=z_2^2w_{22}-\delta'z_2w$  \\
%1.5$^{\nu',\mu}$& $|y|^\mu w$                     & $t-\nu'\ln|y|$   & $x/y$           & $(1-\nu'z_2)w_1=z_2^2w_{22}+z_2^2w_2-\mu z_2w$\\
%1.6$^\nu$       & $|y|^\nu w$                     & $t$              & $x/y$           & $w_1=z_2^2w_{22}+z_2^2w_2-\nu z_2w$\\
1.5$^{\nu,\mu}$         & $|y|^\mu w$                     & $t-\nu\ln|y|$    & $x/y$           & $(1-\nu z_2)w_1=z_2^2w_{22}+z_2^2w_2-\mu z_2w$\\
1.7$^{\nu,\mu}$   & ${\rm e}^{\mu\arctan(y)-\frac{xy}{y^2+1}}w$& $t-\nu\arctan y$&$\dfrac x{y^2+1}$ &$(1-\nu z_2)w_1=z_2^2w_{22}+z_2(z_2-\mu)w$\\[1.2ex]
\hline
\end{tabular}
\end{center}
Here $\delta'\in\{0,1\}$, $\mu\in\mathbb R$, $\nu\geqslant0$, 
and $\mu\geqslant0$ in Cases~1.5 and~1.7 with $\nu=0$;
$w=w(z_1,z_2)$ is the new unknown function of the new independent variables $(z_1,z_2)$.
\end{table}

The reduced equations are linear homogeneous second-order evolution equations in two independent variables.
The group classification of these equations is well known
and was first obtained by Sophus Lie himself over~$\mathbb C$ in~\cite{lie1881a},
see also~\cite[Section 2]{popo2008a} for the modern view on the problem.
We exploit this classification for constructing point transformations
that map the reduced equations to the canonical representatives of their equivalence class,
which are (1+1)-dimensional linear heat equations with potentials.

For each of the reduced equations, $\mathcal R$,
its maximal Lie invariance algebra $\mathfrak g_{\mathcal R}$
contains the infinite-dimensional abelian ideal $\{h(z_1,z_2)\p_w\}$,
where $h$ runs through the solution set of~$\mathcal R$.
This ideal is merely associated with the linear superposition of solutions of~$\mathcal R$
and thus can be assumed to be a trivial part of the algebra $\mathfrak g_{\mathcal R}$.
Moreover, the algebra $\mathfrak g_{\mathcal R}$ splits over it,
and its complement in $\mathfrak g_{\mathcal R}$ is
a finite-dimensional subalgebra $\mathfrak a_{\mathcal R}$,
called the essential Lie invariance algebra of~$\mathcal R$.
The algebras $\mathfrak a_{\mathcal R}$ for specific reduced equations~$\mathcal R$ are the following:
\begin{gather*}
\mathfrak a_{1.3}^\mu
=\mathfrak a_{1.4}^1
=\mathfrak a_{1.5}^{\nu,\mu}
=\mathfrak a_{1.7}^{\nu,\mu}=\big\langle\p_{z_1},\,w\p_w\big\rangle,
\\[.5ex]
\mathfrak a_{1.1}^\mu
=\big\langle
\p_{z_1},\,
z_1\p_{z_1}+z_2\p_{z_2},\,
z_1^2\p_{z_1}+2z_1z_2\p_{z_2}-wz_2\p_w,\,
w\p_w
\big\rangle
\quad\text{if}\quad\mu\ne-\tfrac3{16},
\\[.5ex]
\mathfrak a_{1.1}^{\mu_0}
=\big\langle
\p_{z_1},\,
z_1\p_{z_1}+z_2\p_{z_2}-\tfrac12w\p_w,\,
z_1^2\p_{z_1}+2z_1z_2\p_{z_2}-z_2w\p_w,
\\
\hphantom{\mathfrak a_{1.1}^{\mu_0}=\langle}
|z_2|^{-\frac12}\big(4z_2\p_{z_2}+w\p_w\big),\,
|z_2|^{-\frac12}\big(4z_1z_2\p_{z_2}+(z_1-4z_2)w\p_w\big),\,
w\p_w
\big\rangle
\ \ \text{with}\ \ \mu_0:=-\tfrac3{16},
\\[.5ex]
\mathfrak a_{1.4}^0
=\big\langle
\p_{z_1},\,
4z_1\p_{z_1}+2z_2\ln|z_2|\p_{z_2}-(z_1-\ln|z_2|)w\p_w,
\\
\hphantom{\mathfrak a_{1.4}^0=\langle}
4z_1^2\p_{z_1}+4z_1z_2\ln|z_2|\p_{z_2}-\big((z_1-\ln|z_2|)^2+2z_1\big)w\p_w,
\\
\hphantom{\mathfrak a_{1.4}^0=\langle}
2z_1z_2\p_{z_2}+(z_1-\ln|z_2|)w\p_w,\,
z_2\p_{z_2},\,
w\p_w
\big\rangle.
\end{gather*}

Since $\dim\mathfrak a_{1.1}^{\mu_0}=\dim\mathfrak a_{1.4}^0>\dim\mathfrak g^{\rm ess}$,
the equation~\eqref{eq:FineFP} definitely admits hidden symmetries related to Lie reductions $1.1^{\mu_0}$ and $1.4^0$.
Recall that a hidden symmetry of a differential equation is a symmetry of its reduced equation
that is not induced by a symmetry of the original equation.
To find hidden symmetries, one usually describes the subalgebra of induced symmetries
in the Lie invariance algebra of the reduced equation.
The algebra of induced essential Lie symmetries of reduced equation $1.i$ is isomorphic to
the quotient algebra ${\rm N}_{\mathfrak g^{\rm ess}}(\mathfrak s_{1.i})/\mathfrak s_{1.i}$.
Hence to check which Lie symmetries of reduced equation~$1.i$ are induced by Lie symmetries of~\eqref{eq:FineFP},
we compute the normalizer ${\rm N}_{\mathfrak g^{\rm ess}}(\mathfrak s_{1.i})$
of the subalgebra~$\mathfrak s_{1.i}$ in the algebra $\mathfrak g^{\rm ess}$,
\begin{gather*}
{\rm N}_{\mathfrak g^{\rm ess}}(\mathfrak s_{1.1}^\mu)=\mathfrak g^{\rm ess},\ \;
{\rm N}_{\mathfrak g^{\rm ess}}(\mathfrak s_{1.4}^0)=\langle\mathcal P^y,\mathcal D,\mathcal P^t,\mathcal I\rangle,\ \;
{\rm N}_{\mathfrak g^{\rm ess}}(\mathfrak s_{1.3}^{1,\mu})
={\rm N}_{\mathfrak g^{\rm ess}}(\mathfrak s_{1.4}^1)
=\langle\mathcal P^y,\mathcal P^t,\mathcal I\rangle,\ \;
\\
{\rm N}_{\mathfrak g^{\rm ess}}(\mathfrak s_{1.5}^{\nu,\mu})
=\langle\mathcal D,\mathcal P^t,\mathcal I\rangle,\ \;
{\rm N}_{\mathfrak g^{\rm ess}}(\mathfrak s_{1.7}^{\nu,\mu})
=\langle\mathcal P^y+\mathcal K,\mathcal P^t,\mathcal I\rangle.
\end{gather*}

The subalgebras of essential induced symmetries
in $\mathfrak a_{1.1}^\mu$, $\mathfrak a_{1.2}$, \dots, $\mathfrak a_{1.7}^{\nu,\mu}$ are
\begin{gather*}
\tilde{\mathfrak a}_{1.1}^\mu=\langle\p_{z_1},z_1\p_{z_1}+z_2\p_{z_2},2z_1z_2\p_{z_2}+z_1^2\p_{z_1}-z_2w\p_w,w\p_w\rangle,\quad
\tilde{\mathfrak a}_{1.4}^0  =\langle\p_{z_1},z_2\p_{z_2},w\p_w\rangle,
\\
\tilde{\mathfrak a}_{1.3}^{1,\mu}
=\tilde{\mathfrak a}_{1.4}^1
=\tilde{\mathfrak a}_{1.5}^{\nu,\mu}
=\tilde{\mathfrak a}_{1.7}^{\nu,\mu}
=\langle\p_{z_1},w\p_w\rangle.
\end{gather*}
Hence the genuine hidden symmetries of~\eqref{eq:FineFP}
are contained only in the set complements $\mathfrak a_{1.1}^{\mu_0}\setminus\tilde{\mathfrak a}_{1.1}^{\mu_0}$
and $\mathfrak a_{1.4}^0\setminus\tilde{\mathfrak a}_{1.4}^0$.
This means that the further Lie reductions of the reduced equations, except~$1.1^{\mu_0}$ and~$1.4^0$
lead only to solutions of~\eqref{eq:FineFP}
that can be obtained by codimensions-two Lie reductions of~\eqref{eq:FineFP}.

The dimension of the essential Lie invariance algebras of each of the reduced equations is greater than one.
Therefore, we can map these equations by point transformations
to (1+1)-dimensional linear heat equations with potentials,
which are functions of the new variable~$\tilde z_2$,
i.e., to the equations of the form $\tilde w_{\tilde z_1}=\tilde w_{\tilde z_2\tilde z_2}+V(\tilde z_2)\tilde w$.
Here and in what follows the subscripts~1 and~2 of~$\tilde w$ denote the derivatives of~$\tilde w$
with respect to~$\tilde z_1$ and~$\tilde z_2$, respectively.
The transformations and the mapped equations are
\begin{description}

\item[$1.1^\mu$:\ ]
$\tilde z_1=\varepsilon z_1,\ \
\tilde z_2=2\sqrt{|z_2|},\ \
\tilde w(\tilde z_1,\tilde z_2)=|z_2|^{-\frac14}w,\quad
\tilde w_1=\tilde w_{22}-\dfrac{4\mu+3/4}{\tilde z_2^2}\tilde w$;

\item[$1.3^\mu$, $z_2<1$:\ ]
$\tilde z_1=z_1,\ \
\tilde z_2=2\sqrt{1-z_2}+\ln\left|\dfrac{\sqrt{1-z_2}-1}{\sqrt{1-z_2}+1}\right|,
\ \
\tilde w=\dfrac{(1-z_2)^{1/4}}{|z_2|^{1/2}}w$,
\\[1ex]
$\tilde w_1=\tilde w_{22}-\dfrac1{16}\left(\dfrac{16\mu+3}{z_2-1}-\dfrac6{(z_2-1)^2}-\dfrac5{(z_2-1)^3}\right)\tilde w$;

\item[$1.3^\mu$, $z_2>1$:\ ]
$\tilde z_1=-z_1,\ \
\tilde z_2=2\sqrt{z_2-1}-2\arctan\sqrt{z_2-1},
\ \
\tilde w=\dfrac{(z_2-1)^{1/4}}{z_2^{1/2}}w$,
\\
$\tilde w_1=\tilde w_{22}+\dfrac1{16}\left(\dfrac{16\mu+3}{z_2-1}-\dfrac6{(z_2-1)^2}-\dfrac5{(z_2-1)^3}\right)\tilde w$;

\item[$1.4^\delta$:]
$\tilde z_1=z_1,\ \
\tilde z_2=\ln|z_2|,\ \
\tilde w={\rm e}^{z_1/4}|z_2|^{-1/2}w,\ \
\tilde w_1=\tilde w_{22}-\delta\tilde\varepsilon{\rm e}^{\tilde z_2}\tilde w$\quad with\quad $\tilde\varepsilon=\sgn z_2$;

\item[$1.5^{\nu',\mu}$, $\nu'z_2<1$:]
$\tilde z_1=z_1,\ \
\tilde z_2=2\sqrt{1-\nu'z_2}+\ln\left|\dfrac{\sqrt{1-\nu'z_2}-1}{\sqrt{1-\nu'z_2}+1}\right|,\ \
\tilde w=\dfrac{{\rm e}^{z_2/2}}{|z_2|^{1/2}}(1-\nu'z_2)^{1/4}w$,
\\[1ex]
$\tilde w_1=\tilde w_{22}+\dfrac1{16}\left(\dfrac{4z_2^2+16\mu z_2+3}{\nu'z_2-1}
-\dfrac6{(\nu'z_2-1)^2}-\dfrac5{(\nu'z_2-1)^3}\right)\tilde w$;

\item[$1.5^{\nu',\mu}$, $\nu'z_2>1$:]
$\tilde z_1=-z_1,\ \
\tilde z_2=2\sqrt{\nu'z_2-1}-2\arctan\sqrt{\nu'z_2-1},\ \
\tilde w=\dfrac{{\rm e}^{z_2/2}}{z_2^{1/2}}(\nu'z_2-1)^{1/4}w$,
\\
$\tilde w_1=\tilde w_{22}-\dfrac1{16}\left(\dfrac{4z_2^2+16\mu z_2+3}{\nu'z_2-1}
-\dfrac6{(\nu'z_2-1)^2}-\dfrac5{(\nu'z_2-1)^3}\right)\tilde w$;

\item[$1.5^{0,\nu}$:]
$\tilde z_1=z_1,\ \
\tilde z_2=\ln|z_2|,\ \
\tilde w={\rm e}^{z_1/4+z_2/2}|z_2|^{-1/2}w,\\[1ex]
\tilde w_1=\tilde w_{22}-(\tilde\varepsilon\nu{\rm e}^{\tilde z_2}+\frac14{\rm e}^{2\tilde z_2})\tilde w
$ \ with \ $\tilde\varepsilon=\sgn z_2$;

\item[$1.7^{\nu',\mu}$, $\nu'z_2<1$:]
$\tilde z_1=z_1,\ \
\tilde z_2 =2\sqrt{1-\nu'z_2}+\ln\left|\dfrac{\sqrt{1-\nu'z_2}-1}{\sqrt{1-\nu'z_2}+1}\right|,\ \
\tilde w   =\dfrac{(1-\nu'z_2)^{1/4}}{|z_2|^{1/2}}w$,
\\[1ex]
$\tilde w_1=\tilde w_{22}
+\dfrac1{16}\left(\dfrac{16z_1(\mu-z_2)+3}{\nu z_2-1}-\dfrac6{(\nu z_2-1)^2}-\dfrac5{(\nu z_2-1)^3}\right)\tilde w$;

\item[$1.7^{\nu',\mu}$, $\nu'z_2>1$:]
$\tilde z_1=z_1,\ \
\tilde z_2 =2\sqrt{\nu'z_2-1}-2\arctan\sqrt{\nu'z_2-1},\ \
\tilde w   =\dfrac{(\nu'z_2-1)^{1/4}}{z_2^{1/2}}w$,
\\[1ex]
$\tilde w_1=\tilde w_{22}
-\dfrac1{16}\left(\dfrac{16z_1(\mu-z_2)+3}{\nu z_2-1}-\dfrac6{(\nu z_2-1)^2}-\dfrac5{(\nu z_2-1)^3}\right)\tilde w$;

\item[$1.7^{0,\nu}$:]
$\tilde z_1=z_1,\ \
\tilde z_2=\ln|z_2|,\ \
\tilde w=\dfrac{{\rm e}^{z_1/4}}{|z_2|^{1/2}}w,\ \
\tilde w_1=\tilde w_{22}+({\rm e}^{2\tilde z_2}-\tilde\varepsilon\nu{\rm e}^{\tilde z_2})\tilde w
$
\ with \ $\tilde\varepsilon:=\sgn z_2$.
\end{description}

The potentials in Cases~$1.3^\mu$, $1.5^{\nu',\mu}$ and~$1.7^{\nu',\mu}$
are defined as expressions with~$z_2=\tilde Z^2(\tilde z_2)$,
where $\tilde Z^2$ is the inverse of the function in the $z_2$-component $\tilde z_2=Z^2(z_2)$
of the corresponding transformation.

In several cases, where the dimension of the essential Lie invariance algebras of
the associated reduced equations is even greater than two,
the corresponding mapped equations are of especially nice form.

Thus, the essential Lie invariance algebras $\mathfrak a_{1.1}^{\mu_0}$ and $\mathfrak a_{1.4}^0$
are isomorphic to the special Galilei algebra,
which agrees with the fact that the mapped equations in Cases~$1.1^{\mu_0}$ and $1.4^0$
coincide with the (1+1)-dimensional linear heat equation.
This leads to the solutions of the equation~\eqref{eq:FineFP}
\begin{gather}\label{eq:SolFineFPHeat1}
\solution u={\rm e}^{-\frac3{16}t}|x|^{1/4}\theta(\varepsilon y,2\sqrt{|x|}),
\\ \label{eq:SolFineFPHeat2}
\solution u={\rm e}^{-\frac14t}|x|^{1/2}\theta(t,\ln|x|),
\end{gather}
where $\theta$ is an arbitrary solution of the equation $\theta_1=\theta_{22}$.
An enhanced complete collection of inequivalent Lie invariant solutions of this equation
was presented in~\cite[Section A]{vane2021a}, following Examples 3.3 and 3.17 in~\cite{olve1993A}.

The essential Lie invariance algebra of the equation~$1.1^\mu$ with $\mu\ne-\frac3{16}$ is four-dimensional.
Hence the corresponding mapped equation is the (1+1)-dimensional linear heat equation with inverse square potential.
This gives wide families of solutions of the equation~\eqref{eq:FineFP} that are of the form
\begin{gather}\label{eq:SolFineFPHeatWithZ2-2Pot}
\solution u={\rm e}^{\mu t}|x|^{1/4}\vartheta^{4\mu+\frac34}(\varepsilon y,2\sqrt{|x|}),
\end{gather}
where $\vartheta^{\tilde\mu}(\tilde z_1,\tilde z_2)$ is an arbitrary solution of the equation
$\tilde w_1=\tilde w_{22}-\tilde\mu\tilde z_2^{-2}\tilde w$ with $\tilde\mu\ne0$.
Wide families of solutions for such equations were found in~\cite[Section~A]{kova2023a}.

\medskip\par\noindent
{\bf Codimensions two and three.}\
Among the inequivalent two-dimensional subalgebras listed in Corollary~\ref{cor:GessIneqSubalgebras}
the subalgebras $\mathfrak s_{2.1}$, $\mathfrak s_{2.3}^{\delta'}$,
$\mathfrak s_{2.5}^\nu$, and $\mathfrak s_{2.7}^\nu$ are inappropriate for constructing a Lie ansatz.
Each of the other listed two-dimensional subalgebras contains
the subalgebra $\mathfrak s_{1.1}^\mu$ or the subalgebra $\mathfrak s_{1.4}^0$.
Therefore, each of the solutions constructed using the corresponding Lie codimension-two reductions
is $G^{\rm ess}$-equivalent to a solution
from one of the families~\eqref{eq:SolFineFPHeat1}--\eqref{eq:SolFineFPHeatWithZ2-2Pot}.

An analogous claim is true for reductions to algebraic equations.
The only listed three-dimensional subalgebras
that are suitable for Lie reductions are $\mathfrak s_{3.4}^{\nu,\mu}$ and $\mathfrak s_{3.6}$.
However, any of them contains the subalgebra $\mathfrak s_{1.4}^0$.
Hence the corresponding reduction gives no essentially new solution in addition to the already constructed ones.

\section{Generating new solutions from known ones}\label{sec:SolutionGeneration}

Point symmetries of the fine Kolmogorov backward equation~\eqref{eq:FineFP} can also be exploited
in the course of generating new solutions of this equation from the known ones.
Recall that by definition a point symmetry of the system of differential equations
is a point transformation that preserves its solution set.
Thus, in the notation of Theorem~\ref{thm:FineFPSymGroup},
pulling back a given solution $u=h(t,x,y)$ of~\eqref{eq:FineFP}
by an arbitrary element of the group~$G$ gives
\begin{gather*}
\tilde u=\sigma\exp\left(-\frac{\gamma}{\alpha-\gamma y}x\right)
h\left(t-\lambda,
\frac{\alpha\delta-\beta\gamma}{(\alpha-\gamma y)^2}x,
\frac{\delta y-\beta}{\alpha-\gamma y}\right)
+f(t,x,y),
\end{gather*}
which is the most general form of the $G$-equivalent counterparts of the seed solution $u=h(t,x,y)$.

Due to the linearity of the equation~\eqref{eq:FineFP},
there is one more way of generating its solutions from the known ones within the framework of symmetry analysis,
which uses linear generalized symmetries of~\eqref{eq:FineFP}.
To the best of our knowledge, the study of generalized symmetries of the equation~\eqref{eq:FineFP}
has not been initiated in the literature.
The thorough investigation of the generalized symmetry algebra of this equation
involves cumbersome and sophisticated computations and can become a topic of a separate research.
Nevertheless, there are a number of arguments on the description of this algebra to be expected.
See~\cite[Chapter~5]{olve1993A}, especially pp.~305--306, and~\cite{kova2024b}
for required definitions, notation and related results.
In particular, we refer to the differential operators in total derivatives
that are associated with the linear Lie-symmetry vector fields of the equation~\eqref{eq:FineFP}
as to the {\it Lie-symmetry operators} of this equation.
Thus, the Lie-symmetry vector fields $-\mathcal P^y$, $-\mathcal D$, $-\mathcal K$, $-\mathcal P^t$ and~$\mathcal I$
correspond to the operators
\begin{gather*}
\mathrm P^y:=\mathrm D_y,\quad
\mathrm D  :=x\mathrm D_x+y\mathrm D_y,\quad
\mathrm K  :=2xy\mathrm D_x+y^2\mathrm D_y+x,\quad
\mathrm P^t:=\mathrm D_t
\end{gather*}
and the identity operator.
The equation~\eqref{eq:FineFP} is also extraordinary by the fact that
its operator~$L:=\mathrm D_t+x\mathrm D_y-x^2\mathrm D_x^2$
can be represented in terms of its Lie-symmetry operators,
\[
L=\mathrm P^t-\mathrm D^2+\frac12(\mathrm P^y\mathrm K+\mathrm K\mathrm P^y).
\]
The operator~$\mathrm P^t$ is equivalent on the solutions of~\eqref{eq:FineFP} to the element
\[
\hat{\mathrm P}^t
:=\mathrm D^2-\frac12(\mathrm P^y\mathrm K+\mathrm K\mathrm P^y)
=\mathrm D^2-\mathrm D-\mathrm K\mathrm P^y
\]
of the associative algebra $\Upsilon_{\mathfrak f}$ generated by
$\mathrm P^y$, $\mathrm D$ and~$\mathrm K$ since $\mathrm P^t-\hat{\mathrm P}^t=L$.
Moreover, the operator~$\hat{\mathrm P}^t$ is a Casimir operator of the Levi factor
$\mathfrak f=\langle\mathcal P^y,\mathcal D,\mathcal K\rangle$ of the algebra~$\mathfrak g^{\rm ess}$,
see Section~\ref{sec:FineFPMIA}.

Based on a preliminary analysis in~\cite{kova2024b}, we conjecture the following.

\begin{conjecture}\label{conj:GenSyms}
Up to the equivalence of generalized symmetries
and neglecting the Lie symmetries associated with the linear superposition of solutions,
generalized symmetries of the fine Kolmogorov backward equation~\eqref{eq:FineFP}
are exhausted by its reduced linear generalized symmetries.
The Lie algebra~$\hat\Lambda$ of these symmetries is generated from the simplest linear generalized symmetry $u\p_u$
by acting with the recursion operators~$\mathrm P^y$, $\mathrm D$ and~$\mathrm K$ of the equation~\eqref{eq:FineFP},
$\hat\Lambda=\{(\mathrm Qu)\p_u\mid\mathrm Q\in\Upsilon_{\mathfrak f}\}$.
\end{conjecture}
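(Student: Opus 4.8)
The plan is to run the general machinery for generalized symmetries of linear homogeneous equations (see~\cite[Chapter~5]{olve1993A} and~\cite{kova2024b}) while exploiting the structure of $\mathfrak f\simeq{\rm sl}(2,\mathbb R)$. First I would recast the problem algebraically. For a linear homogeneous equation, every generalized symmetry is, modulo the equivalence of generalized symmetries and the trivial symmetries vanishing on the solution set, equivalent to one with characteristic of the form $\mathrm{Q}u+f$, where $\mathrm{Q}$ is a differential operator in total derivatives and $f$ is a solution of~\eqref{eq:FineFP}; the summand $f\p_u$ is the superposition symmetry we agree to neglect. Hence the problem becomes the description, up to equivalence, of the \emph{symmetry operators}~$\mathrm{Q}$ of~\eqref{eq:FineFP}, i.e., the total-derivative operators mapping its solution set into itself. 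Passing to reduced form — repeatedly using the equation to eliminate all total $t$-derivatives — replaces such a $\mathrm{Q}$ by an operator in $\mathrm{D}_x$, $\mathrm{D}_y$ with coefficients depending on $(t,x,y)$, and a direct computation turns the symmetry requirement into the single operator identity $\p_t\mathrm{Q}=[\hat{\mathrm{P}}^t,\mathrm{Q}]$, where $\hat{\mathrm{P}}^t=x^2\mathrm{D}_x^2-x\mathrm{D}_y=\mathrm{D}^2-\mathrm{D}-\mathrm{K}\mathrm{P}^y$ is precisely the Casimir operator of~$\mathfrak f$ identified in Section~\ref{sec:SolutionGeneration}.

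The easy inclusion $\{(\mathrm{Q}u)\p_u\mid\mathrm{Q}\in\Upsilon_{\mathfrak f}\}\subseteq\hat\Lambda$ then follows at once. Every generator $\mathrm{P}^y$, $\mathrm{D}$, $\mathrm{K}$ has $t$-independent coefficients, hence so has every $\mathrm{Q}\in\Upsilon_{\mathfrak f}$, so $\p_t\mathrm{Q}=0$; and since $\hat{\mathrm{P}}^t$ is central in $\Upsilon_{\mathfrak f}$ (being the image of the Casimir element of $U(\mathfrak f)$), also $[\hat{\mathrm{P}}^t,\mathrm{Q}]=0$, so the identity above holds. Thus each element of $\Upsilon_{\mathfrak f}$ yields a reduced linear generalized symmetry, the commutator of two such symmetries is again of this form, $u\p_u$ corresponds to the identity operator, and the recursion operators $\mathrm{P}^y$, $\mathrm{D}$, $\mathrm{K}$ generate all of $\Upsilon_{\mathfrak f}$ from $u\p_u$ by the very definition of $\Upsilon_{\mathfrak f}$. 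The relation $\mathrm{P}^t-\hat{\mathrm{P}}^t=L$ shows that $\mathrm{P}^t$ contributes nothing beyond $\hat{\mathrm{P}}^t\in\Upsilon_{\mathfrak f}$, which is why $\mathrm{P}^t$ does not appear among the generators.

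The genuine difficulty — and the reason the statement is posed as a conjecture — is the reverse inclusion: every reduced linear generalized symmetry operator, i.e., every $\mathrm{Q}(t)$ solving $\p_t\mathrm{Q}=[\hat{\mathrm{P}}^t,\mathrm{Q}]$, must lie in $\Upsilon_{\mathfrak f}$. I would split this into two parts. The first is to eliminate the $t$-dependence. Since $\hat{\mathrm{P}}^t$ does not depend on~$t$, the identity integrates to $\mathrm{Q}(t)=\exp(t\,\mathrm{ad}_{\hat{\mathrm{P}}^t})\mathrm{Q}(0)$, and requiring $\mathrm{Q}$ to be an honest finite-order differential operator with coefficients polynomial (or at least analytic) in~$t$ forces $\mathrm{Q}(0)$ to be $\mathrm{ad}_{\hat{\mathrm{P}}^t}$-nilpotent. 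Here one must be careful: for the $(1+1)$-dimensional heat equation this step fails, because the Galilean and conformal Lie-symmetry operators are genuinely $t$-dependent; the hope is that for~\eqref{eq:FineFP}, where every Lie-symmetry operator is already $t$-independent in reduced form, the equivariance of $\mathrm{ad}_{\hat{\mathrm{P}}^t}$ for the adjoint $\mathfrak f$-action on the operator ring, together with the $\mathfrak f$-invariance of $\hat{\mathrm{P}}^t$, upgrades nilpotency to $[\hat{\mathrm{P}}^t,\mathrm{Q}(0)]=0$, so that $\mathrm{Q}(t)\equiv\mathrm{Q}(0)$. The second part — which I expect to be the main obstacle — is to prove that the centralizer of $\hat{\mathrm{P}}^t=x^2\mathrm{D}_x^2-x\mathrm{D}_y$ in the ring of differential operators in $x,y$ with polynomial coefficients coincides with $\Upsilon_{\mathfrak f}$. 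A principal-symbol argument is insufficient, since $\{x^2\xi^2,\cdot\}$ annihilates every function of $x\xi$, $\eta$ and~$y$, so the symbol-level centralizer is vastly larger than $\Upsilon_{\mathfrak f}$; the restriction to $\Upsilon_{\mathfrak f}$ must come from the subprincipal and lower-order parts of the commutator together with the requirement that the coefficients be globally defined polynomials. Turning this into a rigorous order-by-order induction — rather than merely checking it through low orders as in the preliminary analysis of~\cite{kova2024b} — is precisely what is missing. (A shortcut reducing~\eqref{eq:FineFP} to a constant-coefficient model by a point transformation is unavailable: the Kolmogorov term corresponds to the constraint $B_x\neq0$ of the class $\bar{\mathcal F}$, which no point transformation can destroy, so the centralizer really must be computed in the variable-coefficient ring.)
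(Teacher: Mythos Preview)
The paper does not prove this statement at all: it is explicitly labeled a \emph{conjecture}, justified only by ``a preliminary analysis in~\cite{kova2024b}'', and the Conclusion states that ``the proof of this conjecture is sophisticated and cumbersome and it should be the subject of a separate study.'' So there is no paper proof to compare against, and you are right to treat the reverse inclusion as open.

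Your outline is reasonable and honest about its own gaps, but two points deserve sharpening. First, the reduction to $t$-independence is shakier than you suggest: from $\p_t\mathrm Q=[\hat{\mathrm P}^t,\mathrm Q]$ and $\mathrm{ad}_{\hat{\mathrm P}^t}$-nilpotency of $\mathrm Q(0)$ you get that $\mathrm Q(t)$ is a polynomial in~$t$ with operator coefficients, but there is no obvious mechanism forcing that polynomial to have degree zero; the appeal to ``$\mathfrak f$-equivariance'' does not by itself exclude genuinely $t$-polynomial symmetry operators, and you would need an explicit argument (e.g., analyzing the leading $t$-coefficient separately and showing it must already lie in $\Upsilon_{\mathfrak f}$, then inducting downward). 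Second, your diagnosis of the centralizer problem is correct in spirit but slightly mis-framed: the relevant ring is not polynomial-coefficient operators in $x,y$ but arbitrary smooth-coefficient operators (the paper imposes no polynomial restriction), so the obstruction to a symbol argument is even worse than you indicate, and any order-by-order induction must control smooth coefficient functions, not just polynomials.
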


In fact, the operators $\mathrm P^y$ and~$\mathrm K$ suffice for generating the algebra~$\Upsilon_{\mathfrak f}$
since $\mathrm D=\frac12(\mathrm P^y\mathrm K-\mathrm K\mathrm P^y)$.
However, considering~$\{\mathrm P^y,\mathrm D,\mathrm K\}$ as the canonical generating set for~$\Upsilon_{\mathfrak f}$
is natural and beneficial since fixing then an ordering in this set,
we get the associated canonical basis of~$\Upsilon_{\mathfrak f}$.

\begin{lemma}\label{lem:BasisUpsilon}
The monomials $\mathbf Q^\alpha:=(\mathrm Q^1)^{\alpha_1}(\mathrm Q^2)^{\alpha_2}(\mathrm Q^3)^{\alpha_3}$,
where $\alpha=(\alpha_1,\alpha_2,\alpha_3)\in\mathbb N_0^{\,\,3}$ and
$(\mathrm Q^1,\mathrm Q^2,\mathrm Q^3)$ is any fixed ordering of~$\mathrm P^y$, $\mathrm D$ and~$\mathrm K$,
constitute a basis of the algebra $\Upsilon_{\mathfrak f}$.
\end{lemma}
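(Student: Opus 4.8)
The plan is to prove the two assertions implicit in the statement separately: that the monomials $\mathbf Q^\alpha$ span $\Upsilon_{\mathfrak f}$, and that they are linearly independent over~$\mathbb R$. For the spanning part I would first record that the operators $\mathrm P^y$, $\mathrm D$ and $\mathrm K$ satisfy the commutation relations
\[
[\mathrm P^y,\mathrm D]=\mathrm P^y,\qquad
[\mathrm P^y,\mathrm K]=2\mathrm D,\qquad
[\mathrm D,\mathrm K]=\mathrm K,
\]
that is, exactly those of $\mathcal P^y$, $\mathcal D$, $\mathcal K$ from Section~\ref{sec:FineFPMIA}; hence $\langle\mathrm P^y,\mathrm D,\mathrm K\rangle$ is a realization of ${\rm sl}(2,\mathbb R)$ and $\Upsilon_{\mathfrak f}$ is a homomorphic image of the universal enveloping algebra $U({\rm sl}(2,\mathbb R))$. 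The spanning then follows by the usual Poincar\'e--Birkhoff--Witt-type reordering: any word in $\mathrm P^y$, $\mathrm D$, $\mathrm K$ is rewritten, by repeatedly using the above relations to interchange a pair of adjacent out-of-order factors, as an ordered monomial $\mathbf Q^\alpha$ modulo words of strictly smaller length, and one closes the induction on the word length and, within a fixed length, on the number of inversions relative to the chosen order $(\mathrm Q^1,\mathrm Q^2,\mathrm Q^3)$. This argument is routine and works verbatim for every one of the $3!$ orderings.

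The essential part is linear independence, and here I would pass to the associated graded algebra for the filtration in which each of $x$, $y$, $\mathrm D_x$ and $\mathrm D_y$ is assigned degree one (the Bernstein filtration); the resulting associated graded algebra is the commutative polynomial ring $\mathbb R[x,y,\xi_x,\xi_y]$, a domain, on which the principal-symbol map is multiplicative. The symbols of the generators are
\[
A:=\mathrm{symb}\,\mathrm P^y=\xi_y,\qquad
B:=\mathrm{symb}\,\mathrm D=x\xi_x+y\xi_y,\qquad
C:=\mathrm{symb}\,\mathrm K=2xy\xi_x+y^2\xi_y,
\]
of degrees $1$, $2$ and $3$ respectively (the summand $x$ of $\mathrm K$ is of lower degree and does not enter $C$). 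Since there are no zero divisors, $A$, $B$, $C$ are nonzero, and therefore $\mathrm{symb}\,\mathbf Q^\alpha=A^{\alpha_1}B^{\alpha_2}C^{\alpha_3}\ne0$ for every $\alpha$, independently of the ordering of the three factors (the associated graded being commutative). If a nontrivial relation $\sum_\alpha c_\alpha\mathbf Q^\alpha=0$ held in $\Upsilon_{\mathfrak f}$, then taking its component of top filtration degree would produce the polynomial identity $\sum c_\alpha A^{\alpha_1}B^{\alpha_2}C^{\alpha_3}=0$ with not all $c_\alpha$ vanishing, so it remains to verify that the polynomials $A^aB^bC^c$, $(a,b,c)\in\mathbb N_0^{\,3}$, are linearly independent over~$\mathbb R$. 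For this I would equip $\mathbb R[x,y,\xi_x,\xi_y]$ with the monomial order given by the weight vector $(2,1,1,1)$ on $(x,y,\xi_x,\xi_y)$, refined lexicographically; then $\mathrm{LT}\,A=\xi_y$, $\mathrm{LT}\,B=x\xi_x$ and $\mathrm{LT}\,C=2xy\xi_x$ (since $x\xi_x$ outweighs $y\xi_y$ and $xy\xi_x$ outweighs $y^2\xi_y$), whence $\mathrm{LT}(A^aB^bC^c)=2^c\,x^{b+c}y^c\xi_x^{b+c}\xi_y^a$. The triple $(a,b,c)$ is recovered from this monomial as $(\deg_{\xi_y},\deg_x-\deg_y,\deg_y)$, so distinct triples have distinct leading monomials; hence the $A^aB^bC^c$ are linearly independent, which gives the required contradiction and proves the lemma.

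The step I expect to be the main obstacle --- and the reason the independence proof needs this extra care --- is that the three symbols are linearly dependent over the coefficient ring, $C=2yB-y^2A$: in the ordinary order filtration (counting only derivatives) $A$, $B$, $C$ all lie in degree one, their products $A^aB^bC^c$ of a prescribed degree overlap on monomials, and a leading-term argument there breaks down; likewise, evaluating the $\mathbf Q^\alpha$ on elementary test functions such as powers of~$y$ produces coincidences. Passing to the Bernstein filtration is exactly what resolves the degeneracy, because there $A$, $B$, $C$ receive the distinct degrees $1$, $2$, $3$ and the weighted monomial order then separates all of their products; equivalently, the computation shows that $\mathrm{gr}\,\Upsilon_{\mathfrak f}$ is the genuine polynomial ring $\mathbb R[A,B,C]$, so the surjection $U({\rm sl}(2,\mathbb R))\to\Upsilon_{\mathfrak f}$ is an isomorphism. (One could alternatively deduce this isomorphism from the Poincar\'e--Birkhoff--Witt theorem for $U({\rm sl}(2,\mathbb R))$ together with the faithfulness of the representation of $\Upsilon_{\mathfrak f}$ on $\mathbb R[x,y]$, but the symbol computation above is self-contained and treats all orderings of the generators at once.)
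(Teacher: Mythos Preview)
Your proof is correct but takes a genuinely different route from the paper in the linear-independence part. Both arguments handle spanning by the same PBW-style reordering. For independence, the paper uses the \emph{order} filtration: it applies $\mathrm S$ to ${\rm e}^{\mu x+\nu y}$, obtaining a polynomial in $(x,y,\mu,\nu)$, extracts the top part in $(\mu,\nu)$, and then invokes the functional independence of $z_1=\nu$, $z_2=\mu x+\nu y$, $z_3=2\mu xy+\nu y^2+x$ as functions of \emph{four} variables. Your worry that $C=2yB-y^2A$ spoils the order-filtration approach is therefore not quite on target: that relation is over $\mathbb R[x,y]$, while over $\mathbb R$ the three symbols are algebraically independent, which is exactly what the paper exploits by treating $x,y$ as extra variables alongside $\mu,\nu$. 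Your Bernstein-filtration argument is a clean alternative that avoids this subtlety altogether by assigning the generators distinct degrees $1,2,3$, making a direct leading-monomial separation possible; as a bonus it yields the isomorphism $\Upsilon_{\mathfrak f}\simeq\mathfrak U({\rm sl}(2,\mathbb R))$ immediately, whereas the paper derives that as a separate corollary. One small point of precision: the ``top filtration degree'' component contains only those $\alpha$ of maximal Bernstein degree, so the relation $\sum c_\alpha A^{\alpha_1}B^{\alpha_2}C^{\alpha_3}=0$ should be read with that restriction (and then iterated down); since you prove independence of \emph{all} monomials $A^aB^bC^c$, this is harmless.
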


\begin{proof}
The algebra	$\Upsilon_{\mathfrak f}$ has a canonical filtration
associated with the order $\ord\mathrm Q$ of the differential operators $\mathrm Q$,
\[
\Upsilon_{\mathfrak f}=\bigcup_{k=0}^\infty\Upsilon_{\mathfrak f,k}\quad
\text{with}\quad
\Upsilon_{\mathfrak f,k}=\{\mathrm Q\in\Upsilon_{\mathfrak f}\mid\ord\mathrm Q\leqslant k\},\quad
\Upsilon_{\mathfrak f,k}\Upsilon_{\mathfrak f,l}\subset\Upsilon_{\mathfrak f,k+l},\quad k,l\in\mathbb N_0.
\]
It suffices to show that the monomials
$\mathbf Q^\alpha$ with $|\alpha|:=\alpha_1+\alpha_2+\alpha_3\leqslant k$ constitute a basis of~$\Upsilon_{\mathfrak f,k}$.
It is clear that these monomials span $\Upsilon_{\mathfrak f,k}$ for any $k\in\mathbb N_0$,
since any monomial $\mathrm S^1\cdots\mathrm S^l$ with $\mathrm S^1,\dots,\mathrm S^k\in\{\mathrm P^y,\mathrm D,\mathrm K\}$
can be represented as a linear combination of the monomials
$\mathbf Q^\alpha$ with $|\alpha|\leqslant l$
due to the commutation relations between operators in $\{\mathrm P^y,\mathrm D,\mathrm K\}$.

It remains to establish the linear independence of the monomials $\mathbf Q^\alpha$ with $|\alpha|\leqslant k$,
which we will do by the induction with respect to the order~$k$.
The induction basis with $k=0$ is obvious.
Suppose that $\mathbf Q^\alpha$, $|\alpha|\leqslant k$ are linearly independent and
\begin{gather}\label{eq:LinIndep}
\mathrm S:=\sum_{|\alpha|\leqslant k+1}\lambda_\alpha\mathbf Q^\alpha=0
\end{gather}
for some $\lambda_\alpha\in\mathbb R$, $|\alpha|\leqslant k+1$.
Then ${\rm e}^{-\mu x-\nu y}\mathrm S{\rm e}^{\mu x+\nu y}$
is an identically vanishing polynomial in $(x,y,\mu,\nu)$.
The part of this polynomial that includes all the terms of order~$k+1$ with respect to $(\mu,\nu)$,
i.e., $\sum_{|\alpha|=k+1}\lambda_\alpha z_1^{\alpha_1}z_2^{\alpha_2}z_3^{\alpha_3}$,
where $z_1:=\nu$, $z_2:=\mu x+\nu y$ and $z_3:=\nu y^2+2\mu xy+x$, also vanishes.
Since $z_1$, $z_2$ and $z_3$ are functionally independent as functions of $(x,y,\mu,\nu)$,
this implies that $\lambda_\alpha=0$ if $|\alpha|=k+1$.
Invoking the induction hypothesis, the equality~\eqref{eq:LinIndep}
implies $\lambda_\alpha=0$ for every $\alpha$ with $|\alpha|\leqslant k+1$.
\end{proof}

\begin{remark}\label{rem:BasisUpsilon}
In Lemma~\ref{lem:BasisUpsilon},
the operators~$\mathrm P^y$, $\mathrm D$ and~$\mathrm K$ can be replaced
by linear combinations of the form $a_{i1}\mathrm P^y+a_{i2}\mathrm D+a_{i3}\mathrm K+a_{i0}$,
where $\det(a_{ij})_{i,j=1,2,3}\ne0$.
A more sophisticated basis is convenient for a part of the further consideration.
We split the space~$\Upsilon_{\mathfrak f}$ into two subspaces~$\Gamma_1$ and~$\Gamma_2$
that are spanned by the monomials $\mathrm K^{\alpha_1}(\mathrm P^y)^{\alpha_2}\mathrm D^{\alpha_3}$
with $\alpha_1\leqslant\alpha_2$ and $\alpha_1>\alpha_2$, respectively.
In view of the relations
$\mathrm K\mathrm P^y=\mathrm D^2-\mathrm D-\hat{\mathrm P}^t$ and
$\mathrm P^y\mathrm D=(\mathrm D+1)\mathrm P^y$,
the subspace~$\Gamma_1$ coincides with the space of polynomials in $(\hat{\mathrm P}^t,\mathrm P^y,\mathrm D)$
and the subspace~$\Gamma_2$ possesses the bases constituted by the monomials
$\mathrm K^{\alpha_1}(\hat{\mathrm P}^t)^{\alpha_2}\mathrm D^{\alpha_3}$ with $\alpha_1>0$.
\end{remark}

\begin{corollary}\label{cor:IsomToUnivEnv}
The algebra $\Upsilon_{\mathfrak f}$ is isomorphic to the universal enveloping algebra $\mathfrak U(\mathrm{sl}(2,\mathbb R))$
of the real rank-two special linear Lie algebra $\mathrm{sl}(2,\mathbb R)$.
\end{corollary}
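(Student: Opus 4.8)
The plan is to exhibit an explicit surjective algebra homomorphism $\Phi\colon\mathfrak U(\mathrm{sl}(2,\mathbb R))\to\Upsilon_{\mathfrak f}$ and then show it is injective by a dimension count on each filtration level, using Lemma~\ref{lem:BasisUpsilon} on one side and the Poincar\'e--Birkhoff--Witt theorem on the other. First I would recall that the Lie-symmetry operators $\mathrm P^y$, $\mathrm D$, $\mathrm K$ satisfy, as differential operators in total derivatives acting on functions of $(t,x,y)$, the same commutation relations as the vector fields $-\mathcal P^y$, $-\mathcal D$, $-\mathcal K$, namely $[\mathrm P^y,\mathrm D]=\mathrm P^y$, $[\mathrm P^y,\mathrm K]=2\mathrm D$, $[\mathrm D,\mathrm K]=\mathrm K$ (this is a routine verification, or follows from the fact that $u\mapsto(\mathrm Q u)\p_u$ is a Lie-algebra antihomomorphism onto the linear generalized symmetries, combined with the commutation relations in $\mathfrak g^{\rm ess}$ stated in Section~\ref{sec:FineFPMIA}). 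Since $\mathfrak f\simeq\mathrm{sl}(2,\mathbb R)$, the linear span $\mathbb R\mathrm P^y+\mathbb R\mathrm D+\mathbb R\mathrm K$, equipped with the operator commutator, is a copy of $\mathrm{sl}(2,\mathbb R)$ inside the associative algebra of differential operators in total derivatives, and $\Upsilon_{\mathfrak f}$ is by definition the (unital) associative subalgebra it generates. By the universal property of the universal enveloping algebra, the inclusion $\mathrm{sl}(2,\mathbb R)\hookrightarrow\Upsilon_{\mathfrak f}$ extends uniquely to a unital algebra homomorphism $\Phi\colon\mathfrak U(\mathrm{sl}(2,\mathbb R))\to\Upsilon_{\mathfrak f}$, which is surjective because its image is a subalgebra containing the generators of $\Upsilon_{\mathfrak f}$.

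For injectivity I would compare the two algebras level by level with respect to their natural filtrations. On the enveloping-algebra side, PBW gives a basis of $\mathfrak U(\mathrm{sl}(2,\mathbb R))$ consisting of the ordered monomials $e^{\alpha_1}h^{\alpha_2}f^{\alpha_3}$ in any fixed basis $(e,h,f)$ of $\mathrm{sl}(2,\mathbb R)$, and those with $\alpha_1+\alpha_2+\alpha_3\le k$ form a basis of the $k$-th filtered piece $\mathfrak U_k$. Choosing $(e,h,f)$ to be the preimages of $(\mathrm P^y,\mathrm D,\mathrm K)$, the homomorphism $\Phi$ sends this PBW basis of $\mathfrak U_k$ precisely to the monomials $\mathbf Q^\alpha=(\mathrm P^y)^{\alpha_1}\mathrm D^{\alpha_2}\mathrm K^{\alpha_3}$ with $|\alpha|\le k$, which by Lemma~\ref{lem:BasisUpsilon} form a basis of $\Upsilon_{\mathfrak f,k}$. (Here one must check that $\Phi$ respects the filtrations, i.e. $\Phi(\mathfrak U_k)\subseteq\Upsilon_{\mathfrak f,k}$; this is immediate since each generator has order $1$ and a product of $m$ generators has order $\le m$.) Thus $\Phi$ maps a basis of $\mathfrak U_k$ bijectively onto a basis of $\Upsilon_{\mathfrak f,k}$ for every $k$, so $\Phi$ restricted to each $\mathfrak U_k$ is a linear isomorphism onto $\Upsilon_{\mathfrak f,k}$; taking the union over $k$ shows $\Phi$ is an isomorphism of filtered associative algebras.

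The only genuinely nontrivial ingredient is Lemma~\ref{lem:BasisUpsilon}, which supplies the ``just enough, no more'' count on the $\Upsilon_{\mathfrak f}$ side and has already been established; everything else is the universal property plus PBW plus bookkeeping of filtration degrees. I do not anticipate a serious obstacle, but the one point deserving care is making sure the commutator relations among $\mathrm P^y$, $\mathrm D$, $\mathrm K$ as operators in total derivatives genuinely reproduce $\mathrm{sl}(2,\mathbb R)$ with the correct signs and constants (so that the generators of $\mathfrak U(\mathrm{sl}(2,\mathbb R))$ map to algebraically independent-up-to-PBW elements rather than satisfying some accidental extra relation); once that is pinned down, the isomorphism follows formally. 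One can then remark that this realizes $\Upsilon_{\mathfrak f}$ concretely as the image of $\mathfrak U(\mathrm{sl}(2,\mathbb R))$ under a faithful representation by differential operators, and that $\hat{\mathrm P}^t$ corresponds under $\Phi$ to (a scalar shift of) the Casimir element of $\mathrm{sl}(2,\mathbb R)$, consistent with the identity $L=\mathrm P^t-\hat{\mathrm P}^t$ already recorded above.
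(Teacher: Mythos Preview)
Your proposal is correct and follows essentially the same approach as the paper: construct the algebra homomorphism $\mathfrak U(\mathrm{sl}(2,\mathbb R))\to\Upsilon_{\mathfrak f}$ via the universal property (surjectivity being immediate since the image contains the generators), and then deduce injectivity from the fact that the PBW basis is sent to the basis of~$\Upsilon_{\mathfrak f}$ provided by Lemma~\ref{lem:BasisUpsilon}. Your filtration-by-filtration dimension count is a slightly more explicit phrasing of the same step, but the substance is identical.
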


\begin{proof}
The correspondence $-\mathcal P^y\mapsto\mathrm P^y$, $-\mathcal D\mapsto\mathrm D$ and $-\mathcal K\mapsto\mathrm K$
by linearity extends to the Lie algebra homomorphism~$\varphi$ from~$\mathfrak f$
to the Lie algebra~\smash{$\Upsilon_{\mathfrak f}^{(-)}$}
associated with the associative algebra~$\Upsilon_{\mathfrak f}$,
$\varphi\colon\mathfrak f\to\smash{\Upsilon_{\mathfrak f}^{(-)}}$.
By the universal property of the universal enveloping algebra~$\mathfrak U(\mathfrak f)$,
the Lie algebra homomorphism~$\varphi$ extends to the (unital) associative algebra homomorphism
$\hat\varphi\colon\mathfrak U(\mathfrak f)\to\Upsilon_{\mathfrak f}$,
i.e., $\varphi=\hat\varphi\circ\iota$ as homomorphisms of vector spaces,
where $\iota\colon\mathfrak f\to\mathfrak U(\mathfrak f)$ is the canonical embedding of $\mathfrak f$ in $\mathfrak U(\mathfrak f)$.
Since the algebra $\Upsilon_{\mathfrak f}$ is generated by $\varphi(\mathfrak f)$,
the homomorphism $\hat\varphi$ is surjective.
Fixed an ordering of the basis elements of~$\mathfrak f$,
in view of the Poincar\'e--Birkhoff--Witt theorem, the homomorphism $\hat\varphi$ maps
the corresponding Poincar\'e--Birkhoff--Witt basis of $\mathfrak U(\mathfrak f)$ to a basis of $\Upsilon_{\mathfrak f}$.
Therefore, $\hat\varphi$ is an isomorphism.
The isomorphism $\mathfrak f\simeq\mathrm{sl}(2,\mathbb R)$
implies the isomorphism $\mathfrak U(\mathfrak f)\simeq\mathfrak U(\mathrm{sl}(2,\mathbb R))$.
\end{proof}

\
Given a linear system of differential equations~$\mathcal L$
and its linear generalized symmetry with the associated linear differential operator~$\mathscr Q$,
we have that \emph{$\mathscr Qh$ is a solution of~$\mathcal L$ whenever $h$ is}.
This can be interpreted as generating solutions from known one using linear generalized symmetries~\cite{kova2024b};
see~\cite{shte1987a,shte1989a} for first examples of such generation.
In view of Conjecture~\ref{conj:GenSyms} and Lemma~\ref{lem:BasisUpsilon},
the most general formula for generating solutions of the fine Kolmogorov backward equation~\eqref{eq:FineFP}
using its linear generalized symmetries from the seed solution $u=h(t,x,y)$ takes the form
\begin{gather}\label{eq:FineFPSolutionGenerationByLieSymOps}
\tilde u=\sum_{\alpha\in\mathbb N_0^{\,\,3}}\lambda_\alpha\mathbf Q^\alpha h(t,x,y).
\end{gather}
Here all but finitely many (real) constants $\lambda_\alpha$ are equal to zero,
and any ordering in $\{\mathcal P^y,\mathcal D,\mathcal K\}$ can be fixed.

However, depending on~$h$ and~$\lambda_\alpha$, $\alpha\in\mathbb N_0^{\,\,3}$,
this formula can result in a known or even the zero solution.
Let us analyze which solutions generated according~\eqref{eq:FineFPSolutionGenerationByLieSymOps}
from Lie invariant solutions of~\eqref{eq:FineFP} may be of interest.
It suffices to consider only the solutions that are invariant
with respect to one of the inequivalent one-dimensional subalgebras of~$\mathfrak g^{\rm ess}$
listed in Corollary~\ref{cor:GessIneqSubalgebras}.
Such solutions were described in Section~\ref{sec:LieReductions}.
We exclude the subalgebra~$\mathfrak s_{1.2}$ as that admitting a trivial zero space of invariant solutions.

In view of Corollary~\ref{cor:IsomToUnivEnv},
the operator~$\hat{\mathrm P}^t$ generates the center~$\mathrm Z$ of the algebra $\Upsilon_{\mathfrak f}$.
Therefore, for any subalgebra~$\mathfrak s$ of $\mathfrak g^{\rm ess}$,
any polynomial of~$\hat{\mathrm P}^t$ maps the space of $\mathfrak s$-invariant solutions of~\eqref{eq:FineFP} to itself.
It is also clear that given an element~$Q$ of~$\mathfrak g^{\rm ess}$
and the corresponding operator~$\mathrm Q$ from~$\Upsilon_{\mathfrak f}$,
the action by any element from the principal left ideal $(\mathrm Q):=\Upsilon_{\mathfrak f}\mathrm Q$
on any $\langle Q\rangle$-invariant solution of~\eqref{eq:FineFP} results in the zero solution.
More generally,  any element of $(\mathrm Q)+\mathrm C(\mathrm Q)$, i.e.,
the sum of any elements from the principal left ideal $(\mathrm Q)$
and from the centralizer~$\mathrm C(\mathrm Q)$ of~$\mathrm Q$ in~$\Upsilon_{\mathfrak f}$
(obviously containing~$\mathrm Z$)
maps the space of $\langle Q\rangle$-invariant solutions of~\eqref{eq:FineFP} to itself.

\medskip\par\noindent
1.1$^\mu$. For the above reason, the Lie-symmetry operator $\hat{\mathrm P}^t-\mu$,
which corresponds to the Lie-symmetry vector field $-\mathcal P^t-\mu\mathcal I$ of~\eqref{eq:FineFP},
commutes with any monomial~$\mathbf Q^\alpha$.
Hence the action by any element from~$\Upsilon_{\mathfrak f}$ maps
the set of $\mathfrak s_{1.1}^\mu$-invariant solutions to itself.
This implies that the solution families~\eqref{eq:SolFineFPHeat1} and~\eqref{eq:SolFineFPHeatWithZ2-2Pot}
of the equation~\eqref{eq:FineFP} cannot be extended
using the generation of its solutions by the action of its Lie-symmetry operators
according to~\eqref{eq:FineFPSolutionGenerationByLieSymOps}.
Another manifestation of the above phenomenon is that
each Lie-symmetry vector field of~\eqref{eq:SolFineFPHeat1} induces
a Lie-symmetry vector field of reduced equation~1.1$^\mu$.
Hence the algebra~$\Upsilon_{\mathfrak f}$ induces
the entire algebra generated by Lie-symmetry operators of reduced equation~1.1$^\mu$ if $\mu\ne-\frac3{16}$
and a proper subalgebra of the latter algebra otherwise.

\medskip

The rest of the cases are more complicated to analyze.
We make only particular remarks on them.

\medskip\par\noindent
1.3$^{1,\mu}$.
For the same reason as above, the operator $\mathrm P^y+\hat{\mathrm P}^t-\mu$
commutes with any polynomial in $\mathrm P^y$ and~$\hat{\mathrm P}^t$.
This is why the action by such polynomials
maps the space of $\mathfrak s_{1.3}^{1,\mu}$-invariant solutions to itself.
It can be shown that the centralizer of~$\mathrm P^y$ in~$\Upsilon_{\mathfrak f}$
and thus the centralizer of~$\mathrm P^y+\hat{\mathrm P}^t-\mu$ in~$\Upsilon_{\mathfrak f}$
coincide with the space of these polynomials.

\medskip\par\noindent
1.4$^{\delta'}$.
It is clear that for any $\delta'\in\{0,1\}$ the operator $\mathrm P^y-\delta'$ commutes with $\mathrm P^y$.
Hence the space of $\mathfrak s_{1.4}^{\delta'}$-invariant solutions is mapped to itself
at least by the action by any polynomial in $\mathrm P^y$ and~$\hat{\mathrm P}^t$.

Let $\delta'=0$.
In view of Remark~\ref{rem:BasisUpsilon},
we have the representation $\Upsilon_{\mathfrak f}=\Gamma_1\dotplus\Gamma_2$.
Since $\Gamma_1$ is the space of polynomials in \smash{$(\hat{\mathrm P}^t,\mathrm P^y,\mathrm D)$},
any element of it acts within the space of $\mathfrak s_{1.4}^0$-invariant solutions.
This is why to generate new solutions of~\eqref{eq:FineFP} from solutions of the form~\eqref{eq:SolFineFPHeat2}
within the framework discussed,
up to linearly combining solutions,
it suffices to iteratively act by the operator~$\mathrm K$,
see Section~\ref{sec:GenReductions} for explicit formulas and the relation to generalized reductions.

\medskip\par\noindent
1.5$^{\nu,\mu}$.
Recall that this case corresponds to the two subalgebra families listed in Corollary~\ref{cor:GessIneqSubalgebras},
\smash{$\{\mathfrak s_{1.5}^{\nu',\mu}\}$} and $\{\mathfrak s_{1.6}^{\nu}\}$,
see the beginning of Section~\ref{sec:LieReductions}.
Since $\mathrm K\mathrm D=(\mathrm D-1)\mathrm K$ and $\mathrm P^y\mathrm D=(\mathrm D+1)\mathrm P^y$, we have
\[
\mathrm K^{\alpha_1}(\mathrm P^y)^{\alpha_2}\mathrm D^{\alpha_3}(\mathrm D+\nu\hat{\mathrm P}^t-\mu)=
(\mathrm D+\nu\hat{\mathrm P}^t-\mu-\alpha_1+\alpha_2)\mathrm K^{\alpha_1}(\mathrm P^y)^{\alpha_2}\mathrm D^{\alpha_3},
\]
In other words, the monomial $\mathrm K^{\alpha_1}(\mathrm P^y)^{\alpha_2}\mathrm D^{\alpha_3}$
maps the space of $\{\mathfrak s_{1.5}^{\nu,\mu}\}$-invariant solutions
to the space of \smash{$\{\mathfrak s_{1.5}^{\nu,\tilde\mu}\}$}-invariant solutions, where $\tilde\mu=\mu+\alpha_1-\alpha_2$.
Hence any element of~$\Upsilon_{\mathfrak f}$ maps
the space spanned by the \smash{$\{\mathfrak s_{1.5}^{\nu,\tilde\mu}\}$}-invariant solutions
with $\mu$ running through~$\mathbb R$ into itself.

\medskip\par\noindent
1.7$^{\nu,\mu}$.
We can only state that the action by each polynomial in $\mathrm P^y+\mathrm K$ and~$\hat{\mathrm P}^t$
maps the space of $\mathfrak s_{1.7}^{\nu,\mu}$-invariant solutions to itself
since such polynomials commute with the operator $\mathrm P^y+\mathrm K+\nu\hat{\mathrm P}^t-\mu$.

\section{Generalized reductions}\label{sec:GenReductions}

Linear generalized symmetries of a linear system~$\mathcal L$ of differential equations
in the unknown functions $u=(u^1,\dots,u^m)$
can be used not only for generating solutions from known ones that are obtained by other methods
but also for constructing solutions from scratch.
The basis of the method of linear generalized reduction is the fact that
if $\mathrm Q$ is a symmetry operator of~$\mathcal L$,
then the differential constraint $\mathrm Q u=0$ is compatible with~$\mathcal L$.
Nevertheless, the procedure of finding generalized invariant solutions is not as simple
as its counterpart for Lie-invariant solutions at all.
To construct a generalized ansatz, it is necessary to construct
the general solution of the system $\mathrm Q u=0$ in an explicit form,
which is possible only in particular cases.
Moreover, it often happens that the solutions invariant with respect to a generalized symmetry
can be obtained in a much simpler way by generation from known solutions.

We present two peculiar examples of generalized reductions of the equation~\eqref{eq:FineFP},
and the first series of generalized invariant solutions
is related to the results of Section~\ref{sec:SolutionGeneration}.

The equation~\eqref{eq:FineFP} admits generalized symmetries $\big({\rm D}_y^nu\big)\p_u$, $n\in\mathbb N$,
which are generated from its elementary Lie symmetry $u\p_u$
by acting with the recursion operators  $\mathrm P^y={\rm D}_y$
associated with its Lie symmetry $-\mathcal P^y=-\p_y$.
This choice of a Lie symmetry for generating generalized symmetries is justified
by the consideration of codimension-one Lie reduction~$1.4^0$ in Section~\ref{sec:LieReductions}.
The differential constraint ${\rm D}_y^nu=0$ can be easily integrated,
which gives the corresponding generalized ansatz
\begin{gather}\label{eq:GenAnsatz1}
u=\sum_{s=0}^{n-1}w^s(t,x)\frac{y^s}{s!}.
\end{gather}
Here and in what follows $w^s$ are sufficiently smooth functions of their arguments,
the index~$s$ runs from~0 to~$n-1$, and $\epsilon=\sgn x$.

Substituting the generalized ansatz~\eqref{eq:GenAnsatz1} into~\eqref{eq:FineFP}
and splitting the resulting equation with respect to~$y$,
we obtain the system $w^s_t=x^2w^s_{xx}-xw^{s+1}$ with $w^n:=0$.
The point transformation
$z_1=t$, $z_2=\ln|x|$, $\tilde w^s={\rm e}^{\frac14t}|x|^{-1/2}w^s$
maps it to the system
\begin{gather*}
\mathcal H_n\colon\quad \tilde w^s_1=\tilde w^s_{22}-\epsilon{\rm e}^{z_2}\tilde w^{s+1}
\end{gather*}
with $\tilde w^n:=0$,
which is easier for integrating since its last equation is the (1+1)-dimensional linear heat equation.
This is why the general solution of the system~$\mathcal H_n$ can be expressed
in terms of $n$~arbitrary solutions of this equation.
We can obtain a particular solution of the system~$\mathcal H_n$
by prolonging any solution of~$\mathcal H_{n'}$ with $n'<n$ by zeros to~$\tilde w^s$, $s=n',\dots,n-1$,
and hence such solutions can be considered as inessential.

\begin{lemma}
The solutions of~$\mathcal H_n$ that are essential up to subtracting the prolonged solutions of~$\mathcal H_{n'}$, $n'=1,\dots,n-1$,
and thus have $\tilde w^{n-1}\ne0$, take the form
\begin{gather*}
\tilde w^s=\epsilon^s{\rm e}^{(n-s-1)z_2}
\left(\prod_{k=n-s}^{n-1}(2k\p_2+k^2)\right)\theta(z_1,z_2),\quad
\end{gather*}
where $\theta=\theta(z_1,z_2)$ is an arbitrary solution of the heat equation.
\end{lemma}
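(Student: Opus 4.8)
The plan is to exploit the triangular structure of the system~$\mathcal H_n$: its $s$-th equation couples only $\tilde w^s$ and $\tilde w^{s+1}$, and the top one ($s=n-1$, recalling $\tilde w^n:=0$) is the heat equation $\tilde w^{n-1}_1=\tilde w^{n-1}_{22}$. As preliminaries I would record two elementary facts. First, the conjugation identity ${\rm e}^{-cz_2}(\p_1-\p_2^2){\rm e}^{cz_2}=\p_1-(\p_2+c)^2$ for any constant~$c$, which follows from ${\rm e}^{-cz_2}\p_2{\rm e}^{cz_2}=\p_2+c$ together with the fact that $\p_1$ commutes with multiplication by ${\rm e}^{cz_2}$. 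Second, writing $\mathrm L_k:=2k\p_2+k^2$, these operators have constant coefficients, hence pairwise commute and send heat solutions to heat solutions; consequently, if $v$ solves the heat equation then $(\p_1-(\p_2+c)^2)v=-(2c\p_2+c^2)v=-\mathrm L_c v$, since $(\p_2+c)^2=\p_2^2+2c\p_2+c^2$ and $(\p_1-\p_2^2)v=0$.

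For the existence part I would substitute $\tilde w^s=\epsilon^s{\rm e}^{(n-1-s)z_2}v^s$ into~$\mathcal H_n$; this ansatz is suggested by the wish to absorb the source term $-\epsilon{\rm e}^{z_2}\tilde w^{s+1}$ while keeping the reduced functions in the heat class. Using $\epsilon^2=1$, the conjugation identity with $c=c_s:=n-1-s$, and the fact that the exponents fit because $1+c_{s+1}=c_s$, the whole system collapses to the untwisted triangular system $(\p_1-(\p_2+c_s)^2)v^s=-v^{s+1}$, $s=0,\dots,n-1$, $v^n:=0$. Now I would set $v^s:=\bigl(\prod_{k=n-s}^{n-1}\mathrm L_k\bigr)\theta$ for a fixed heat solution~$\theta$; then $v^0=\theta$, each $v^s$ is again a heat solution, and $v^{s+1}=\mathrm L_{n-1-s}v^s=\mathrm L_{c_s}v^s$ by the product structure, where for $s=n-1$ this reads $v^n=\mathrm L_0v^{n-1}=0$ automatically since $\mathrm L_0=0$. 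By the second preliminary fact $(\p_1-(\p_2+c_s)^2)v^s=-\mathrm L_{c_s}v^s=-v^{s+1}$, so every equation is satisfied. Undoing the substitution yields exactly the asserted formula, and $\tilde w^{n-1}\ne0$ amounts to $\theta$ not being a linear combination of the exceptional heat solutions ${\rm e}^{k^2z_1/4-kz_2/2}$, $1\le k\le n-1$, which span the kernel of $\prod_{k=1}^{n-1}\mathrm L_k$ on heat solutions.

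It then remains to show that, up to subtracting zero-prolonged solutions of the lower systems, these exhaust the solutions of~$\mathcal H_n$ with $\tilde w^{n-1}\ne0$. Here I would use the triangular structure once more: solving $\mathcal H_n$ recursively from $s=n-1$ downward, $\tilde w^{n-1}$ is an arbitrary heat solution and each further $\tilde w^s$ is then determined up to an additive heat solution; in particular the solutions with $\tilde w^{n-1}=0$ are precisely the prolongations by zeros of solutions of~$\mathcal H_{n-1}$ (which already subsume those of all lower $\mathcal H_{n'}$), so the class of a solution modulo such prolongations is fixed by its top component $\tilde w^{n-1}$. Since the top components of the constructed family, $\tilde w^{n-1}=\epsilon^{n-1}\bigl(\prod_{k=1}^{n-1}\mathrm L_k\bigr)\theta$, run over all heat solutions as $\theta$ does --- because $\prod_{k=1}^{n-1}\mathrm L_k$ commutes with the heat operator and is invertible on a suitable class of heat solutions (e.g., those with Schwartz data, on which the Fourier multiplier $\prod_{k=1}^{n-1}(2{\rm i}\xi+k)$, never vanishing for real~$\xi$, is invertible) --- every solution with nonvanishing top component agrees, up to a prolonged solution, with a member of the family.

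The step I expect to be the main obstacle is this last completeness claim: one must pin down the function class in which $\mathcal H_n$ and the heat equation are considered, so that the invertibility of $\prod_{k=1}^{n-1}\mathrm L_k$ on heat solutions (transparent for Schwartz data, less so in full generality) is legitimate, and one must match the phrase ``up to subtracting prolonged solutions'' precisely with the seeds~$\theta$ that yield genuinely essential solutions. The verification that the displayed formula solves~$\mathcal H_n$, by contrast, is routine once the substitution $\tilde w^s=\epsilon^s{\rm e}^{(n-1-s)z_2}v^s$ and the conjugation identity are in place.
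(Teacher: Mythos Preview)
Your argument is correct and reaches the same formula, but the route differs from the paper's. The paper proceeds by induction on~$n$: it posits only the ansatz $\tilde w^0={\rm e}^{(n-1)z_2}\theta$, reads off $\tilde w^1={\rm e}^{(n-2)z_2}\tilde\theta$ with $\tilde\theta=\epsilon\mathrm L_{n-1}\theta$ from the first equation of~$\mathcal H_n$, observes that the remaining $n-1$ equations form a shifted copy of~$\mathcal H_{n-1}$, and invokes the induction hypothesis. You instead make the global substitution $\tilde w^s=\epsilon^s{\rm e}^{(n-1-s)z_2}v^s$ and use the conjugation identity to collapse~$\mathcal H_n$ at once to the constant-coefficient triangular system $(\p_1-(\p_2+c_s)^2)v^s=-v^{s+1}$, which you then solve directly. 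Your approach is arguably cleaner and more transparent, since it exposes the whole mechanism in one step rather than unwinding it recursively; the paper's induction, on the other hand, makes the Lie-symmetry interpretation (that $\mathrm L_{n-1}$ is a symmetry operator of the heat equation) more visible at each stage. For the completeness part both arguments hinge on the same fact---that $\prod_{k=1}^{n-1}\mathrm L_k$ is surjective on heat solutions---which the paper delegates to \cite[Lemma~7]{kova2023b}; your Fourier-multiplier remark is a reasonable sketch of why this holds, and you are right to flag the function-class issue as the only genuinely delicate point.
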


\begin{proof}
In fact, we should construct a particular solution of the system of $n-1$ first equations of~$\mathcal H_n$
for an arbitrary solution of the last equation of~$\mathcal H_n$.
We use the induction with respect to the parameter $n$.
The induction base $n=0$ is clear.

Let now $n>0$.
We reduce the construction of the solution to that for the system~$\mathcal H_{n-1}$.
We substitute an ansatz $\tilde w^0={\rm e}^{(n-1)z_2}\theta(z_1,z_2)$
with a solution $\theta$ of the heat equation $\theta_1=\theta_{22}$
to the first equation of the system $\mathcal H_n$ and thus derive the representation
$\tilde w^1={\rm e}^{(n-2)z_2}\tilde\theta$,
where $\tilde\theta=\epsilon(2(n-1)\p_2+(n-1)^2)\theta$.
Since $\epsilon(2(n-1)\p_2-(n-1)^2\theta\p_\theta)$ is a Lie-symmetry vector field
of the heat equation $\theta_1=\theta_{22}$,
the function $\tilde\theta$ is a solution of this equation as well.
The subsystem of the $n-1$ last equations of~$\mathcal H_n$
coincides with the system $\mathcal H_{n-1}$ up to the index shift by 1, $s'=s-1$.
By the induction hypothesis, this subsystem has the particular solution
\begin{gather*}
\tilde w^s=\epsilon^{s-1}{\rm e}^{(n-1-s)z_2}
\left(\prod_{k=n-s}^{n-2}(2k\p_2+k^2)\right)\tilde\theta(z_1,z_2),\quad
\end{gather*}
where $s$ ranges through the set $\{1,\dots,n-1\}$.
Substituting the expression of~$\tilde\theta$ in terms of~$\theta$ into this solution
and completing the range of~$s$ by zero in view of the ansatz for~$\tilde w^0$, we obtain
the required representation for particular solutions of the system $\mathcal H_n$,
where the component $\tilde w^{n-1}=\epsilon^{n-1}\big(\prod_{k=1}^{n-1}(2k\p_2+k^2)\,\big)\theta$
runs through the solution set of the linear heat equation if the function~$\theta$ does,
see \cite[Lemma~7]{kova2023b}.
\end{proof}

Therefore, the essential solutions of the equation~\eqref{eq:FineFP}
that are associated with the generalized ansatz~\eqref{eq:GenAnsatz1} take the form
\begin{gather}\label{eq:GenSolution1}
\solution
u={\rm e}^{-\frac14 t}|x|^{n-\frac12}\sum_{s=0}^{n-1}\frac{y^sx^{-s}}{s!}
\left(\prod_{k=n-s}^{n-1}(2kx\p_x+k^2)\right)\theta(t,\ln|x|),
\end{gather}
where $\theta=\theta(z_1,z_2)$ is an arbitrary solution
of the heat equation $\theta_1=\theta_{22}$.

The solutions of the form~\eqref{eq:GenSolution1} admit an interpretation in terms of the generation of new solutions
by the action of Lie-symmetry operators on Lie-invariant solutions.

\begin{lemma}\label{lem:FineFPLieSymOpsCommutators}
For any $n\in\mathbb N$, \
$\displaystyle
{\rm D}_y^{n+1}{\rm K}^n
%=\left(\prod_{k=1}^n\left({\rm D}_y{\rm K}+2\sum_{s=1}^k({\rm D}+s)\right)\right){\rm D}_y
=\left(\prod_{k=1}^n({\rm D}_y{\rm K}+2k{\rm D}_y+k^2+k)\right){\rm D}_y$.
\end{lemma}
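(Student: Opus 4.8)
The plan is to prove the identity by induction on~$n$, after reducing it to a single intertwining relation between $\mathrm D_y$ and the factors on the right-hand side. Throughout I write $\mathrm P^y=\mathrm D_y$ and set
\[
A_k:=\mathrm D_y\mathrm K+2k\mathrm D+k^2+k,\qquad k\in\mathbb N_0,
\]
so that the claim reads $\mathrm D_y^{\,n+1}\mathrm K^n=\big(\prod_{k=1}^nA_k\big)\mathrm D_y$. Using the relations recorded in Section~\ref{sec:SolutionGeneration} and Remark~\ref{rem:BasisUpsilon}, namely $\mathrm K\mathrm P^y=\mathrm D^2-\mathrm D-\hat{\mathrm P}^t$ together with $\mathrm P^y\mathrm K-\mathrm K\mathrm P^y=2\mathrm D$, I would first rewrite $\mathrm D_y\mathrm K=\mathrm D^2+\mathrm D-\hat{\mathrm P}^t$, whence
\[
A_k=(\mathrm D+k)(\mathrm D+k+1)-\hat{\mathrm P}^t .
\]
In particular $A_0=\mathrm D_y\mathrm K$, and since $\hat{\mathrm P}^t$ is central in $\Upsilon_{\mathfrak f}$ (it generates the center~$\mathrm Z$, see Section~\ref{sec:SolutionGeneration}) all the $A_k$ lie in the commutative subalgebra generated by $\mathrm D$ and $\hat{\mathrm P}^t$; hence they commute pairwise and the product $\prod_{k=1}^nA_k$ is order-independent.

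The key step will be the intertwining relation
\[
\mathrm D_y A_k=A_{k+1}\mathrm D_y,\qquad k\in\mathbb N_0 .
\]
I would deduce it from the single commutation relation $\mathrm D_y\mathrm D=(\mathrm D+1)\mathrm D_y$ (i.e.\ $\mathrm P^y\mathrm D=(\mathrm D+1)\mathrm P^y$, Remark~\ref{rem:BasisUpsilon}) together with the centrality of $\hat{\mathrm P}^t$: since $\mathrm D_y(\mathrm D+k)=(\mathrm D+k+1)\mathrm D_y$, one gets
\[
\mathrm D_y A_k=(\mathrm D+k+1)\,\mathrm D_y(\mathrm D+k+1)-\hat{\mathrm P}^t\mathrm D_y=(\mathrm D+k+1)(\mathrm D+k+2)\mathrm D_y-\hat{\mathrm P}^t\mathrm D_y=A_{k+1}\mathrm D_y .
\]
Iterating this $n$ times and using $A_0=\mathrm D_y\mathrm K$ yields, for every $n\in\mathbb N_0$,
\[
\mathrm D_y^{\,n+1}\mathrm K=\mathrm D_y^{\,n}(\mathrm D_y\mathrm K)=\mathrm D_y^{\,n}A_0=A_n\mathrm D_y^{\,n}.
\]

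With these two ingredients the main induction on~$n$ is immediate. The base case $n=1$ is precisely $\mathrm D_y^{\,2}\mathrm K=A_1\mathrm D_y$, the displayed relation above for $n=1$. For the inductive step, assuming the identity for $n-1$,
\[
\mathrm D_y^{\,n+1}\mathrm K^n=(\mathrm D_y^{\,n+1}\mathrm K)\,\mathrm K^{n-1}=A_n\,\mathrm D_y^{\,n}\mathrm K^{n-1}=A_n\Big(\prod_{k=1}^{n-1}A_k\Big)\mathrm D_y=\Big(\prod_{k=1}^{n}A_k\Big)\mathrm D_y ,
\]
where the last equality uses that $A_n$ commutes with every~$A_k$. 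This closes the induction.

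The only genuinely delicate point is the index bookkeeping: one must check that conjugation by $\mathrm D_y$ carries $\mathrm D+k$ to $\mathrm D+k+1$ while fixing $\hat{\mathrm P}^t$, so that $A_k$ is carried exactly to $A_{k+1}$ — in effect, verifying that the coefficient $2k$ and the constant $k^2+k$ in the factors are the right ones. Everything else is a mechanical two-step induction. (Alternatively, one could invoke the isomorphism $\Upsilon_{\mathfrak f}\simeq\mathfrak U(\mathrm{sl}(2,\mathbb R))$ of Corollary~\ref{cor:IsomToUnivEnv} and read the statement off the classical closed form for $e^{\,n+1}f^{\,n}$ in the universal enveloping algebra of $\mathrm{sl}(2,\mathbb R)$; the direct computation above is more self-contained and, as a by-product, proves the auxiliary relations $\mathrm D_y^{\,n+1}\mathrm K=A_n\mathrm D_y^{\,n}$ along the way.)
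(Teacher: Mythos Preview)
Your proof is correct and follows the same inductive skeleton as the paper's (suppressed) argument: both hinge on the intertwining relation $\mathrm D_yA_k=A_{k+1}\mathrm D_y$ for the factors $A_k=\mathrm D_y\mathrm K+2k\mathrm D+k^2+k$, and both peel off one $\mathrm K$ per step. The packaging differs slightly. The paper verifies the intertwining by a direct expansion of $\mathrm D_y\big(\mathrm D_y\mathrm K+2\sum_{s=1}^k(\mathrm D+s)\big)$ and then, in the inductive step, pushes a single $\mathrm D_y$ from the left through the whole product, shifting every index by one before appending $A_1$ on the right. You instead rewrite $A_k=(\mathrm D+k)(\mathrm D+k+1)-\hat{\mathrm P}^t$ via the Casimir element, which makes the intertwining a one-line consequence of $\mathrm D_y\mathrm D=(\mathrm D+1)\mathrm D_y$ and the centrality of $\hat{\mathrm P}^t$, and first isolate the auxiliary identity $\mathrm D_y^{\,n+1}\mathrm K=A_n\mathrm D_y^{\,n}$. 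A small bonus of your route is that it makes explicit that the $A_k$ lie in the commutative subalgebra generated by $\mathrm D$ and $\hat{\mathrm P}^t$ and hence commute pairwise; the paper's final equality $(\prod_{k=2}^{n+1}A_k)A_1=\prod_{k=1}^{n+1}A_k$ tacitly relies on this same fact without stating it.
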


\noprint{
\begin{proof}
We prove the formula by induction on $n$.
The base case $n=1$ directly follows from the commutation relations between involved differential operators,
$
{\rm D}_y^2{\rm K}
={\rm D}_y({\rm K}{\rm D}_y+2{\rm D})
=\left({\rm D}_y{\rm K}+2({\rm D}+1)\right){\rm D}_y
$.
Supposing the induction hypothesis for~$n$, we obtain the formula for~$n+1$.
For any $k\in\mathbb N$, we have
\begin{gather*}
{\rm D_y}\left({\rm D}_y{\rm K}+2\sum_{s=1}^{k}({\rm D}+s)\right)
={\rm D}_y^2{\rm K}+2\sum_{s=1}^{k}({\rm D}_y{\rm D}+s{\rm D}_y)
\\ \qquad{}
=\left({\rm D}_yK+2({\rm D}+1)\right){\rm D}_y+2\sum_{s=1}^{k}({\rm D}+s+1){\rm D}_y
=\left({\rm D}_y{\rm K}+2\sum_{s=1}^{k+1}({\rm D}+s)\right){\rm D_y}
\end{gather*}
We use the derived formula for proving the induction step,
\begin{align*}
{\rm D}_y^{n+2}{\rm K}^{n+1}
={\rm D}_y({\rm D}_y^{n+1}{\rm K}^n){\rm K}
&={\rm D}_y\left(\prod_{k=1}^n\left({\rm D}_y{\rm K}+2\sum_{s=1}^{k}({\rm D}+s)\right)\right){\rm D}_y{\rm K}
\\
&=\left(\prod_{k=1}^n\left({\rm D}_y{\rm K}+2\sum_{s=1}^{k+1}({\rm D}+s)\right)\right){\rm D}_y^2{\rm K}
\\
&=\left(\prod_{k=1}^n\left({\rm D}_y{\rm K}+2\sum_{s=1}^{k+1}({\rm D}+s)\right)\right)\left({\rm D}_y{\rm K}+2({\rm D}+1)\right){\rm D}_y
\\
&=\left(\prod_{k=1}^{n+1}\left({\rm D}_y{\rm K}+2\sum_{s=1}^k({\rm D}+s)\right)\right){\rm D}_y,
\end{align*}
which completes the induction.
\end{proof}
}

Lemma~\ref{lem:FineFPLieSymOpsCommutators} implies that
${\rm K}^nu$ satisfies the equation ${\rm D}_y^{n+1}u=0$
if $u$ is a solution of the equation ${\rm D}_yu=0$.
There is even a more precise assertion.

\begin{theorem}
For any $n\in\mathbb N$, the family of solutions of the form~\eqref{eq:GenSolution1}
that are invariant with respect to the generalized symmetries $({\rm D}_y^nu)\p_u$
can be expressed in terms of the iterative action by the Lie-symmetry operator~${\rm K}$
on $\mathfrak s_{1.4}^0$-invariant solutions of~\eqref{eq:FineFP},
\smash{$u={\rm K}^{n-1}\big({\rm e}^{-\frac14t}|x|^{1/2}\theta(t,\ln|x|)\big)$}.
\end{theorem}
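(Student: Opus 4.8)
The statement asserts that the generalized-invariant solutions~\eqref{eq:GenSolution1} coincide, up to linear superposition, with the solutions obtained by applying the Lie-symmetry operator~$\mathrm K$ iteratively $n-1$ times to the $\mathfrak s_{1.4}^0$-invariant solutions $u={\rm e}^{-\frac14t}|x|^{1/2}\theta(t,\ln|x|)$ from~\eqref{eq:SolFineFPHeat2}. First I would record that, by Lemma~\ref{lem:FineFPLieSymOpsCommutators}, if $u$ satisfies ${\rm D}_yu=0$ then ${\rm D}_y^{n}({\rm K}^{n-1}u)=0$; hence $v:={\rm K}^{n-1}u$ lies in the space of $({\rm D}_y^nu)\p_u$-invariant solutions, i.e.\ it is of the form~\eqref{eq:GenAnsatz1} and therefore, being a solution of~\eqref{eq:FineFP}, is a combination of essential solutions~\eqref{eq:GenSolution1} plus prolonged solutions of lower-order systems $\mathcal H_{n'}$. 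So the membership ``$\supseteq$'' direction is essentially free once one checks that the seed $u={\rm e}^{-\frac14t}|x|^{1/2}\theta(t,\ln|x|)$ indeed satisfies ${\rm D}_yu=0$ (it does, being $y$-independent) and solves~\eqref{eq:FineFP} (it does, by Lie reduction $1.4^0$).

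The substantive part is to identify $v={\rm K}^{n-1}u$ \emph{explicitly} with the right-hand side of~\eqref{eq:GenSolution1} and, in particular, to see that its top component $\tilde w^{n-1}$ is nonzero, so that $v$ is a genuinely order-$n$ essential solution rather than a prolongation of something lower. Here I would compute the action of ${\rm K}=2xy\,{\rm D}_x+y^2{\rm D}_y+x$ on a $y$-independent function: applied to $g(t,x)$ it produces $2xy\,g_x+xg = y\cdot(2x\p_x+1)g + \text{($y$-free remainder $xg$)}$. Iterating, ${\rm K}^{n-1}g$ is a polynomial in $y$ of degree $n-1$ whose coefficients are built from the operators $(2kx\p_x+\cdots)$; matching this bookkeeping against the product $\prod_{k=n-s}^{n-1}(2kx\p_x+k^2)$ appearing in~\eqref{eq:GenSolution1} is the crux. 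The cleanest route is to pass through the coordinates $(z_1,z_2)=(t,\ln|x|)$ and the rescaled unknown $\tilde w={\rm e}^{\frac14t}|x|^{-1/2}w$, in which~\eqref{eq:FineFP} restricted to the ansatz becomes the triangular heat system $\mathcal H_n$ and the operator ${\rm K}$ is conjugated to a correspondingly simple operator acting on the components; then invoke the already-proved Lemma describing the essential solutions of $\mathcal H_n$ in terms of $\prod_{k=n-s}^{n-1}(2k\p_2+k^2)\theta$. Comparing the two formulas reduces to the operator identity $x\p_x = \p_{z_2}$ together with the shift factors ${\rm e}^{(n-s-1)z_2}$ versus $x^{n-s}$, which is a direct check.

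For the reverse inclusion ``$\subseteq$'' — that \emph{every} solution of the form~\eqref{eq:GenSolution1} arises this way — I would argue that~\eqref{eq:GenSolution1} is parameterized by a single arbitrary heat solution $\theta$, and so is the family $\{{\rm K}^{n-1}({\rm e}^{-\frac14t}|x|^{1/2}\theta(t,\ln|x|))\}$; once the explicit computation above shows these two $\theta$-parameterized families have the same general member, they coincide as sets. The only genuine obstacle I anticipate is the commutator bookkeeping in iterating ${\rm K}$ on a $y$-independent seed: one must track how the ``$+x$'' term (the zeroth-order part of ${\rm K}$) interacts with the $2xy{\rm D}_x$ term across successive applications, and confirm that the resulting coefficient operators telescope exactly into the stated products $\prod_{k=n-s}^{n-1}(2kx\p_x+k^2)$. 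This is best handled by an induction on $n$ that runs in parallel with the induction already used to prove the Lemma on essential solutions of $\mathcal H_n$, so that the two inductions share their inductive step; alternatively, it follows formally from Lemma~\ref{lem:FineFPLieSymOpsCommutators} by applying both sides to the seed and using that $\hat{\mathrm P}^t$ (equivalently $\mathrm D_t$) annihilates the reduced equation, collapsing the factors $\mathrm D_y\mathrm K+2k\mathrm D_y+k^2+k$ acting on $y$-independent functions to the scalar-operator product in question.
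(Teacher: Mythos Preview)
Your plan is essentially what the paper does: it states only that the theorem ``is proved by induction with respect to the parameter~$n$, which involves cumbersome computations,'' and your main route---an induction on~$n$ running in parallel with the induction for the lemma on essential solutions of~$\mathcal H_n$, tracking the coefficient operators that arise from iterating~$\mathrm K$ on a $y$-independent seed---is exactly that. The membership direction you extract from Lemma~\ref{lem:FineFPLieSymOpsCommutators} is clean and correct.

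Two small points. First, your displayed formula for $\mathrm K g$ on a $y$-independent $g(t,x)$ has a slip: $\mathrm K g=2xy\,g_x+xg=y\cdot(2x\p_x)g+xg$, not $y\cdot(2x\p_x+1)g+xg$; the ``$+1$'' would add a spurious $yg$. This does not affect the strategy, only the bookkeeping you will actually carry out. Second, your alternative route~(b)---applying both sides of Lemma~\ref{lem:FineFPLieSymOpsCommutators} to the seed---does not give the explicit identification: since the seed is annihilated by $\mathrm D_y$, both sides of that identity vanish on it, so you recover only the (already known) membership $\mathrm D_y^{n}\mathrm K^{n-1}u=0$ rather than the product formula $\prod_{k=n-s}^{n-1}(2kx\p_x+k^2)$. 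Stick with the induction.
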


This theorem is proved by induction with respect to the parameter~$n$, which involves cumbersome computations.

The second example of generalized reductions of the equation~\eqref{eq:FineFP}
is related to its recursion operator ${\rm D}_t-\mu$,
which is associated with the Lie-symmetry vector field $-\p_t-\mu u\p_u$.
Thus, this equation admits the generalized symmetries $(({\rm D}_t-\mu)^nu)\p_u$, $n\in\mathbb N$.
We can expect a nontrivial output of applying the generalized reductions with respect to these symmetries
due to their association with Lie reduction $1.1^\mu$.
The corresponding generalized ansatz is
\begin{gather*}
u={\rm e}^{\mu t}\sum_{s=0}^{n-1}w^s(x,y)\frac{t^s}{s!}.
\end{gather*}
Substituting it into~\eqref{eq:FineFP}
and splitting the resulting equation with respect to~$t$,
we obtain the system $xw^s_y=x^2w^s_{xx}-\mu w^s-w^{s+1}$ with $w^n:=0$.
It is mapped
by the point transformation $z_1=\epsilon y$, \smash{$z_2=2|x|^{\frac12}$}, \smash{$\tilde w^s=|x|^{-\frac14}w^s$}
to the system
\begin{gather*}
\mathcal S^\mu_n\colon\quad
\tilde w^s_1=\tilde w^s_{22}-\frac{4\mu+3/4}{z_2^2}\tilde w^s-\frac4{z_2^2}\tilde w^{s+1}
\end{gather*}
with $\tilde w^n:=0$,
whose last equation is the linear $(1+1)$-dimensional heat equation with the inverse square potential
$(4\mu+3/4)z_2^{-2}$.

The Lie-symmetry vector field $-\mathcal P^t-\mu\mathcal I$
belongs to the center~$\mathfrak z$ of the algebra~$\mathfrak g^{\rm ess}$
and thus the associated Lie-symmetry operator $\hat{\mathrm P}^t-\mu$
commutes with all elements of~$\Upsilon_{\mathfrak f}$.
This is why the solutions of~\eqref{eq:FineFP} that are invariant with respect to $(({\rm D}_t-\mu)^nu)\p_u$
cannot be related to generating solutions with iterative action of Lie-symmetry operators,
cf.\ paragraph~1.1$^\mu$ in Section~\ref{sec:SolutionGeneration}.

Consider the particular case $\mu=0$ and $n=2$.
The system~$\mathcal S^0_2$ takes the form
\begin{gather*}
\tilde w^0_1-\tilde w^0_{22}+\frac3{4z_2^2}\tilde w^0=-\frac4{z_2^2}\tilde w^1,\quad
\tilde w^1_1-\tilde w^1_{22}+\frac3{4z_2^2}\tilde w^1=0.
\end{gather*}
We consider the first equation of~$\mathcal S^0_2$ as an inhomogeneous linear equation
with respect to~$w^0$ and look for its particular solution in the form
$\tilde w^0=-2z_2^{-1}v$, where the new unknown function
$v=v(z_1,z_2)$ satisfies the equation $v_1-v_{22}-\frac14z_2^{-2}v=0$.
Substituting this ansatz to the first equation of~$\mathcal S^0_2$,
we derive the representation $\tilde w^1=v_2-\frac12z_2^{-1}v$,
which means that $\tilde w^1$ is the Darboux transformation
of~$v$ with respect to the solution $v=|z_2|^{1/2}$,
see \cite{matv1991A,popo2010a} for the required definitions and results.
This Darboux transformation exactly relates the equation for~$v$ and the second equation of~$\mathcal S^0_2$.
As a result, up to linearly combining with $\mathfrak s_{1.1}^0$-invariant solutions,
we construct the following family of solutions of the equation~\eqref{eq:FineFP}:
\begin{gather}\label{eq:GenSolution2}
\solution
u=|x|^{1/4}\Big(
t\vartheta^{3/4}_2(\varepsilon y,2\sqrt{|x|})-
(\tfrac14t-1)|x|^{-1/2}\vartheta^{3/4}(\varepsilon y,2\sqrt{|x|})
\Big),
\end{gather}
where $\vartheta^{3/4}=\vartheta^{3/4}(\tilde z_1,\tilde z_2)$ is an arbitrary solution of the equation
$\vartheta_1=\vartheta_{22}-\frac34\tilde z_2^{-2}\vartheta$,
and $\vartheta^{3/4}_2$ is its derivative with respect to its second argument.
Recall that wide families of solutions for the latter equation were obtained in~\cite[Section~A]{kova2023a}.

\section{Conclusion}\label{sec:Conclusion}

\looseness=1
The symmetry properties of the fine Kolmogorov backward equation~\eqref{eq:FineFP} are prominent
within both the general class~$\bar{\mathcal F}$
of $(1+2)$-dimensional ultraparabolic linear partial differential equations
and the subclass~$\mathcal F$ of such equations with the standard ultraparabolic part and power diffusivity,
see Section~\ref{sec:Introduction}.
As a Kolmogorov backward equation, it also appears in various applications.
These facts motivated us to thoroughly carry out the extended symmetry analysis of the equation~\eqref{eq:FineFP}.

The initial point of this study is the computation of the maximal Lie invariance algebra~$\mathfrak g$
of the equation~\eqref{eq:FineFP} in Section~\ref{sec:FineFPMIA}.
The algebra~$\mathfrak g$ splits over its infinite-dimensional ideal $\mathfrak g^{\rm lin}$,
which is associated with the linear superposition of solutions of~\eqref{eq:FineFP},
$\mathfrak g=\mathfrak g^{\rm ess}\lsemioplus\mathfrak g^{\rm lin}$.
The subalgebra $\mathfrak g^{\rm ess}$ is five-dimensional and reductive.
It is isomorphic to the algebra
${\rm sl}(2,\mathbb R)\oplus2A_1$ and thus to ${\rm gl}(2,\mathbb R)\oplus A_1$.
Up to the equivalence with respect to point transformation,
the above symmetry properties completely single out the equation~\eqref{eq:FineFP}
among the members of the class~$\bar{\mathcal F}$.

%\looseness=-1
In Section~\ref{sec:FineFPPointSymGroup} we have computed the point symmetry pseudogroup~$G$ of the equation~\eqref{eq:FineFP}
using the direct method reinforced by the our knowledge of transformational properties of the entire class~$\bar{\mathcal F}$.
According to~\cite[Theorem~1]{kova2023a}, the class~$\bar{\mathcal F}$ is normalized in the usual sense.
Under the identification of the pseudogroup~$G$ with the vertex group of~\eqref{eq:FineFP}
in the equivalence groupoid~$\mathcal G^\sim_{\bar{\mathcal F}}$ of~$\bar{\mathcal F}$,
this implies that the pseudogroup~$G$ is contained in the natural projection
of the equivalence group~$G^\sim_{\bar{\mathcal F}}$ of~$\bar{\mathcal F}$
to the space with the coordinates~$(t,x,y,u)$.
The application of the reinforced direct method has led to a significant simplification
of the system of determining equations for the transformation components,
which is still a coupled overdetermined nonlinear system of partial differential equations.
We have successfully found the general solution to this complicated system
and constructed a nice representation of the elements of the group~$G$ in Theorem~\ref{thm:FineFPSymGroup}.

Furthermore, after redefining the multiplication in the pseudogroup~$G$
following the approaches from \cite{kova2023a,kova2023b},
we have exhaustively studied its structure.
The pseudogroup~$G$ splits over its normal pseudosubgroup $G^{\rm lin}$ that is associated
with the linear superposition of solutions of~\eqref{eq:FineFP}, $G=G^{\rm ess}\ltimes G^{\rm lin}$,
and due to using the modified multiplication, we can choose the first component~$G^{\rm ess}$ of the splitting
to be a group, which is, moreover, a five-dimensional Lie group.
This analysis required some notions from the general theory of pseudogroups of transformations,
which we also have revisited and enhanced in Section~\ref{sec:Pseudogroups}.
The group $G^{\rm ess}$ admits the factorization $G^{\rm ess}=F\times Z$,
where $F\simeq{\rm PSL}^\pm(2,\mathbb R)$ and $Z\simeq(\mathbb R^2,+)\times\mathbb Z_2$,
which corresponds to the Lie algebra decomposition $\mathfrak g^{\rm ess}=\mathfrak f\oplus\mathfrak z$.
Another convenient factorization for the study of the structure of~$G^{\rm ess}$ is $G^{\rm ess}=H\times P$,
where $H\simeq{\rm GL}(2,\mathbb R)$ and $P\simeq(\mathbb R,+)\times\mathbb Z_2$.
In the process of constructing the decompositions of~$G^{\rm ess}$,
we used the results given in Section~\ref{sec:StructureOfSLpm}
about the structure of the two-by-two matrix groups~${\rm SL}^\pm_2(\mathbb R)$,
${\rm PSL}^\pm_2(\mathbb R)$ and~${\rm GL}_2(\mathbb R)$.

The accurate description of the structure of $G^{\rm ess}$
has allowed us to correctly classify the subalgebras of the algebra~$\mathfrak g^{\rm ess}$ in Section~\ref{sec:Subalgebras}
modulo two equivalences.
One equivalence is induced by the action of the inner automorphism group ${\rm Inn}(\mathfrak g^{\rm ess})$,
which is the same as the action of $G^{\rm ess}_{\rm id}$ on $\mathfrak g^{\rm ess}$ by the adjoint representation.
The other equivalence is defined by the action of the Lie group~$G^{\rm ess}$ on its Lie algebra~$\mathfrak g^{\rm ess}$.

The classification of $G^{\rm ess}$-inequivalent subalgebras of~$\mathfrak g^{\rm ess}$
allowed us to exhaustively describe Lie reductions of the equation~\eqref{eq:FineFP} in Section~\ref{sec:LieReductions}.
Due to the comprehensive study of hidden symmetries of~\eqref{eq:FineFP},
we have shown that it suffices to carry out only inequivalent Lie reductions of codimension one.
For each of these reductions,
the corresponding reduced equation is a linear homogeneous second-order evolution equation in two independent variables,
which has been mapped to a linear (1+1)-dimensional heat equation with a potential depending only
on the space-like independent variable.
The peculiar reductions occurred when the corresponding potential is trivial or the inverse square one
lead to constructing wide families of explicit solutions~\eqref{eq:SolFineFPHeat1}--\eqref{eq:SolFineFPHeatWithZ2-2Pot}
of~\eqref{eq:FineFP}.

Section~\ref{sec:SolutionGeneration} is devoted to generating solutions of the equation~\eqref{eq:FineFP}
using its Lie and generalized symmetries.
We have discussed various situations where the (repeated) action by Lie-symmetry operators
may or may not give new solutions, which is the first discussion of this kind in the literature.
An important role in the discussion was played by the nonobvious fact
that the operator associated with the right-hand side of~\eqref{eq:FineFP}
can be interpreted as a Casimir operator of the Levi factor
$\mathfrak f=\langle\mathcal P^y,\mathcal D,\mathcal K\rangle$ of~$\mathfrak g^{\rm ess}$.

Carrying out some particular generalized reductions of the equation~\eqref{eq:FineFP} in Section~\ref{sec:GenReductions},
we have constructed wide families of its solutions
parameterized by an arbitrary finite number of arbitrary solutions of the (1+1)-dimensional linear heat equation~\eqref{eq:GenSolution1}
or one or two arbitrary solutions of (1+1)-dimensional linear heat equations with inverse square potentials~\eqref{eq:GenSolution2}.
To the best of our knowledge, these are the first exact solutions of the equation~\eqref{eq:FineFP} in the literature,
and they can be important for testing numerical solvers for this equation.
We have treated the above results in the context of those on generating solutions.

We reasonably conjectured the structure of the algebra
of generalized symmetries of the equation~\eqref{eq:FineFP} in~\cite{kova2024b}.
Nevertheless, we expect that the proof of this conjecture is sophisticated and cumbersome
and it should be the subject of a separate study.
The classification of reduction modules of the equation~\eqref{eq:FineFP}
is also of interest, especially singular ones in view of its ultraparabolicity.
Besides, one could study potential symmetries of this equation although
up to now there are no meaningful examples of potential symmetries
of multidimensional partial differential equations (both linear and nonlinear) in the literature.

\appendix

\section{Pseudogroups}\label{sec:Pseudogroups}

The study of symmetries of differential equations is local by its nature.
A very nice structure that captures this feature and arises in many other applications
is given by local Lie transformation groups \cite[Definition~1.20]{olve1993A}.
Each of such objects comes along with a smooth action on a manifold.
However, this notion does not suffice to cover all the cases,
including those when the corresponding set of transformations cannot be parameterized by a finite number of parameters,
which is common, e.g., for the point symmetry groups of all linear and many nonlinear systems of partial differential equations.
Moreover, a local Lie transformation group is not a group (e.g., as the collection of M\"obius transformations~\cite{laws1998a})
but rather a groupoid (in the algebraic sense).
This is why it is more convenient to use the more general notion of pseudogroup.

We first present an (enhanced) definition of pseudogroup of transformations of a topological space
which is often used in the literature \cite[Section~1.1]{rein1983A}
and can be further specified for topological spaces with additional structures (i.e., {\it local\'e}) such as manifolds,
and then modify it.
A {\it pseudogroup of transformations} on a topological space~$X$ is a set~$G$
of homeomorphisms between nonempty open subsets of~$X$ with the operation of map composition
that satisfy the following properties:
\begin{enumerate}\itemsep=1ex
\item
The identity map of~$X$ belongs to $G$.
\item
If $\Phi\in G$, then its restriction to any open nonempty subset of its domain belongs to~$G$.
\item
If $\Phi\in G$, then $\Phi^{-1}\in G$.
\item
If $\Phi_i\in G$, $\Phi_i\colon U_i\to V_i$, $i=1,2$, with $V_1\cap U_2\ne\varnothing$,
then $\Phi_2\circ\Phi_1\colon\Phi_1^{-1}(V_1\cap U_2)\to \Phi_2(V_1\cap U_2)$ belongs to $G$.
\item
Given any collection $\{U_\alpha\}$ of open sets of~$X$,
a homeomorphism from~$\cup_\alpha U_\alpha$ onto an open set of~$X$ belongs to~$G$
if and only if for every~$\alpha$, its restriction to~$U_\alpha$ belongs to~$G$.
\end{enumerate}

In view of property~4, the operation in~$G$ is partially defined,
i.e., not all pairs of elements of~$G$ can be composed.
This is why the (algebraic) concept of groupoid comes in handy.

\begin{definition}
An {\it (algebraic) groupoid} is a nonempty set~$G$ together with
a unary operation (inverse) $^{-1}\colon G\to G$
and a partial binary operation%
\footnote{%
``$*$'' is not a usual (total) binary operation since it is not necessarily defined for all pairs of elements of~$G$.}
(multiplication) $*\colon G\times G\rightharpoonup G$ satisfying the following properties:
\begin{description}\itemsep=0ex
\item[{\it Associativity}$\colon$]
If $a*b$ and $b*c$ are defined, then $(a*b)*c$ and $a*(b*c)$ are defined and equal to each other.
Moreover, $a*b$ and $(a*b)*c$ are defined if and only if $b*c$ and $a*(b*c)$ are defined as well.
\item[{\it Inverse}$\colon$]
For every $a\in G$, $a*a^{-1}$ and $a^{-1}*a$ are defined.
\item[{\it Identity}$\colon$]
If $a*b$ is defined, then $a*b*b^{-1}=a$ and $a^{-1}*a*b=b$.
\end{description}
\end{definition}

As defined above, any pseudogroup of transformations on a topological space is a groupoid,
where the map composition is considered as the multiplication.
The partially defined operation makes the study of the corresponding structure involved and challenging.
However, by modifying the above definition of a pseudogroup,
one can endow a pseudogroup with the structure of an {\it inverse monoid}~\cite{laws1998A},
which makes analyzing it more pleasurable.
The modification consists in treating homeomorphisms between open subsets of a topological space~$X$
as partially defined functions from~$X$ to~$X$,
allowing the empty function to be an element of a pseudogroup
and setting the composition of partial functions as the pseudogroup operation.

\begin{definition}\label{def:Pseudogroup}
Given a topological space~$X$,
a {\it pseudogroup~$G$ of transformations} on~$X$ is a collection
of partial homeomorphisms on~$X$, whose domains%
\footnote{%
The (natural) domain of a partial function $\Phi\colon X\rightharpoonup Y$ is the domain of $\Phi$ viewed as a function.
A \emph{partial homeomorphism} between topological spaces is a homeomorphism between their subspaces viewed as a partial function.
}
and images are open subsets of~$X$ and that satisfy the following properties:
\begin{enumerate}\itemsep=0ex
\item
The identity function of $X$ is in $G$.
\item
The restriction of any element $\Phi\in G$ to any open set contained in its domain is also an element of $G$.
\item
The composition $\Phi_1\circ\Phi_2$ of any pair of elements of $G$ is also in $G$.
\item
The inverse of an element~$\Phi$ is in $G$.
\item
Given any collection $\{U_\alpha\}$ of open sets of~$X$,
a partial homeomorphism of~$X$ with the domain~$\cup_\alpha U_\alpha$ onto an open set of~$X$ belongs to~$G$
if and only if for every~$\alpha$, its restriction to~$U_\alpha$ belongs to~$G$.
\end{enumerate}
\end{definition}

It is clear that Definition~\ref{def:Pseudogroup} can be generalized to any local\'e.

\begin{definition}(\cite[p.~6]{laws1998A})
An {\it inverse semigroup} is a nonempty set $S$ with an associative binary operation $*\colon S\times S\to S$
such that for every $a\in S$, there exists a unique $a^{-1}\in S$,
called the {\it inverse} of $a$, satisfying $a^{-1}*a*a^{-1}=a^{-1}$ and $a*a^{-1}*a=a$.
An {\it inverse monoid} $M$ is an inverse semigroup with \emph{neutral element} $e\in M$, $e*a=a*e=a$ for all $a\in M$.
\end{definition}

See more~\cite[Chapter~1]{laws1998A} for an excellent presentation of the basic theory of inverse semigroups and inverse monoids.

Properties~1, 3 and~4 of Definition~\ref{def:Pseudogroup} ensure
that any pseudogroup is an inverse monoid.
Throughout the paper, when referring to pseudogroups,
we mean them in the sense of Definition~\ref{def:Pseudogroup}.

An element~$a$ of a set with a binary operation~$*$ is called an idempotent if $a*a=a$.
One can completely characterize the set of idempotents in any pseudogroup.

\begin{lemma}
An element of a pseudogroup is an idempotent if and only if it is a restriction of the identity map.
\end{lemma}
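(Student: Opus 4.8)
The plan is to prove the two implications separately, working directly from Definition~\ref{def:Pseudogroup} and from the fact that every element of a pseudogroup is a partial homeomorphism, hence a bijection between its (open) domain and its (open) image.

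For the ``if'' direction I would take an element $\Phi\in G$ that is a restriction of the identity map, say $\Phi=\mathrm{id}_X|_U$ for some open $U\subseteq X$; note that such restrictions automatically belong to~$G$ by properties~1 and~2 of Definition~\ref{def:Pseudogroup}. I would then simply compute $\Phi\circ\Phi$ as a partial function: its domain is $\Phi^{-1}(\mathrm{dom}\,\Phi)=\{x\in U\mid\Phi(x)\in U\}=U$, since $\Phi(x)=x$ on~$U$, and on $U$ one has $\Phi(\Phi(x))=\Phi(x)$; hence $\Phi\circ\Phi=\Phi$, i.e.\ $\Phi$ is idempotent. This direction is entirely routine.

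For the ``only if'' direction, suppose $\Phi\in G$ satisfies $\Phi\circ\Phi=\Phi$, and put $U:=\mathrm{dom}\,\Phi$ and $V:=\Phi(U)$, both open. The first step is to compare domains: since $\mathrm{dom}(\Phi\circ\Phi)=\Phi^{-1}(U)=\{x\in U\mid\Phi(x)\in U\}$ must equal $\mathrm{dom}\,\Phi=U$, we obtain $V\subseteq U$. The second step is to compare values: for every $x\in U$ we have $\Phi(\Phi(x))=\Phi(x)$, and here both $x$ and $\Phi(x)$ lie in $U$ (the latter because $V\subseteq U$). Since $\Phi$ is injective on $U$ — being a homeomorphism of $U$ onto $V$ — cancelling $\Phi$ gives $\Phi(x)=x$ for all $x\in U$. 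Thus $\Phi=\mathrm{id}_X|_U$ with $U$ open, i.e.\ $\Phi$ is a restriction of the identity map.

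I do not anticipate a genuine obstacle; the only points that require care are the bookkeeping of domains under composition of partial functions — the domain of $\Phi_1\circ\Phi_2$ being $\Phi_2^{-1}(\mathrm{dom}\,\Phi_1)$ rather than $\mathrm{dom}\,\Phi_2$ — and the observation that the injectivity of $\Phi$ may be invoked only after verifying that both points being compared lie in $\mathrm{dom}\,\Phi$, which is precisely why establishing $V\subseteq U$ first is essential. The degenerate case $\Phi=\varnothing$ is consistent with the statement, being the restriction of the identity to the empty open set and trivially idempotent.
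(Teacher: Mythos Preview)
Your proof is correct and follows essentially the same route as the paper's: the paper writes $\Phi^2=\Phi\circ{\rm id}_{\mathop{\rm dom}\Phi}$ and cancels $\Phi$ on the left, which is exactly your injectivity argument compressed into one line. Your version is more explicit about the domain bookkeeping (in particular the step $V\subseteq U$) that justifies this cancellation for partial maps, but the underlying idea is identical.
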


\begin{proof}
Given a pseudogroup~$G$ of transformations on~$X$, every ${\rm id}_U$, where $U$ is an open subset of~$X$, is an idempotent.
Conversely, consider an idempotent $\Phi\in G$, $\Phi^2=\Phi$.
Therefore, ${\Phi^2=\Phi\circ{\rm id}_{\mathop{\rm dom}\Phi}}$, and thus $\Phi={\rm id}_{\mathop{\rm dom}\Phi}$.
\end{proof}

The required notions are introduced by analogy with their counterparts for groups
with modification caused by the fact that any pseudogroup is rather an inverse monoid.

A subset of a pseudogroup~$G$ that itself is a pseudogroup of transformations between open sets of a topological space~$X$
is called a \emph{pseudosubgroup} of~$G$.
Given any pseudogroup~$G$, the set of its idempotents,
$G^{\rm id}:=\{{\rm id}_U^{}\mid U\text{ is an open subset of}~X\}$, and the entire~$G$ itself
are pseudosubgroups of~$G$ called the \emph{improper pseudosubgroups} of~$G$.
They are the minimal and the maximal pseudosubgroups of~$G$ with respect to the set inclusion,
and the other pseudosubgroups of~$G$ are called \emph{proper}.
If $G=G^{\rm id}$, then there are no proper pseudosubgroups in~$G$.

Given a subgroup~$H$ of~$G$,
we call a pseudosubgroup~$N$ of~$G$ \emph{invariant with respect to~$H$} if
$\Phi N\Phi^{-1}\subseteq N$ for any $\Phi\in H$,
i.e., the pseudosubgroup~$N$ is stable under the conjugation by any element~$\Phi$ of~$H$.
It is obvious that then $\Phi N\Phi^{-1}=N$ or, equivalently, $\Phi N=N\Phi$ for any $\Phi\in H$.

If $H$ is a subgroup of~$G$,
$N$ is a pseudosubgroup of~$G$ that is invariant with respect to~$H$
$G=NH$ and $N\cap H=\{{\rm id}_X\}$,
then we say that $G$ is the \emph{semidirect product} of $H$ acting on~$N$,
writing $G=H\ltimes N$ or $G=N\rtimes H$, or that $G$ \emph{splits over}~$N$,
or even that $G$ is the semidirect product of~$N$ and~$H$.

In a similar way, one can try to introduce the more general notions of normal pseudosubgroup
and splitting a pseudogroup over such a pseudosubgroup.
We can suggest the following definitions
although their naturality requires a thorough study.
A \emph{normal pseudosubgroup}~$N$ of~$G$ is such a pseudosubgroup of~$G$ that
$\Phi N=N\Phi$ for any $\Phi\in G$, i.e,
for any $\Phi\in G$ and for any $\Psi\in N$ we have
$\Phi\circ\Psi=\hat\Psi\circ\Phi$ and $\Psi\circ\Phi=\Phi\circ\check\Psi$
for some $\hat\Psi,\check\Psi\in N$.
If $H$ and $N$ are respectively a pseudosubgroup and a normal pseudosubgroup of~$G$,
$G=NH$ and $N\cap H=G^{\rm id}$,
then $G=HN$ as well and we say that $G$ is the \emph{semidirect product} of $H$ acting on $N$,
writing $G=N\rtimes H$ or $G=H\ltimes N$, or that $G$ \emph{splits over}~$N$,
or even that $G$ is the semidirect product of $N$ and $H$.
We can weaken these notions even more.
A pseudosubgroup~$N$ of~$G$ is called a \emph{right normal pseudosubgroup} of~$G$
if $\Phi N\subseteq N\Phi$ for any $\Phi\in G$, i.e,
for any $\Phi\in G$ and for any $\Psi\in N$ we have
$\Phi\circ\Psi=\hat\Psi\circ\Phi$ for some $\hat\Psi\in N$.
Given a pseudosubgroup~$H$ and a right normal pseudosubgroup of~$G$ with $G=NH$ and $N\cap H=G^{\rm id}$,
we say that $G$ is the \emph{right semidirect product} of $H$ acting on $N$,
writing $G=N\rtimes H$, or that $G$ \emph{right splits over}~$N$,
or even that $G$ is the right semidirect product of $N$ and $H$.
Left normal pseudosubgroups and left semidirect product are defined analogously.

\section{Two-by-two matrix groups as semidirect products}\label{sec:StructureOfSLpm}

In this section we present a proof that the group~${\rm SL}^\pm(2,\mathbb R)$ of two-by-two matrices with determinant $\pm1$
is a semidirect product of its subgroup ${\rm SL}(2,\mathbb R)$ and the subgroup isomorphic to $\mathbb Z_2$,
and as a corollary we obtain the analogous statement for the groups ${\rm PSL}^\pm(2,\mathbb R)$ and ${\rm GL}(2,\mathbb R)$.
Despite the fact that the statement looks classical, we did not find the proof in the literature.

\begin{proposition}
${\rm SL}^\pm(2,\mathbb R)\simeq{\rm SL}(2,\mathbb R)\rtimes_\varphi\mathbb Z_2$,
where $\varphi$ is the homomorphism from $\mathbb Z_2$ to the group of automorphisms of ${\rm SL}(2,\mathbb R)$, $\varphi\colon\mathbb Z_2\to{\rm Aut}({\rm SL}(2,\mathbb R))$,
whose value~$\varphi(\bar 1)$ on the generator~$\bar 1$ of $\mathbb Z_2$ is the involution
\begin{gather*}
\begin{pmatrix}
a&b\\
c&d
\end{pmatrix}
\mapsto
\begin{pmatrix}
a&-b\\
-c& d
\end{pmatrix}.
\end{gather*}
\end{proposition}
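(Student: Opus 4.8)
The plan is to realize ${\rm SL}^\pm(2,\mathbb R)$ as an internal semidirect product of ${\rm SL}(2,\mathbb R)$ with a two-element subgroup, and then simply read off the resulting conjugation action. First I would observe that the determinant restricts to a surjective homomorphism $\det\colon{\rm SL}^\pm(2,\mathbb R)\to\{\pm1\}\simeq\mathbb Z_2$ whose kernel is exactly ${\rm SL}(2,\mathbb R)$; hence ${\rm SL}(2,\mathbb R)$ is a normal subgroup of index two.

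Next I would pick an explicit complement. Setting $J:=\mathrm{diag}(1,-1)$, one has $J^2=I$ and $\det J=-1$, so $H:=\{I,J\}$ is a subgroup isomorphic to $\mathbb Z_2$ with $H\cap{\rm SL}(2,\mathbb R)=\{I\}$. To obtain ${\rm SL}^\pm(2,\mathbb R)={\rm SL}(2,\mathbb R)\,H$ it suffices to note that for any $M$ with $\det M=-1$ we have $\det(MJ)=1$, so $M=(MJ)J$ with $MJ\in{\rm SL}(2,\mathbb R)$. Invoking the standard recognition criterion for internal semidirect products (a group $G$ with a normal subgroup $N$ and a subgroup $H$ such that $N\cap H=\{e\}$ and $NH=G$ is isomorphic to $N\rtimes_\varphi H$, where $\varphi(h)$ is conjugation by $h$), I would conclude ${\rm SL}^\pm(2,\mathbb R)\simeq{\rm SL}(2,\mathbb R)\rtimes_\varphi\mathbb Z_2$, with the isomorphism sending $(A,\bar0)\mapsto A$ and $(A,\bar1)\mapsto AJ$.

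It then remains to identify $\varphi(\bar1)$, which is conjugation by $J$; since $J^{-1}=J$, a one-line matrix multiplication shows
\[
J\begin{pmatrix}a&b\\c&d\end{pmatrix}J=\begin{pmatrix}a&-b\\-c&d\end{pmatrix},
\]
which is exactly the involution in the statement. I would also remark that conjugation by the invertible matrix $J$ is an automorphism of ${\rm GL}(2,\mathbb R)$, preserves determinants, hence restricts to an automorphism of ${\rm SL}(2,\mathbb R)$, and squares to the identity because $J^2=I$; thus $\varphi$ is a genuine homomorphism $\mathbb Z_2\to{\rm Aut}({\rm SL}(2,\mathbb R))$, and the decomposition is complete.

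I do not expect a real obstacle in this argument — the statement is elementary, as the authors themselves note. The only points demanding care are bookkeeping ones: fixing the convention for the semidirect product (on which factor, and on which side, $\varphi$ acts) so that it is consistent with the explicit map $(A,\bar1)\mapsto AJ$, and checking that $\det(MJ)=1$ and the single $2\times2$ identity above are as claimed. One could alternatively bypass the recognition theorem by directly verifying that $(A,\bar k)\mapsto AJ^{k}$ is a bijective homomorphism, but the recognition-theorem route is cleaner and more transparent.
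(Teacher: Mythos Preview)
Your proof is correct and follows essentially the same approach as the paper: both use the determinant homomorphism to identify ${\rm SL}(2,\mathbb R)$ as an index-two normal subgroup, choose $J=\mathrm{diag}(1,-1)$ to generate a complementary copy of $\mathbb Z_2$, and read off the involution from conjugation by~$J$. The only cosmetic difference is that the paper phrases this via a split short exact sequence (and remarks that any section $\nu(-1)$ must have Jordan form $\mathrm{diag}(1,-1)$), whereas you invoke the internal semidirect-product recognition criterion directly.
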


\begin{proof}
Consider the group homomorphism $\det\colon{\rm SL}^\pm(2,\mathbb R)\to{\rm U}(\mathbb Z)$,
where ${\rm U}(\mathbb Z)$ denotes the group of units of the ring $\mathbb Z$, ${\rm U}(\mathbb Z)=\{1,-1\}$,
i.e., the set of all invertible elements of the ring $\mathbb Z$ with the operation given by the multiplication in $\mathbb Z$.
The group ${\rm U}(\mathbb Z)$ is isomorphic to $\mathbb Z_2$.
The following sequence is exact
\[
0 \rightarrow{\rm SL}(2,\mathbb R)\stackrel\iota\hookrightarrow
{\rm SL}^\pm(2,\mathbb R)\stackrel\det\twoheadrightarrow {\rm U}(\mathbb Z)\rightarrow0,
\]
\noprint{
\[
\begin{tikzcd}
0\arrow[r]
&{\rm SL}(2,\mathbb R)\arrow[r,hook, "\iota"]
&{\rm SL}^\pm(2,\mathbb R)\arrow[r,twoheadrightarrow,"\det"]
&{\rm U}(\mathbb Z)\arrow[r]
&0,
\end{tikzcd}
\]
}
where $\iota$ is the inclusion homomorphism.
	
We need to show that the above sequence splits.
In other words, there exists a right section of $\det$,
$\nu\colon{\rm U}(\mathbb Z)\to{\rm SL}^\pm(2,\mathbb R)$,
such that ${\rm SL}^\pm(2,\mathbb R)={\rm SL}(2,\mathbb R)\nu({\rm U}(\mathbb Z))$ and ${\det}\circ\nu={\rm id}_{\rm U(\mathbb Z)}$.
To satisfy the first condition, we should require that $\det\nu(-1)=-1$,
and since $\nu(-1)^2={\rm diag}(1,1)$, the matrix $\nu(-1)$ is diagonalizable with the eigenvalues $\pm1$,
so the Jordan normal form for $\nu(-1)$ is ${\rm diag}(1,-1)$.
	
Taking the map $\nu\colon{\rm U}(\mathbb Z)\to\rm{SL}^\pm(2,\mathbb R)$
defined on the generator of ${\rm U}(\mathbb Z)$ by $-1\mapsto{\rm diag}(1,-1)$ as the canonical homomorphism,
we get that ${\rm SL}^\pm(2,\mathbb R)$ is the semidirect product of its subgroups ${\rm SL}(2,\mathbb R)$ and $\nu({\rm U}(\mathbb Z))$,
${\rm SL}^\pm(2,\mathbb R)={\rm SL}(2,\mathbb R)\rtimes\nu({\rm U}(\mathbb Z))$.
The conjugation in ${\rm SL}(2,\mathbb R)$ by the matrix $\nu(-1)$ gives the desired involution $\varphi(\bar 1)$.
\end{proof}

\begin{corollary}\label{cor:PSL+-}
${\rm PSL}^\pm(2,\mathbb R)\simeq{\rm PSL}(2,\mathbb R)\rtimes_\varphi\mathbb Z_2$,
where $\varphi$ is a homomorphism from $\mathbb Z_2$ to the automorphism group of ${\rm PSL}(2,\mathbb R)$,
$\varphi\colon\mathbb Z_2\to{\rm Aut}({\rm PSL}(2,\mathbb R))$,
that is given (on the generator of $\mathbb Z_2$) by the involution
\begin{gather*}
\varphi(\bar 1)=\begin{pmatrix}
a&b\\
c&d
\end{pmatrix}{\rm Z(SL(2,\mathbb R))}
\mapsto
\begin{pmatrix}
a&-b\\
-c& d
\end{pmatrix}{\rm Z(SL(2,\mathbb R))}.
\end{gather*}
\end{corollary}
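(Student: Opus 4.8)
The plan is to deduce the statement directly from the preceding Proposition by passing to the quotient by the common center. Write $Z:={\rm Z}({\rm SL}(2,\mathbb R))={\rm Z}({\rm SL}^\pm(2,\mathbb R))=\{\pm{\rm diag}(1,1)\}$. By the Proposition, ${\rm SL}^\pm(2,\mathbb R)={\rm SL}(2,\mathbb R)\rtimes_\varphi N$, where $N:=\nu({\rm U}(\mathbb Z))=\{{\rm diag}(1,1),{\rm diag}(1,-1)\}\simeq\mathbb Z_2$ and $\varphi(\bar1)$ is conjugation by ${\rm diag}(1,-1)$. First I would record the three elementary facts that make the quotient well behaved: the subgroup $Z$ lies inside the normal factor ${\rm SL}(2,\mathbb R)$ (which is itself normal in ${\rm SL}^\pm(2,\mathbb R)$ as the kernel of $\det$); $Z$ is normal in the whole group ${\rm SL}^\pm(2,\mathbb R)$, being its center; and $Z$ is invariant under $\varphi(\bar1)$, since the involution $\left(\begin{smallmatrix}a&b\\c&d\end{smallmatrix}\right)\mapsto\left(\begin{smallmatrix}a&-b\\-c&d\end{smallmatrix}\right)$ fixes both $+{\rm diag}(1,1)$ and $-{\rm diag}(1,1)$. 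Moreover $N\cap Z=\{{\rm diag}(1,1)\}$.

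Next I would invoke the general algebraic lemma: if $G=M\rtimes_\varphi H$ and $K\trianglelefteq G$ with $K\subseteq M$ and $\varphi(H)K=K$, then $M/K$ is normal in $G/K$, the image of $H$ in $G/K$ is isomorphic to $H$ (because $K\cap H\subseteq M\cap H=\{e\}$), and $G/K\simeq(M/K)\rtimes_{\bar\varphi}H$, where $\bar\varphi\colon H\to{\rm Aut}(M/K)$ is the homomorphism induced by $\varphi$ (this is where the $\varphi(H)$-invariance of $K$ enters). Applying this with $G={\rm SL}^\pm(2,\mathbb R)$, $M={\rm SL}(2,\mathbb R)$, $H=N$ and $K=Z$ yields
\[
{\rm PSL}^\pm(2,\mathbb R)={\rm SL}^\pm(2,\mathbb R)/Z\simeq\big({\rm SL}(2,\mathbb R)/Z\big)\rtimes_{\bar\varphi}\mathbb Z_2={\rm PSL}(2,\mathbb R)\rtimes_{\bar\varphi}\mathbb Z_2.
\]

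Finally I would identify the induced action. The generator $\bar1$ acts on ${\rm PSL}(2,\mathbb R)$ by the automorphism induced by conjugation by ${\rm diag}(1,-1)$ on ${\rm SL}(2,\mathbb R)$; the one-line computation ${\rm diag}(1,-1)\left(\begin{smallmatrix}a&b\\c&d\end{smallmatrix}\right){\rm diag}(1,-1)=\left(\begin{smallmatrix}a&-b\\-c&d\end{smallmatrix}\right)$ shows that, after reduction modulo $Z$, $\bar\varphi(\bar1)$ is precisely the involution displayed in the statement of the Corollary. I do not expect a genuine obstacle; the only points needing a little care are the bookkeeping in the quotient-of-semidirect-product lemma, namely that $\bar\varphi$ is well defined and that in ${\rm PSL}^\pm(2,\mathbb R)$ the subgroup ${\rm PSL}(2,\mathbb R)$ and the image of $N$ still intersect trivially and together generate the whole group.
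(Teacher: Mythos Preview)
Your argument is correct and is precisely the intended deduction: the paper states this result as a corollary of the preceding Proposition without giving any proof, and your passage to the quotient by the common center $Z=\{\pm{\rm diag}(1,1)\}$ together with the standard quotient-of-a-semidirect-product lemma is exactly how one fills in the details. There is nothing to compare, since you have simply supplied the argument the paper omits.
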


\begin{corollary}
${\rm GL}(2,\mathbb R)\simeq{\rm GL}^+(2,\mathbb R)\rtimes_\varphi\mathbb Z_2$,
where $\varphi$ is a homomorphism from $\mathbb Z_2$ to the automorphism group of ${\rm GL}^+(2,\mathbb R)$,
$\varphi\colon\mathbb Z_2\to{\rm Aut}({\rm GL}^+(2,\mathbb R))$,
that is given (on the generator of $\mathbb Z_2$) by the involution
\begin{gather*}
\varphi(\bar 1)=\begin{pmatrix}
a&b\\
c&d
\end{pmatrix}
\mapsto
\begin{pmatrix}
a&-b\\
-c& d
\end{pmatrix}.
\end{gather*}
\end{corollary}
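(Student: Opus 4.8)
The plan is to follow the proof of the Proposition above essentially verbatim, only replacing the group homomorphism used there. Consider the composite $\sgn\circ\det\colon{\rm GL}(2,\mathbb R)\to{\rm U}(\mathbb Z)$, where ${\rm U}(\mathbb Z)=\{1,-1\}\simeq\mathbb Z_2$ is the group of units of the ring~$\mathbb Z$. This is a surjective group homomorphism whose kernel is precisely the subgroup~${\rm GL}^+(2,\mathbb R)$ of matrices with positive determinant; hence ${\rm GL}^+(2,\mathbb R)$ is automatically normal in ${\rm GL}(2,\mathbb R)$ and the sequence
\[
0\rightarrow{\rm GL}^+(2,\mathbb R)\stackrel\iota\hookrightarrow{\rm GL}(2,\mathbb R)\stackrel{\sgn\circ\det}\twoheadrightarrow{\rm U}(\mathbb Z)\rightarrow0
\]
is exact, where $\iota$ denotes the inclusion homomorphism.

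First I would show that this sequence splits. The map $\nu\colon{\rm U}(\mathbb Z)\to{\rm GL}(2,\mathbb R)$ defined on the generator by $-1\mapsto{\rm diag}(1,-1)$ is a group homomorphism, since ${\rm diag}(1,-1)^2={\rm diag}(1,1)$, and it is a right section of $\sgn\circ\det$ because $\det{\rm diag}(1,-1)=-1$. Consequently $\nu({\rm U}(\mathbb Z))$ is a subgroup isomorphic to~$\mathbb Z_2$ meeting ${\rm GL}^+(2,\mathbb R)$ trivially and generating ${\rm GL}(2,\mathbb R)$ together with it, so ${\rm GL}(2,\mathbb R)={\rm GL}^+(2,\mathbb R)\rtimes\nu({\rm U}(\mathbb Z))$. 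It then remains to identify the conjugation action: for any $\begin{pmatrix}a&b\\c&d\end{pmatrix}\in{\rm GL}^+(2,\mathbb R)$ one has ${\rm diag}(1,-1)\begin{pmatrix}a&b\\c&d\end{pmatrix}{\rm diag}(1,-1)=\begin{pmatrix}a&-b\\-c&d\end{pmatrix}$, which is exactly the involution~$\varphi(\bar 1)$ in the statement. Identifying ${\rm U}(\mathbb Z)$ with~$\mathbb Z_2$ via $-1\leftrightarrow\bar 1$ then yields the claimed decomposition ${\rm GL}(2,\mathbb R)\simeq{\rm GL}^+(2,\mathbb R)\rtimes_\varphi\mathbb Z_2$.

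Since every step is a line-by-line transcription of the argument for~${\rm SL}^\pm(2,\mathbb R)$ — with the pleasant simplification that ${\rm GL}^+(2,\mathbb R)$ now arises as an honest kernel, so normality need not be argued separately — there is no genuine obstacle here; the only thing to verify is the elementary two-by-two matrix identity displayed above. Alternatively, one could deduce the corollary from the Proposition combined with the decomposition ${\rm GL}^+(2,\mathbb R)\simeq{\rm SL}(2,\mathbb R)\times(\mathbb R_{>0},\cdot\,)$, but the direct splitting argument is shorter and more transparent.
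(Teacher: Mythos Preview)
Your proposal is correct and follows precisely the template of the Proposition's proof, which is exactly what the paper intends: the paper states this result as a corollary with no separate proof, so the implied argument is the verbatim transcription you give, with $\det$ replaced by $\sgn\circ\det$ and the same section $\nu(-1)={\rm diag}(1,-1)$.
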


\section*{Acknowledgements}
%\medskip\par\noindent{\bf Acknowledgements.}
The authors are grateful to Alexander Bihlo and Vyacheslav Boyko for valuable discussions.
This research was undertaken thanks to funding from the Canada Research Chairs program and the NSERC Discovery Grant program.
It was also supported in part by the Ministry of Education, Youth and Sports of the Czech Republic (M\v SMT \v CR)
under RVO funding for I\v C47813059.
ROP expresses his gratitude for the hospitality shown by the University of Vienna during his long stay at the university.
The authors express their deepest thanks to the Armed Forces of Ukraine and the civil Ukrainian people
for their bravery and courage in defense of peace and freedom in Europe and in the entire world from russism.

\footnotesize

\end{document}